\documentclass[conference,final]{IEEEtran}
\IEEEoverridecommandlockouts

\usepackage{cite}
\usepackage{amsmath,amssymb,amsfonts}
\usepackage{enumerate}
\usepackage{amsthm}
\usepackage[linesnumbered]{algorithm2e}
\usepackage{graphicx}
\usepackage{textcomp}
\usepackage{thmtools}
\usepackage{complexity}
\usepackage{xcolor}
\usepackage{environ}
\usepackage{mathcommand}
\usepackage{tikz}
\usepackage{hyperref}
\usepackage[capitalize]{cleveref}
\usepackage[imakeidx, quotation, notion, scope,silent]{knowledge}

\def\BibTeX{{\rm B\kern-.05em{\sc i\kern-.025em b}\kern-.08em
    T\kern-.1667em\lower.7ex\hbox{E}\kern-.125emX}}

\hypersetup{hypertexnames=false}

\newtheorem{theorem}{Theorem}[section]
\newtheorem{lemma}[theorem]{Lemma}
\newtheorem{corollary}[theorem]{Corollary}
\theoremstyle{definition}
\newtheorem{example}[theorem]{Example}
\newtheorem{definition}[theorem]{Definition}

\newcommand{\GspanFP}[3]
    {\langle #1 \rangle _{#2}(#3)}
\newcommand{\casesland}{
\tikz[overlay,baseline]{%
    \path [fill=white,draw=black] (0.59ex, 0.59ex) circle [radius=0.8ex];
    \node () at (0.59ex, 0.59ex) {\scalebox{0.6}{$\land$}};
    }%
}

\newcommand{\caseslor}{
\tikz[overlay,baseline]{%
    \path [fill=white,draw=black] (0.59ex, 0.59ex) circle [radius=0.8ex];
    \node () at (0.59ex, 0.59ex) {\scalebox{0.6}{$\lor$}};
    }%
}

\NewEnviron{landcases}{
  \rlap{$\begin{cases}
    \BODY
  \end{cases}$}
  {\hspace{0.25ex}\casesland}
  \phantom{
    \begin{cases}
      \BODY
    \end{cases}
  }
}

\NewEnviron{lorcases}{
    \rlap{$\begin{cases}
      \BODY
    \end{cases}$}
    {\hspace{0.25ex}\caseslor}
    \phantom{
      \begin{cases}
        \BODY
      \end{cases}
    }
  }

\newcommand{\Aut}{\mathrm{Aut}}
\newcommand{\range}[1]{[#1]}
\renewcommand{\Mod}{\mathrm{Mod}}
\newcommand{\Id}{\mathrm{Id}}
\newcommand{\FPC}{\mathsf{FPC}}
\newcommand{\CPT}{\mathsf{CPT}}
\newcommand{\enc}{\mathrm{enc}}
\newcommand{\type}{\mathrm{type}}
\newcommand{\CFI}{\mathsf{CFI}}
\newcommand{\rk}{\mathsf{rk}}
\newcommand{\rank}{\mathsf{rank}}
\newcommand{\ord}{\mathsf{ord}}
\newcommand{\freeVar}{\mathrm{free}}
\newcommand{\Sym}{\mathrm{Sym}}
\newcommand{\numb}{\mathrm{number}}
\newcommand{\element}{\mathrm{element}}
\newcommand{\graph}{\mathrm{graph}}
\newcommand{\perm}{\mathrm{perm}}
\newcommand{\oA}{A^\le}
\newcommand{\stroA}{A^<}
\newcommand{\tq}{\ensuremath{\ |\ }}
\newcommand{\normle}{\trianglelefteq}
\newcommand{\im}{\mathrm{im}}

\title{Group Order Logic}

\author{
    \IEEEauthorblockN{Anatole Dahan}
    \IEEEauthorblockA{University of Cambridge\\
    Université Paris-Cité\\
    Inria, ENS-Paris}
\thanks{Funded in part by UK Research and Innovation (UKRI) under the UK government’s Horizon Europe funding guarantee: grant number EP/X028259/1.}
\thanks{Funded in part by ANR - project
QUID}
\thanks{Funded in part by ANR - project $\delta$iﬀerence}
}
\begin{document}
\maketitle

\begin{abstract}
    We introduce an extension of fixed-point logic ($\FP$) with a group-order operator ($\ord$), that computes the size of a group generated by a definable set of permutations.
    This operation is a generalization of the rank operator ($\rk$).
    We show that $\FP + \ord$ constitutes a new candidate logic for the class of polynomial-time computable queries ($\P$).
    As was the case for $\FP + \rk$, the model-checking of $\FP+\ord$ formulae is polynomial-time computable. Moreover, the query separating $\FP+\rk$ from $\P$ exhibited by Lichter in his recent breakthrough is definable in $\FP+\ord$.
    Precisely, we show that $\FP + \ord$ canonizes structures with Abelian colors, a class of structures which contains Lichter's counter-example.
    This proof involves expressing a fragment of the group-theoretic approach to graph canonization in the logic $\FP+\ord$.
\end{abstract}

\begin{IEEEkeywords}
Descriptive Complexity,
Logic for P,
Finite Model Theory,
Computational Group Theory,
Fixed-point logic,
Schreier-Sims algorithm	
\end{IEEEkeywords}

\section{Introduction}
The quest to identify a logic that precisely characterizes the class of problems solvable in polynomial time ($\P$) is a central challenge in descriptive complexity. 
This question can be traced back to~\cite{chandraStructureComplexityRelational1982}, and its modern formulation was stated by Gurevich~\cite{gurevichLogicChallengeComputer1988}. 
While fixed-point logic ($\FP$) captures $\P$ on ordered structures, no logic is currently known to capture $\P$ in the general case. 
$\FP$ and its natural extensions, such as fixed-point logic with counting ($\FPC$), fail to capture $\P$~\cite{caiOptimalLowerBound1992}. 
This limitation of $\FPC$ was demonstrated using the $\CFI$-construction, a class of structures encoding the satisfiability of systems of equations over the finite field $\mathbb{F}_2$\cite{caiOptimalLowerBound1992}.

To address these limitations, extensions of $\FP$ incorporating linear-algebraic operations, such as the rank operator ($\rk$), have been proposed~\cite{dawarLogicsRankOperators2009,holmDescriptive2010,pakusaFinite2010,gradelRank2019}.
However, even $\FP + \rk$ falls short of capturing $\P$, as recently shown by Lichter~\cite{lichterSeparating2023} through a generalized class of $\CFI$-structures.

On the other hand, a lot of work has been devoted to \emph{partial} capture results, showing that on restricted classes of structures, extensions of $\FP$ are able to define all $\P$ queries. 
For instance, Grohe showed that $\FPC$ captures $\P$ on any class of structures which excludes a minor\cite{groheDescriptiveComplexityCanonisation2017}.
Those results usually rely on the definition within the logic at hand of a canonization of the structures under consideration. Indeed, for any logic extending $\FP$, the Immerman-Vardi theorem implies that the definability of a canonization on a class of structures yields the capture of $\P$ on that class.
This motivates the study of canonization algorithms, and their definability in  candidate logics for $\P$.

Parallel to this investigation, significant progress has been made in the development of efficient algorithms for graph isomorphism and canonization through a group-theoretic approach. This line of research has yielded polynomial-time isomorphism and canonization algorithms for various classes of structures~\cite{babai_monte-carlo_1979,luksIsomorphismGraphsBounded1982,babaiCanonical1983}, as well as  Babai's recent breakthrough that general graph isomorphism is solvable in quasi-polynomial time~\cite{babaiGraphIsomorphismQuasipolynomial2016}.
Notably, an early result in this area demonstrates the polynomial-time canonization of $\CFI$-structures. This result generalizes seamlessly to the broader classes of $\CFI$-constructions used in \cite{gradelRank2019}, or even in\cite{lichterSeparating2023} to separate $\FP + \rk$ from $\P$.
These findings underscore the potential of integrating group-theoretic operators into $\FP$ to extend its expressive power.

The most fundamental polynomial-time permutation group algorithm is probably Schreier-Sims algorithm\cite{simsComputational1970,simsComputation1971}, which enables, given a set of permutations, to compute the order, and recognize elements of the group generated by that set.
However, this procedure relies on stabilizing one by one the elements of the domain on which the permutations act. This process thus depends on an ordering of the domain of the permutation group at hand, and cannot be defined in an isomorphism-invariant way, while its output is isomorphism-invariant.

This situation is quite similar to the one which motivated the introduction of the $\rk$ operator: 
Gaussian elimination is inherently dependent on an ordering of the rows and columns of the matrix at hand, yet the rank of the matrix is not.
Note that, the fact that $\FP + \rk$ is strictly more expressive than $\FPC$ implies that $\FPC$ indeed cannot define Gaussian elimination, nor can it define the rank of a matrix by any other means.

In this article, we introduce a novel group-theoretic operator, $\ord$, which computes the order of a group generated by a definable set of permutations. Because the Schreier-Sims algorithm enables the computation of this operation in polynomial-time, whether a structure satisfies a formula in $\FP + \ord$ can be decided in polynomial-time (in the size of the structure). Thus, like $\rk$, the $\ord$ operator defines the isomorphism-invariant result of a polynomial-time algorithm whose computation inherently depends on an ordering of the structure at hand.

The $\ord$ operator fills an interesting space within the algebraic extensions of fixed-point logics that have been studied. In~\cite{dawarDefinability2013}, the authors consider various solvability quantifiers --- expressing the satisfiability of definable systems of equations over different algebraic structures, as abelian groups, fields, or commutative rings. In this work, the authors suggest a new \emph{permutation group membership} (GM) quantifier, that subsumes all the quantifiers considered in this article.
The $\ord$ operator defines this quantifier. Actually, the $\ord$ operator can be thought of as being to the GM quantifier what the $\rk$ operator is to the field solvability quantifier. This should be contrasted with the apparent absence of such a matrix rank operator in the context of rings.

Even with this operator available, simulating group-theoretic algorithms within $\FP + \ord$ presents significant challenges due to the reliance of those algorithms on an implicit ordering of the domain.
This difficulty is particularly pronounced in graph canonization. In this context, the group-theoretic approach consists in the computation of a canonical labeling coset. A labeling coset is a set of permutations of the domain of a structure which behaves \emph{almost} like a group.
Building a canonical labeling coset rather than a mere encoding of the canonical structure enables exploiting the structure of underlying permutation groups, but depends on the existence of an ordering of the domain. Indeed, in the ordered setting, a labeling is a reordering, and thus a permutation; while in the unordered setting, it is a bijection from the domain to an initial segment of the integers, and those bijections cannot be composed.
Schweitzer et al.~\cite{schweitzerUnifying2019} provide such a definition of labeling cosets that accounts for the distinct nature of the structure's domain and its canonical numerical representation.
However, their contributions remain algorithmic and do not provide an isomorphism-invariant representation of labeling cosets.

Our main result is that $\FP + \ord$ strictly extends the expressive power of $\FP + \rk$.
Precisely, we show that the rank of a definable matrix is definable in $\FP + \ord$, and that $\FP + \ord$ captures $\P$ on the class of structures used by Lichter to separate $\FP + \rk$ from $\P$.
This result has the direct implication that $\FP + \rk$ cannot define the order of a group given by a generating set, in the same way that $\FPC <\FP + \rk$ implies that $\FPC$ cannot define the rank of a matrix. 

We obtain this canonization result by showing that, on $\CFI$-structures, $\FP + \ord$ can simulate the graph canonization algorithm defined in~\cite{babaiCanonical1983}, using an isomorphism-invariant representation of labeling cosets.
This representation of labeling cosets relies on a notion of \emph{definable group morphisms} that we will define in \cref{sec:first_properties}.

A similar approach to the canonization of $\CFI$-structures was taken in\cite{pakusaLinear2015}, where the same algorithm is simulated in the context of $\CPT$, another candidate logic for $\P$. However, the two results differ in the way labeling cosets are represented. In $\CPT$, labeling cosets are represented as systems of equations over a finite ring.
In the case of $\FP + \ord$, our representation of labeling cosets remains purely theoretic.
While this does not seem to directly allow generalization of the classes of structures canonized, this opens the door to new representation schemes for labeling cosets.

While we do not expect $\FP + \ord$ to capture $\P$, our results suggest that $\FP + \ord$ represents a meaningful advancement in the landscape of logics for polynomial-time computation. Moreover, many other operations on permutation groups are known to be polynomial-time computable, many of them playing an important role in polynomial-time Graph Isomorphism algorithms for broader classes of graphs. This first group-theoretic logic for $\P$ sets the stage to study the relationship of those different problems in an isomorphism-invariant context.

In the following section, we define the $\ord$ operator. In \cref{sec:first_properties}, we provide a set of group-theoretic operations which are definable in $\FP + \ord$, including the morphism-definability results mentioned above. \Cref{sec:rank} is devoted to the proof that $\FP + \ord$ defines the $\rk$ operator. Finally, in \cref{sec:rk_lt_ord}, we show that $\FP + \ord$ canonizes $\CFI$-structures, thus separating $\FP + \rk$ from $\FP +\ord$ (relying on Lichter's result\cite{lichterSeparating2023}).

\section{The Group Order operator}
In this section, we introduce the $\ord$ operator. We first introduce some notations and known facts concerning logic and the capture of $\P$ that will be useful throughout the article. In a second time, we introduce our representation of permutations and sets of permutations, together with the formal definition of the $\ord$ operator.
\subsection{Preliminaries}
We denote signatures by upper-case Greek letters, structures by Fraktur symbols (e.g., $\mathfrak{A}, \mathfrak{B}, \mathfrak{C}$), and their respective domains by the corresponding Roman symbols (e.g., $A, B, C$). We assume all structures to be finite, and all signatures to be relational.

Tuples of the form $(v_1, \dots, v_l)$ are denoted by $\vec{v}$. For a set $X$, we write $|X|$ to indicate its cardinality, and for a tuple $\vec{x} = (x_1, \dots, x_k)$, we write $|\vec{x}|$ to denote its length $k$. 
Given $n\in\mathbb N$, we denote $\range{n}$ the set $\{1,2,\dots,n\}$.

The \emph{graph} of a function $f : A\to B$ is the set $\{(a,b)\in A\times B\tq f(a) = b\}$, denoted $\graph(f)$. Given a function $f : Y\to Z$ and $X\subseteq Y$, we denote $f_{\restriction X}$ the restriction of $f$ to $X$.

To keep the horizontal length of formulae reasonable, we occasionally denote large disjunctions of the form $A\lor B\lor C$ as
\[\begin{lorcases} A\\ B\\ C\end{lorcases}\]
We use a similar notation for large conjunctions.

Given a logic $\mathcal{L}$ and a signature $\Sigma$, we denote by $\mathcal{L}[\Sigma]$ the set of formulae in $\mathcal{L}$ over $\Sigma$. For a formula $\varphi \in \mathcal{L}[\Sigma]$ and a $\Sigma$-structure $\mathfrak{A}$, we denote by $\varphi(\mathfrak{A})$ the set of assignments $v : \freeVar(\varphi) \to A$ such that $(\mathfrak{A}, v) \models \varphi$. By imposing an ordering on the free variables of $\varphi$, we can view $\varphi(\mathfrak{A})$ as a $|\freeVar(\varphi)|$-ary relation over $A$. 
Usually, this ordering will be explicitly specified when defining formulae. For instance, if a formula $\varphi(x, y, z)$ is defined, we order the components of $\varphi(\mathfrak{A})$, starting with the $x$-component, followed by $y$, and then $z$. The function that maps $\mathfrak A$ to $\varphi(\mathfrak A)$ is the \emph{query} defined by $\varphi$.
Given two logics $\mathcal L,\mathcal L'$, we write $\mathcal L\le \mathcal L'$ if any query definable in $\mathcal L$ is definable in $\mathcal L'$.

All logics considered in this article are extensions of $\FPC$, whose definition relies on the notion of \emph{numerical sort}. We use the definition of the numerical sort from \cite{ottoBounded2017}, that we introduce now.
Given a signature $\Sigma$ such that $(\le)\not\in\Sigma$, and a $\Sigma$-structure $\mathfrak A$, we denote $\mathfrak A^+$ the $\Sigma\sqcup \{\le\}$-structure
$(A\cup \oA, (R^\mathfrak A)_{R\in \Sigma}, \le^{\oA})$ where $\oA = \{0, 1,\dots, |A|\}$ and $\le^{\oA}$ is the natural linear order over the set of integers $\oA$. Note that $|\oA| = |A| +1$. It will be useful later to have a notation for the prefix of the natural numbers of cardinality $|A|$. We denote $\stroA$ the set $\{0,1,\dots, |A| - 1\}$.

Intuitively, $\FP$ is the extension of first-order logic with an operator enabling the computation of the inflationary fixed-point of definable second-order functions (i.e. functions mapping relations to relations). A formal definition of $\FP$ can be found in, e.g.~\cite{groheDescriptiveComplexityCanonisation2017} or \cite{ottoBounded2017}.

We can add to any $\Sigma$-structure $\mathfrak A$ its numerical domain, and consider formulae of $\FP[\Sigma\cup\{\le\}]$, evaluating them over $\mathfrak A^+$. Since any isomorphic structures $\mathfrak{A}$ and $\mathfrak{B}$ share the same numerical domain, this extension preserves the isomorphism invariance of the logic. This is the gist of $\FPC$, whose formal definition can once again be found in~\cite{groheDescriptiveComplexityCanonisation2017} or \cite{ottoBounded2017}.
Moreover, the Immerman-Vardi theorem\cite{immermanRelationalQueriesComputable1986} ensures that all $\P$ computable arithmetic functions can be defined by $\FP$ on $\oA$. 

Note that, with this definition, arithmetic functions involving integers larger than $|A|$ require the encoding of those integers as tuples of numerical values. It is easy to see that for any $k$, one can encode integers up to $|A|^k$ as $k$-tuples of numerical values. Given a tuple of numerical variables $\vec\mu$ and an integer $m \le |A|^{|\vec\mu|}$, we write $\vec\mu\gets m$ for the assignment mapping $\vec\mu$ to the unique tuple of numerical values which encodes $m$ in $\mathfrak A$.

We follow usual conventions when writing and handling formulae. In particular, we separate \emph{domain variables}, which range over $A$, composed of \emph{domain elements}, and \emph{numerical variables}, which range over $\oA$, the set of \emph{numerical elements}. This distinction constitutes the \emph{type} of a variable.

When reasonable, we keep distinct names for domain variables (for instance $x,y,z$) and numerical variables (for instance $i,j,\mu,\nu,\lambda$). 
However, for reasons to become clear in the following chapters, the strong separation of variable symbols can often bear a cost on readability. In particular, it is often convenient to consider tuples containing variables of different types.
In the same way, we usually keep distinct names for variables, and their values. When this seems to hinder readability, we may break this rule.

The \emph{type} of a tuple $\vec x$ of variables, denoted $\type(\vec x)$ is the unique word $w\in\{\element,\numb\}^*$ such that $x_i$ is a domain variable iff $w_i = \element$.
We often need to consider the set underlying all potential valuations of a tuple $\vec x$, and therefore denote $A^{\vec x}$ (or $A^{\type(\vec x)}$) the set $\prod_{i = 1}^{|\vec x|} A^{\type(\vec x)_i}$, where $A^{\element} := A$ and $A^{\numb} := \oA$.

We also allow types instead of arity in the definition of signatures. For instance, if a relation symbol $R$ has type $(\numb,\numb,\element)$, an interpretation of $R$ on $A$ is a subset of $\oA\times\oA\times A$. 
Finally, we overload this notation to relations themselves, so that if $X$ is a relation over $A$, $\type(X)$ is the unique type-word such that $X\subseteq A^{\type(X)}$.

An \emph{isomorphism} between two $\Sigma$-structures $\mathfrak A,\mathfrak B$ is a function $f : A\to B$ such that, for any relation $R\in \Sigma$ and any tuple $\vec a\in A^{\type(R)}$,
\[\vec a\in R(\mathfrak A) \iff f^*(\vec a)\in R(\mathfrak B)\]
where $f^*(\vec a)$ is the vector $\vec b$ defined by
\[ b_i := \begin{cases}
    f(a_i)&\text{if }a_i\in A\\
    a_i&\text{if }a_i\in \oA
\end{cases}\]
Given a logic $\mathcal L$ and a class $\mathcal C$ of $\Sigma$-structures, $\mathcal L$ \emph{canonizes} structures in $\mathcal C$, if there are formulae $(\varphi_R)_{R\in\Sigma}$, each with $|\type(R)|$ free numerical variables, such that, for any structure $\mathfrak A\in\mathcal C$,
\[(\stroA, (\varphi_R(\mathfrak A)_{R\in\Sigma}))\simeq \mathfrak A\]
If $\mathcal C$ consists of structures over several signatures, $\mathcal L$ canonizes $\mathcal C$ if for any signature $\Sigma$, $\mathcal L$ canonizes the class of all $\Sigma$-structures in $\mathcal C$.
This definition of canonization is quite restrictive, compared for instance to \cite[Definition 3.3.2]{groheDescriptiveComplexityCanonisation2017}. This choice is made purely to enhance clarity.

Given a logic $\mathcal L$ and a class of structures $\mathcal C$, $\mathcal L$ is said to \emph{capture} $\P$ \emph{on} $\mathcal C$ if, for any polynomial-time query $Q$ over the signature $\Sigma$, there is a formula $\varphi\in \mathcal L$ such that, for any $\Sigma$-structure $\mathfrak A\in\mathcal C$, $Q(\mathfrak A) = \varphi(\mathfrak A)$.
It is a direct consequence of the Immerman-Vardi theorem that, if $\mathcal L\ge \FP$ canonizes structures in $\mathcal C$, then $\mathcal L$ captures $\P$ on $\mathcal C$.

\subsection{Representation of sets of permutations in first-order logic}

Given a set $X$, we denote $\Sym(X)$ the group of permutations over $X$, i.e. the set of all bijections $f : X\to X$. Given $S \subseteq \Sym(X)$, we denote $\langle S\rangle$ the minimal group $G\le \Sym(X)$ which contains $S$. If $G$ is a group, we write $H\le G$ when $H$ is a subgroup of $G$ (i.e. it is a group contained in $G$). In such a case, a \emph{(left) coset} of $H$ in $G$ is a set of the form $gH := \{g\cdot h\tq h\in H\}$, for some $g\in G$. The set of cosets of $H$ in $G$ forms a partition of $G$ into $|G|/ |H|$ classes, of $|H|$ elements each.\\

As outlined in the introduction, we aim to define an operator which enables the computation of $|\langle S\rangle|$, when given a representation of $S$ as input. We first introduce such a representation of permutations and sets of permutations in first-order logic.

\begin{definition}
A formula $\varphi(\vec s,\vec t)$ \emph{defines} a permutation $\sigma\in\Sym(A^{\vec s})$ on $\mathfrak A$ if $\varphi(\mathfrak A) = \graph(\sigma)$.
\end{definition}

Conversely, given a relation $R\subseteq X\times X$ which is the graph of a permutation on $X$, we denote $\perm(R)$ this unique permutation.

Because $\FP$ does not provide a seamless way to represent sets, we represent sets of permutations as enumerations of graphs of permutations: 

\begin{definition}
A formula $\varphi(\vec p,\vec s,\vec t)$ \emph{defines $S$ binding $\vec p$ in $\mathfrak A$} if
\[ S = \{\sigma\in \Sym(A^{\vec s})\tq \exists \vec a\in A^{\vec p},  \graph(\sigma) = \varphi(\mathfrak A,\vec a)\}.\]
In such a case, we denote the group $\langle S\rangle$ as $\GspanFP{\varphi}{\vec p.\vec s.\vec t}{\mathfrak A}$.
\end{definition}
Note that we do not require that $\varphi$ defines a permutation on $\mathfrak A$ for all valuations of $\vec p$, but rather consider the set of permutations defined by $\varphi$ for \emph{some} valuation of $\vec p$. This can be seen as a way to work around the fact that, given such a formula $\varphi$, it is undecidable whether on all structures $\mathfrak A$ and for all valuations of $\vec p$, $\varphi$ defines a permutation.
Notice also that we have used dots in the definition of $\GspanFP{\varphi}{\vec p.\vec s.\vec t}{\mathfrak A}$ as to separate the three ``blocks'' of variables bound by this operator: the parameters of the enumeration, the pre-image of the permutation, and its image.
When clear from context, we may omit the two last blocks in the subscript.

There is one last obstacle to the definition of the $\ord$ operator. Recall that, given a formula $\varphi(\vec p,\vec s,\vec t)$, we expect $(\ord_{\vec p.\vec s.\vec t} \varphi)$ to represent $|\GspanFP{\varphi}{\vec p.\vec s.\vec t}{\mathfrak A}|$.
However, this value may exceed any polynomial bound on $|A|$:
\begin{example}
    Consider the formula
    \[\varphi(p_1,p_2,s,t) := \begin{lorcases}
        s = p_1  \land t = p_2\\ 
        t = p_1 \land s = p_2\\ 
        s \ne p_1\land s \ne p_2 \land s = t
    \end{lorcases}
    \]
    For any $\mathfrak A$ and $a,b\in A$, $\varphi(\mathfrak A,a,b) = \graph((a\ b))$, where $(a\ b)$ is the permutation fixing all points in $A\setminus\{a,b\}$ and mapping $a$ to $b$ and $b$ to $a$. 
    Such a permutation is called a transposition, and it is a well known fact that all finite permutations on a set can be written as a finite product of transpositions.
    
    $\varphi$ binding $p_1,p_2$ defines the set of all transposition $(a\ b)$ on the domain of the structure. Therefore, $|\GspanFP{\varphi}{p_1,p_2. s.t}{\mathfrak A}| = |\Sym(A)| = |A|!$.
\end{example}
On the other hand, this example is maximal: for any type $T$, $|\Sym(A^T)| = |A|^{|T|}!$, and as such, the binary representation of $|G|$ for any group $G\le \Sym(A^T)$ requires at most $\log(|A|^{|T|}!) \le |A|^{|T|}\log(|A|^{|T|})\le |A|^{2|T|}$. 

We are limited by the usual representation of integers in fixed-point logic as numerical values, which equates to unary encoding of integers: we can only consider integers bounded by a polynomial, while, with binary encoding, it is expected that the \emph{length} of the numbers at hand be polynomially bounded.
However, there is a straight-forward representation of the binary encoding of integers in fixed-point: as numerical relations.
That is, the number $N = \sum_{i = 1}^\ell w_i 2^{\ell - i}$ is represented as a numerical relation $R$ such that $R(x)$ holds iff $w_x = 1$.
If $\ell$ is bounded by $|A|^k$ for some constant $k$, it suffices to consider a $k$-ary relation $R$ to encode $N$.

Therefore, for any $T$, and any formula $\varphi(\vec p,\vec s,\vec t)$ with $\type(\vec s) = \type(\vec t) = T$, $N := |\GspanFP{\varphi}{\vec p. \vec s. \vec t}{\mathfrak A}|$ can be represented as a relation of type $\numb^{2|T|}$, representing the binary encoding of $N$.
This yields the following definition of the $\ord$ operator:
\begin{definition}
    \label{def:ord_operator}
Given a $\Sigma$-formula $\varphi(\vec p,\vec s,\vec t)$, with $\type(\vec s) = \type(\vec t) = T$, and a $\Sigma$-structure $\mathfrak A$,
$(\ord_{\vec p.\vec s.\vec t} \varphi)$ is a $2|T|$-ary numeric relation that encodes $|\GspanFP{\varphi}{\vec p. \vec s.\vec t}{\mathfrak A}|$, that is, for any $\vec\mu\in (\oA)^{2|T|}$,
$(\ord_{\vec p.\vec s.\vec t} \varphi)$ holds on $(\mathfrak A,\vec\mu)$ iff the $\left(\sum_{i = 1}^{2|T|} |A|^{2|T|-i}\mu_i\right)$-th bit of the binary decomposition of $|\GspanFP{\varphi}{\vec p. \vec s.\vec t}{\mathfrak A}|$ is a 1.
\end{definition}
Notice once again the use of dots in variables bound by the $\ord$ operator. We sometimes omit $\vec s$ and $\vec t$ to improve readability.

Note that, because $\FP$ captures $\P$ over ordered structures, all polynomial-time computable arithmetic properties and operations over integers in binary representation can also be defined in $\FP$ with this representation of binary integers.
\section{First properties of \texorpdfstring{$\FP + \ord$}{FP + ord}}
    \label{sec:first_properties}
We begin the study of $\FP + \ord$. This section is divided as follows: first, we present some basic facts about $\FP + \ord$. In a second subsection, we introduce the \emph{morphism formalism}, which is a key part of our labeling coset representation in \cref{sec:rk_lt_ord}. Finally, we show that $\FP + \ord$ can express any $\FP + \rk$ query.
\subsection{Model-checking, membership, union}
First, remark that the Schreier-Sims algorithm, introduced in~\cite{simsComputational1970,simsComputation1971}, precisely enables the computation in polynomial-time of the function $S\mapsto |\langle S\rangle|$. As such, it is quite easy to show that
\begin{lemma}
    For any fixed formula $\varphi\in(\FP + \ord)[\Sigma]$,
    \[\Mod(\varphi) := \{\mathfrak A\tq \mathfrak A\models \varphi\}\]
    constitutes a polynomial-time decidable class of structures
\end{lemma}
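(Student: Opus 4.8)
The plan is to prove this by structural induction on the formula $\varphi$, showing that model-checking can be carried out in polynomial time. The key observation is that $\FP + \ord$ is just $\FP$ (equivalently, fixed-point logic over the two-sorted structure $\mathfrak A^+$) augmented with one new atomic-type construct, the $\ord$ operator. Since $\FP$ model-checking over ordered/numerical structures is polynomial-time computable by the Immerman-Vardi theorem, it suffices to show that each application of $\ord$ can be evaluated in polynomial time, given the (inductively assumed) polynomial-time evaluation of its subformula.

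First I would set up the induction. The base cases (atomic formulae, including the numerical order $\le$) and the Boolean and fixed-point inductive cases are handled exactly as for $\FP$ on $\mathfrak A^+$: each such step maintains a polynomial-time-computable relation, and the inflationary fixed-point stabilizes after polynomially many iterations since the relation lives in a domain of size $|A| + 1$. The only genuinely new case is $(\ord_{\vec p.\vec s.\vec t}\,\psi)$. Here I would argue as follows. By the induction hypothesis, for every valuation of the free variables we can compute the relation $\psi(\mathfrak A)$ in polynomial time; in particular, for each $\vec a\in A^{\vec p}$ we can compute $\psi(\mathfrak A,\vec a)\subseteq A^{\vec s}\times A^{\vec t}$ in polynomial time. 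For each such $\vec a$, we check whether this relation is the graph of a permutation of $A^{\vec s}$ (a polynomial-time test), and if so we add $\perm(\psi(\mathfrak A,\vec a))$ to the generating set $S$. Since $\vec p$ ranges over $A^{\vec p}$, which has polynomially many elements, this produces a generating set $S$ of polynomial size, each element being a permutation of the polynomially-sized set $A^{\vec s}$.

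Next I would invoke the Schreier-Sims algorithm (\cite{simsComputational1970,simsComputation1971}), which, given a generating set $S\subseteq\Sym(X)$ with $|X|$ and $|S|$ polynomially bounded, computes $|\langle S\rangle|$ in time polynomial in $|X|$ and $|S|$. This yields the integer $N := |\GspanFP{\psi}{\vec p.\vec s.\vec t}{\mathfrak A}|$ in binary. Finally, since by definition $(\ord_{\vec p.\vec s.\vec t}\,\psi)$ is the $2|T|$-ary numerical relation encoding the binary digits of $N$, I would read off, for each $\vec\mu$, the appropriate bit of $N$; as noted in the excerpt, $N \le |A|^{|T|}!$, so its binary length is at most $|A|^{2|T|}$ and fits in the $2|T|$-ary relation, and computing the index $\sum_i |A|^{2|T|-i}\mu_i$ and extracting that bit is plainly polynomial-time.

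The bookkeeping obstacle worth flagging is \emph{uniformity in the free variables}: the $\ord$ operator, like the fixed-point operator, may appear nested and may have free variables other than those it binds, so strictly speaking the induction hypothesis must assert that the whole relation defined by a subformula (over all valuations of its free variables) is polynomial-time computable, not merely its value at a single valuation. Once this is stated correctly, each of the above steps is performed uniformly over the polynomially-many valuations of the remaining free variables, and the total running time remains polynomial because the formula $\varphi$ is fixed (so the depth of nesting and the number and arity of bound variable tuples are constants). I do not expect any step to present a real difficulty; the result is essentially immediate once the representation of permutations and the polynomial-time guarantee of Schreier-Sims are in place, which is why the statement is phrased as ``quite easy to show.''
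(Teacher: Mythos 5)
Your proposal is correct and matches the paper's intended argument: the paper offers no detailed proof, only the remark that Schreier--Sims computes $S\mapsto|\langle S\rangle|$ in polynomial time, from which the lemma is ``quite easy to show,'' and your structural induction (standard $\FP$ cases plus evaluation of each $\ord$ subformula by enumerating the polynomially many valuations of $\vec p$, filtering for permutation graphs, and running Schreier--Sims on the resulting generating set) is exactly the natural expansion of that remark. Your care about uniformity in the free variables and the bit-length bound $\log(|A|^{|T|}!)\le|A|^{2|T|}$ are the right details to make it rigorous.
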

This constitutes one of the two conditions for a logic to capture $\P$ in the sense of Gurevich~\cite{gurevichLogicChallengeComputer1988}. The second one is for all $\P$-queries to be definable in $\FP + \ord$. While we do not believe this to hold, it remains unknown whether that is the case.

We now turn to the definition of group-related operations in $\FP + \ord$. First, we show that the composition and inverse of permutations are definable:
\begin{lemma}
    \label{lem:perm_equalities}
For any $\mathcal L \ge \FP$, given $\mathcal L$-formulae $\varphi_{\sigma}(\vec s,\vec t)$ and $\varphi_{\tau}(\vec s,\vec t)$, there are formulae $\varphi_{\sigma\tau}(\vec s,\vec t)$ and $\varphi_{\sigma^{-1}}(\vec s,\vec t)$, such that, for any structure $\mathfrak A$ on which $\varphi_{\sigma}$ and $\varphi_{\tau}$ define permutations,
\begin{itemize}
\item $\perm(\varphi_{\sigma\tau}(\mathfrak A)) = \perm(\varphi_\sigma(\mathfrak A))\cdot\perm(\varphi_\tau(\mathfrak A))$
\item $\perm(\varphi_{\sigma^{-1}}) = \perm(\varphi_\sigma)^{-1}$
\end{itemize}
\end{lemma}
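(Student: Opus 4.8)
The plan is to directly construct the two formulas $\varphi_{\sigma\tau}$ and $\varphi_{\sigma^{-1}}$ from the given defining formulas, using the fact that permutation composition and inversion are first-order (indeed quantifier-rank bounded) operations on the graphs of the permutations, and that any $\mathcal L\ge\FP$ is closed under first-order combination of already-defined relations.

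For the inverse, the construction is immediate: if $\varphi_\sigma(\vec s,\vec t)$ defines $\sigma$, meaning $\varphi_\sigma(\mathfrak A) = \graph(\sigma) = \{(\vec a,\vec b)\tq \sigma(\vec a) = \vec b\}$, then I set
\[\varphi_{\sigma^{-1}}(\vec s,\vec t) := \varphi_\sigma(\vec t,\vec s),\]
i.e.\ I swap the two blocks of variables. Since $\sigma^{-1}(\vec b) = \vec a \iff \sigma(\vec a) = \vec b$, the relation defined by this formula is exactly $\graph(\sigma^{-1})$, so $\perm(\varphi_{\sigma^{-1}}(\mathfrak A)) = \perm(\varphi_\sigma(\mathfrak A))^{-1}$ on every structure where $\varphi_\sigma$ defines a permutation. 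Some care is needed only to rename bound variables inside $\varphi_\sigma$ to avoid clashes when substituting, but this is routine.

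For the composition, I use the standard relational-composition-of-graphs formula. Recall that $(\sigma\cdot\tau)(\vec a) = \tau(\sigma(\vec a))$ (or $\sigma(\tau(\vec a))$, depending on the paper's convention for $\cdot$; I would fix the order to match Lemma's stated equality $\perm(\varphi_\sigma)\cdot\perm(\varphi_\tau)$). Introducing a fresh intermediate block of variables $\vec u$ of the same type $T$ as $\vec s$ and $\vec t$, I set
\[\varphi_{\sigma\tau}(\vec s,\vec t) := \exists \vec u\,\bigl(\varphi_\sigma(\vec s,\vec u) \land \varphi_\tau(\vec u,\vec t)\bigr),\]
with the existential quantifier over $\vec u$ taken component-wise according to the type of each coordinate (domain components quantified over $A$, numeric components over $\oA$). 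When $\varphi_\sigma$ and $\varphi_\tau$ both define permutations, the witness $\vec u = \sigma(\vec s)$ is unique, and the formula holds iff $\vec t = \tau(\sigma(\vec s))$, so the defined relation is $\graph(\sigma\cdot\tau)$ as required.

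There is essentially no deep obstacle here; the content of the lemma is that these natural first-order constructions are legitimate formulas of $\mathcal L$ and that they behave correctly \emph{precisely on those structures where the inputs define permutations}, which is exactly what the hypothesis grants. The only points demanding attention are bookkeeping rather than mathematics: ensuring the bound variables $\vec u$ are fresh with respect to both input formulas, handling the mixed (domain/numeric) types of the coordinate blocks correctly when quantifying, and matching the composition order to the convention used in the statement. I would also note that the conclusions are only claimed on structures where $\varphi_\sigma$ and $\varphi_\tau$ define permutations; on other structures the formulas still produce some relation, but we make no claim about it, sidestepping the undecidability remark raised earlier in the paper about whether a formula defines a permutation uniformly.
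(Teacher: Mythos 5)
Your proposal is correct and is essentially identical to the paper's own proof: the inverse formula $\varphi_\sigma(\vec t,\vec s)$ is the same, and the composition uses the same existential-intermediate-block construction. The only discrepancy is that your formula as written composes in the order $\vec t = \tau(\sigma(\vec s))$, whereas the paper (using the convention $(\sigma\cdot\tau)(x)=\sigma(\tau(x))$) writes $\exists\vec u\,(\varphi_\tau(\vec s,\vec u)\land\varphi_\sigma(\vec u,\vec t))$ --- precisely the convention-matching adjustment you already flagged as necessary.
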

\begin{proof}
    \begin{align*}
    \varphi_{\sigma\tau}(\vec s,\vec t)& := \exists \vec u, \begin{landcases}
    \varphi_\tau(\vec s,\vec u)\\
    \varphi_\sigma(\vec u,\vec t)
    \end{landcases}\\
    \varphi_{\sigma^{-1}}(\vec s,\vec t) &:= \varphi_\sigma(\vec t,\vec s)\qedhere
\end{align*}
\end{proof}
Besides order computation, the Schreier-Sims algorithm
enables another fundamental operation on sets of permutations: given $S\subseteq \Sym(X)$ and $\sigma\in \Sym(X)$, decide if $\sigma\in\langle S\rangle$. We now show that this operation is also definable in $\FP + \ord$:
\begin{lemma}
    \label{lem:fp_ord_membership}
Consider a pair of $(\FP + \ord)[\Sigma]$ formulae $\varphi(\vec p,\vec s,\vec t)$ and $\psi(\vec s,\vec t)$. There is a $(\FP + \ord)[\Sigma]$-sentence $(\psi\in\langle\varphi\rangle)_{\vec p. \vec s. \vec t}$ that holds on $\mathfrak A$ iff the permutation defined by $\psi$ on $\mathfrak A$ belongs to $\GspanFP{\varphi}{\vec p.\vec s.\vec t}{\mathfrak A}$.
\end{lemma}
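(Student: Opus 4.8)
The plan is to reduce group membership to an order comparison, exploiting the elementary group-theoretic fact that for a finite group $G = \langle S\rangle$ and a permutation $\sigma$, one has $\sigma\in G$ if and only if $|\langle S\cup\{\sigma\}\rangle| = |\langle S\rangle|$. Indeed, if $\sigma\in\langle S\rangle$ then $S\cup\{\sigma\}\subseteq\langle S\rangle$, so $\langle S\cup\{\sigma\}\rangle = \langle S\rangle$ and the orders coincide; conversely $\langle S\rangle\le\langle S\cup\{\sigma\}\rangle$ always holds, so equality of (finite) orders forces $\langle S\cup\{\sigma\}\rangle = \langle S\rangle$, whence $\sigma\in\langle S\rangle$. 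Since the $\ord$ operator computes exactly such orders, it suffices to express both quantities and compare them.

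First I would build a single formula that enumerates $S\cup\{\sigma\}$ by augmenting the parameter block of $\varphi$ with one fresh numerical variable $q$, setting
\[
\varphi'(\vec p, q,\vec s,\vec t) := \bigl(q = 0 \land \psi(\vec s,\vec t)\bigr) \lor \bigl(q \ne 0 \land \varphi(\vec p,\vec s,\vec t)\bigr).
\]
For valuations with $q = 0$, the slice $\varphi'(\mathfrak A,\vec a, 0)$ equals $\graph(\sigma)$ regardless of $\vec a$, using the standing assumption that $\psi$ defines the permutation $\sigma$; for $q \ne 0$, the slices are exactly the graphs enumerated by $\varphi$ binding $\vec p$. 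Hence $\GspanFP{\varphi'}{\vec p q.\vec s.\vec t}{\mathfrak A} = \langle S\cup\{\sigma\}\rangle$. I then apply $\ord$ to both $\varphi$ and $\varphi'$: both outputs are $2|T|$-ary numeric relations encoding $|\langle S\rangle|$ and $|\langle S\cup\{\sigma\}\rangle|$ in binary, the latter respecting the same bit-length bound $|A|^{2|T|}$ of \cref{def:ord_operator} because $\langle S\cup\{\sigma\}\rangle$ is again a subgroup of $\Sym(A^T)$. The desired sentence is the bitwise equality of these two encodings,
\[
(\psi\in\langle\varphi\rangle)_{\vec p.\vec s.\vec t} := \forall\vec\mu\,\Bigl( (\ord_{\vec p q.\vec s.\vec t}\varphi')(\vec\mu) \leftrightarrow (\ord_{\vec p.\vec s.\vec t}\varphi)(\vec\mu) \Bigr),
\]
which is $\FP$-definable over the ordered numeric domain and therefore available in $\FP+\ord$; correctness follows from the group-theoretic fact above, since the encoding is a deterministic function of the encoded integer, so equal bit patterns correspond to equal orders.

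The main point requiring care is the construction of $\varphi'$: I must ensure it defines \emph{exactly} $S\cup\{\sigma\}$ and nothing spurious. The subtlety is that $\GspanFP{\cdot}{\cdot}{\cdot}$ only collects parameter valuations whose slice is a genuine permutation graph, so the hypothesis that $\psi$ defines a permutation is precisely what guarantees the $q=0$ slices contribute $\sigma$, while the $q\ne 0$ slices contribute the same set of permutations as $\varphi$ does on its own. A minor check is that at least one value $q\ne 0$ exists in $\oA$, which holds for any nonempty structure, and that both $\ord$ outputs share the type $\numb^{2|T|}$ so that the bitwise comparison is well-formed; both are immediate. The remaining ingredient — definability of equality of binary-encoded integers — is subsumed by the observation, recorded after \cref{def:ord_operator}, that all polynomial-time arithmetic predicates over binary integers are $\FP$-definable in this representation.
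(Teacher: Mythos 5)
Your proposal is correct and follows essentially the same route as the paper: both reduce membership to the fact that $\tau\in\langle S\rangle$ iff $|\langle S\rangle| = |\langle S\cup\{\tau\}\rangle|$, both realize the union $S\cup\{\tau\}$ by adding a fresh numerical parameter that selects between $\psi$ and $\varphi$ slices, and both conclude by comparing the two $\ord$ outputs bitwise (the paper's $\hat{=}$ is exactly your $\forall\vec\mu$ biconditional). The only cosmetic difference is that you route all $q\neq 0$ slices to $\varphi$ whereas the paper uses $b=1$ for $\psi$ and leaves other values empty; both yield the same set of enumerated permutations.
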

\begin{proof}
    Fix a structure $\mathfrak A$, and let $\tau := \perm(\psi(\mathfrak A))$ and $S := \{\perm(\varphi(\mathfrak A,\vec a))\tq \vec a \in A^{\vec p}\}$.
We rely on the fact that $\tau\in\langle S\rangle$ iff $|\langle S\rangle| = |\langle S\cup\{\tau\}\rangle|$. Consider
\begin{equation}
    \label{eqn:membership}
    \chi(\vec p,b,\vec s,\vec t) := \begin{lorcases}
    b = 0 \land \varphi(\vec p,\vec s,\vec t)\\
    b = 1\land \psi(\vec s,\vec t)
    \end{lorcases}
\end{equation}
$b$ is a fresh numerical variable which enables the definition of the union of $S$ and $\{\tau\}$:
for any $\vec a\in A^{\vec p}$, $\chi(\mathfrak A,\vec a, 0) = \varphi(\mathfrak A,\vec a)$, and $\chi(\mathfrak A,\vec a, 1) = \psi(\mathfrak A)$ (for any other value of $b$, $\chi$ never holds). As such, the set of permutations defined by $\chi$ binding $\vec p,b$ is exactly
$S\cup \{\tau\}$.
Therefore, $\GspanFP{\chi}{\vec p,b.\vec s.\vec t}{\mathfrak A} = \langle S\cup\{\tau\}\rangle$. Thus, the formula
\[
    (\psi\in\langle\varphi\rangle)_{\vec p. \vec s.\vec t} := (\ord_{\vec p.\vec s.\vec t} \varphi) \hat{=} (\ord_{\vec p,b.\vec s.\vec t}\chi)
\]
fulfills the conditions of the lemma. Note that, in this definition, $\hat{=}$ denotes the equality of relations, i.e.
\[ 
    R \hat{=} R' := \forall \vec x, R(\vec x)\iff R'(\vec x)
\]
which, in this context, defines the equality of integers encoded in binary. We use such operations seamlessly, relying on the fact already mentioned below \cref{def:ord_operator}, that all arithmetic polynomial-time computable operations are $\FP$ definable.
\end{proof}
Note that we have just shown the definability of the \emph{permutation group membership} quantifier from~\cite{dawarDefinability2013}, mentioned in the introduction.

Using the same technique as in the proof of \cref{eqn:membership}, we can actually construct arbitrary unions of generating sets: given two formulae $\varphi(\vec p,\vec s, \vec t)$ and $\psi(\vec q,\vec s, \vec t)$,
\[\chi(b,\vec p,\vec q,\vec s,\vec t) := \begin{lorcases}
b = 0\land \varphi(\vec p,\vec s,\vec t)\\
b = 1\land \psi(\vec q,\vec s,\vec t)
\end{lorcases}
\]
is such that $\GspanFP{\chi}{b\vec p\vec q.\vec s. \vec t}{\mathfrak A} = \langle\GspanFP{\varphi}{\vec p.\vec s. \vec t}{\mathfrak A} \cup \GspanFP{\psi}{\vec q.\vec s. \vec t}{\mathfrak A} \rangle$. This definition can be iterated for any constant number of definable generating sets. However, to define unions of $O(|A|)$ generating sets, we rely on a different representation:
\begin{lemma}
    \label{lem:union_generating_sets}
    Given a formula $\varphi(\vec\mu,\vec p,\vec s,\vec t)$, let $G_m := \GspanFP{\varphi}{\vec p.\vec s.\vec t}{\mathfrak A, \vec \mu}$, where $\vec\mu$ is the unique tuple of numerical values encoding integer $m$.
    Then, 
    $\GspanFP{\varphi}{\vec\mu,\vec p.\vec s.\vec t}{\mathfrak A} = \left\langle \bigcup_{m<|A|^{|\vec\mu|}} G_m\right\rangle$.
\end{lemma}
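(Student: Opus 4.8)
The plan is to unfold the definition of the group-order operator on both sides and show the two generating sets generate the same subgroup of $\Sym(A^{\vec s})$. Write $S$ for the set of permutations defined by $\varphi$ binding the combined block $\vec\mu,\vec p$, so that by definition $\GspanFP{\varphi}{\vec\mu,\vec p.\vec s.\vec t}{\mathfrak A} = \langle S\rangle$. The key observation is that a valuation of the block $(\vec\mu,\vec p)$ splits into a numerical part $\vec\mu$ and a domain part $\vec p$, and that ranging $\vec\mu$ over all of $(\oA)^{|\vec\mu|}$ is exactly ranging over all integers $m < |A|^{|\vec\mu|}$ via the encoding convention $\vec\mu\gets m$ introduced below the numerical-sort discussion.

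First I would make the elementary set equality $S = \bigcup_{m<|A|^{|\vec\mu|}} S_m$ precise, where $S_m := \{\perm(\varphi(\mathfrak A,\vec\mu,\vec a))\tq \vec a\in A^{\vec p}\}$ is the generating set defined by $\varphi$ binding $\vec p$ with $\vec\mu$ fixed to the encoding of $m$. Indeed, a permutation $\sigma$ lies in $S$ iff there is \emph{some} valuation $(\vec\mu,\vec a)$ with $\graph(\sigma) = \varphi(\mathfrak A,\vec\mu,\vec a)$; since every numerical valuation $\vec\mu$ encodes a unique $m<|A|^{|\vec\mu|}$ and conversely, this is equivalent to $\sigma\in S_m$ for some such $m$, i.e. $\sigma\in\bigcup_m S_m$. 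By the definition of the operator, $G_m = \langle S_m\rangle$, so $S_m\subseteq G_m$.

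The one point to argue carefully is that taking the subgroup generated commutes with the union in the sense required: I would invoke the general group-theoretic fact that for any family of subsets $(S_m)_m$, $\langle \bigcup_m S_m\rangle = \langle \bigcup_m \langle S_m\rangle\rangle$. This holds because each $\langle S_m\rangle$ is contained in the left-hand group (as $S_m$ is), giving $\supseteq$ is immediate and $\subseteq$ follows since $\bigcup_m S_m \subseteq \bigcup_m\langle S_m\rangle$. Chaining these identities gives
\[
\GspanFP{\varphi}{\vec\mu,\vec p.\vec s.\vec t}{\mathfrak A} = \langle S\rangle = \left\langle \bigcup_{m<|A|^{|\vec\mu|}} S_m\right\rangle = \left\langle \bigcup_{m<|A|^{|\vec\mu|}} G_m\right\rangle,
\]
which is exactly the claimed equality.

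I do not expect a serious obstacle here: the statement is essentially a bookkeeping lemma decoupling the numerical parameter $\vec\mu$ from the rest of the binding block, and its content is the trivial associativity of "generated subgroup" over a union. The only mild care needed is in matching the binding-block conventions of the $\GspanFP{\cdot}{\cdot}{\cdot}$ notation — specifically that binding $\vec\mu,\vec p$ simultaneously yields the union over all $\vec\mu$ of the sets obtained by binding only $\vec p$ — and in correctly using the $\vec\mu\gets m$ encoding so that the index set $\{m : m<|A|^{|\vec\mu|}\}$ matches the set of numerical valuations of $\vec\mu$ in $\mathfrak A$.
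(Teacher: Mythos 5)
Your proof is correct, and it is exactly the routine verification the paper leaves implicit: the paper states this lemma with no proof at all, treating it as immediate bookkeeping. Your two steps --- splitting the valuations of the combined binding block $(\vec\mu,\vec p)$ into the union over $m$ of the sets $S_m$ via the $\vec\mu\gets m$ encoding, and the group-theoretic identity $\langle \bigcup_m S_m\rangle = \langle \bigcup_m \langle S_m\rangle\rangle$ --- are precisely what that omitted argument amounts to.
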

With those basic tools at our disposal, we can motivate and present the morphism framework mentioned in the introduction.

\subsection{Morphism-definability}
In the previous subsection, we have seen that if a group $G$ admits a $\FP + \ord$ definable generating set, $\FP + \ord$ defines the order of $G$ (by definition of $\ord$), and the membership test on $G$ (by \cref{lem:fp_ord_membership}).

However, an important kind of group-theoretic primitives remains unexplored: operations that allow for the definition of subgroups.
The ability to construct representations of subgroups is central to many group-theoretic algorithms for graph isomorphism and canonization. Indeed, these algorithms often rely on gradually refining a large group until a generating set for the automorphism group of the graph in question has been computed.\footnote{While this process is critical to graph canonization, it also involves the problem of defining labeling cosets at each restriction step --— a topic we defer to later discussions.}

In the most general case --- given a generating set for $G$ and a membership test for $H\le G$, output a generating set for $H$ --- this operation is known to be at least as hard as Graph Isomorphism, and as such, most probably out of reach of a logic for $\P$.

On the other hand, when we additionally assume $|G|/|H|$ to be polynomially bounded, the Schreier-Sims algorithm enables a polynomial-time procedure to obtain a generating set for $H$, given a generating set for $G$ and a membership test for $H$. 
This fact is central to the polynomial-time Graph Isomorphism (and Canonization) algorithms for restricted classes of graphs introduced in~\cite{babai_monte-carlo_1979,luksIsomorphismGraphsBounded1982,babaiCanonical1983}.
Yet, this approach also appears incompatible with $\FP + \ord$ due to the need for selecting coset representatives, a process at odds with the isomorphism-invariance of $\FP + \ord$.

In this subsection, we introduce yet another restriction on $H$ which ensures that its cosets in $G$ can be represented in $\FP + \ord$: when $\FP + \ord$ can define a morphism $m : G\to K$, for some permutation group $K$, such that the \emph{kernel} of $m$ equals $H$ (all morphisms related notions are defined below).
We call such a subgroup $H$ \emph{morphism-definable} from $G$.
Such subgroups constitute a tractable scenario where cosets and intersections of subgroups can be defined within $\FP + \ord$.
However, it does not seem possible, given a morphism-definable subgroup $H$ of $G$, to define a generating set for $H$.
As such, all our results on morphism-definable subgroups rely on a shift of representation of groups: in this new context, a group is represented by a pair $(S,m)$, where $m : \langle S\rangle \to K$ for some $K$, such that $H = \{ g\in \langle S\rangle \tq m(g) = 1\}$.
For this representation of groups to be useful, we must also show that the order and membership tests of morphism-definable subgroups can be defined in $\FP + \ord$.

Recall that $H$ is a \emph{normal subgroup} of $G$, denoted $H\normle G$, if $H\le G$ and, for all $g\in G$, $gH = Hg$. A \emph{group morphism} is a function $m : G\to K$, for $G,K$ two groups, which is compatible with the operations of $G$ and $K$, i.e. $\forall g,g'\in G$,
\[ m(g\cdot g') = m(g)\cdot m(g')\]
We denote $\ker(m) := \{g\in G\tq m(g) = \Id\}$ and $\im(m) := \{m(g)\tq g\in G\}$. 
The \emph{first isomorphism theorem} states that, for any such morphism $m$,
\[|\im(m)| = \frac{|G|}{|\ker(m)|}.\]
Finally, recall that $\ker(m)\normle G$, and each normal subgroup of $G$ is realized this way: $H\normle G$ iff there is a morphism $m : G\to K$ for some group $K$ such that $\ker(m) = H$.

\begin{definition}
    \label{def:definable_morphism}
    For $T,T'$ two types, and $\varphi_m(R,\vec x,\vec y)$ a formula with $R$ a relational variable of type $T\cdot T$, and $\type(\vec x) = \type(\vec y) = T'$,
    $\varphi_m$ is said to \emph{define a morphism $m : G \to \Sym(A^{T'})$ on $\mathfrak A$}, where $G\le \Sym(A^T)$ if, for all $\sigma\in G$,
    \[\varphi_m(\mathfrak A, \graph(\sigma)) = \graph(m(\sigma))\]
    $H\normle G$ is \emph{$\mathcal L$-morphism-definable from $G$ in $\mathfrak A$} if $\mathcal L$ defines a generating set for $G$ in $\mathfrak A$, and there is a $\mathcal L$-formula $\varphi_m$ which \emph{defines a morphism} $m : G\to \Sym(A^{T'})$ on $\mathfrak A$ such that $\ker(m) = H$.
\end{definition}
In this section, we will show that, if $H\normle G$ is morphism-definable from $G$, then:
\begin{itemize}
    \item The order of $H$ is $\FP + \ord$ definable (\cref{lem:morphism_definability_repr} \emph{\ref{lem:morphism_definability_repr_ord}})
    \item Membership to $H$ is $\FP + \ord$ definable (\cref{lem:morphism_definability_repr} \emph{\ref{lem:morphism_definability_repr_memb}})
    \item Given a second subgroup $H'\normle G$ morphism-definable from $G$, their intersection $H\cap H'$ is also a morphism-definable subgroup of $G$. (\cref{corol:morph_def_grp_intersection})
\end{itemize}
The first two results ensure that morphism-definability constitutes a reasonable representation of groups, while the third is the motivation for the introduction of this new representation of groups.
That is, if we shift our representation of groups from the definability of a generating set of permutations to morphism-definability, an intersection operation on groups morphism-definable from a common group $G$ becomes definable within $\FP + \ord$.

As a first intermediate result, we show that, given a definable morphism $m : G\to K$ and a definable generating set for $G$, $\im(m)$ admits a definable generating set. 
\begin{lemma}
    \label{lem:fpc_def_image_morphism}
    Let $\mathfrak A$ be a structure. Let $\varphi(\vec p,\vec s,\vec t)$ be a formula that defines a permutation on $(\mathfrak A,\vec a)$ for all $\vec a\in A^{\vec p}$, and let $\varphi_m(R,\vec{x},\vec{y})$ be a formula defining a morphism $m : G\to \Sym(A^{\vec{x}})$, where $G := \GspanFP\varphi{\vec p.\vec s. \vec t}{\mathfrak A}$. Then, $\im(m)$ admits a definable generating set.
\end{lemma}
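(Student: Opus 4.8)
The plan is to use the elementary fact that group homomorphisms send generating sets to generating sets: if $G = \langle S\rangle$, then $\im(m) = m(G) = \langle\{m(\sigma)\tq \sigma\in S\}\rangle$. By hypothesis, $\varphi$ binding $\vec p$ defines a generating set $S = \{\perm(\varphi(\mathfrak A,\vec a))\tq \vec a\in A^{\vec p}\}$ of $G$, and each such generator lies in $G$. Hence it suffices to produce a single formula $\psi(\vec p,\vec x,\vec y)$ that, for every valuation $\vec a$ of $\vec p$, defines the permutation $m(\perm(\varphi(\mathfrak A,\vec a)))\in\Sym(A^{\vec x})$; then $\psi$ binding $\vec p$ defines the set $m(S)$, so that $\GspanFP{\psi}{\vec p.\vec x.\vec y}{\mathfrak A} = \langle m(S)\rangle = \im(m)$, which is the desired definable generating set.

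To build $\psi$, I would substitute $\varphi$ for the relational variable $R$ inside $\varphi_m$: form $\psi(\vec p,\vec x,\vec y)$ by replacing every atomic occurrence $R(\vec u,\vec v)$ in $\varphi_m$ with the instance $\varphi(\vec p,\vec u,\vec v)$ (the blocks $\vec u,\vec v$ have type $T$, matching the input and output blocks of $\varphi$). For a fixed $\vec a$, evaluating $\psi(\mathfrak A,\vec a)$ then coincides with evaluating $\varphi_m$ with $R$ interpreted as $\varphi(\mathfrak A,\vec a) = \graph(\perm(\varphi(\mathfrak A,\vec a)))$. Since $\perm(\varphi(\mathfrak A,\vec a))\in G$, the defining property of $\varphi_m$ gives
\[\psi(\mathfrak A,\vec a) = \varphi_m(\mathfrak A,\graph(\perm(\varphi(\mathfrak A,\vec a)))) = \graph\big(m(\perm(\varphi(\mathfrak A,\vec a)))\big),\]
exactly as required. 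Note that this construction uses neither $\ord$ nor any counting beyond what already appears in $\varphi$ and $\varphi_m$, so $\psi$ stays in the same logic as its inputs.

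I expect the only genuine point to verify is the legitimacy of substituting a formula for the free relational variable $R$ in $\varphi_m$. This is the standard substitution lemma for fixed-point logics, and it applies here because $R$ occurs in $\varphi_m$ purely as a free second-order parameter, never as a variable bound by a fixed-point operator; replacing each atom $R(\vec u,\vec v)$ by $\varphi(\vec p,\vec u,\vec v)$ therefore yields a formula semantically equivalent to $\varphi_m$ under the intended interpretation of $R$. The only bookkeeping is to rename the bound and fixed-point variables of $\varphi_m$ so they do not capture the free parameters $\vec p$ of $\varphi$. Once this is in place, the group-theoretic identity $\im(m) = \langle m(S)\rangle$ closes the argument immediately.
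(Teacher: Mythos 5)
Your proposal is correct and is essentially the paper's own proof: the paper likewise defines $\varphi_{\im(m)}(\vec p,\vec x,\vec y) := \varphi_m(R,\vec x,\vec y)[R(\vec s,\vec t)/\varphi(\vec p,\vec s,\vec t)]$, substituting $\varphi$ for the relational parameter $R$ (with renaming to avoid capture), and relies implicitly on the same facts you spell out --- that morphisms carry generating sets to generating sets and that the substitution realizes $\varphi_m$ evaluated at $R = \graph(\perm(\varphi(\mathfrak A,\vec a)))$. Your write-up just makes these justifications explicit.
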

\begin{proof}
\[\varphi_{\im(m)}(\vec p,\vec{x},\vec{y}) :=
\varphi_m(R, \vec{x},\vec{y})[R(\vec s,\vec t) / \varphi(\vec p,\vec s,\vec t)]
\]
that is, the formula $\varphi_m$ where all occurences of $R(\vec s,\vec t)$ (for any tuples of variables $\vec s$ and $\vec t$) are substituted by $\varphi(\vec p,\vec s, \vec t)$ with variables suitably renamed to avoid capture.
\end{proof}
Moreover, given such a $v\in \im(m)$,
\[m^{-1}(v) = \{ \sigma\in G, m(\sigma) = v\}\]
is a \emph{coset} of $\ker(m)$, equal to $\sigma \ker(m)$ for any $\sigma \in m^{-1}(v)$. 
Recall that we have argued at the beginning of this section that, unlike in the computational framework, the isomorphism-invariance of $\FP + \ord$ prohibits the representation of a coset through the choice of a witness. In the case of a subgroup $H$ defined by a morphism $m$, the unique $v$ such that $m(\sigma H) = \{v\}$ constitutes a definable representative of $\sigma H$.

\begin{lemma}
    \label{lem:morphism_definability_repr}
    Let $\varphi_G(\vec p,\vec s,\vec t),\varphi_m(R,\vec x,\vec y)$ be two formulae. There are formulae $\varphi_\in(R),\varphi_{\ord}(\vec \mu)$ such that, on any structure $\mathfrak A$ on which $\varphi_m$ defines a morphism $m$ from $G := \GspanFP{\varphi_G}{\vec p.\vec s. \vec t}{\mathfrak A}$ to $\Sym(A^{\vec x})$,
    \begin{enumerate}[(i)]
        \item\label{lem:morphism_definability_repr_ord} $\varphi_{\ord}(\mathfrak A)$ is a $2|\vec{s}|$-ary numerical predicate encoding $|\ker(m)|$
        \item\label{lem:morphism_definability_repr_memb} for any $\tau\in \Sym(A^{\vec s})$, $(\mathfrak A,\graph(\tau))\models \varphi_\in$ iff $\tau\in \ker(m)$.
    \end{enumerate}
\end{lemma}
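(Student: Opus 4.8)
The plan is to treat the two parts separately, in each case reducing to operations already available. The central difficulty, and the reason this lemma is not immediate, is that we have no generating set for $\ker(m)$ itself: a kernel is not handed to us as a $\langle\cdot\rangle$-span, so neither the $\ord$ operator nor \cref{lem:fp_ord_membership} applies to it directly. I would sidestep this by computing everything through $G$ and $\im(m)$, for both of which a definable generating set is at hand.

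For part \emph{(i)}, the key is the first isomorphism theorem, which gives $|\ker(m)| = |G|/|\im(m)|$. Both quantities on the right are accessible. The order $|G|$ is encoded in binary by the $2|\vec s|$-ary relation $(\ord_{\vec p.\vec s.\vec t}\varphi_G)$. By \cref{lem:fpc_def_image_morphism}, the group $\im(m)$ admits a definable generating set $\varphi_{\im(m)}(\vec p,\vec x,\vec y)$, so $|\im(m)|$ is encoded in binary by the $2|\vec x|$-ary relation $(\ord_{\vec p.\vec x.\vec y}\varphi_{\im(m)})$. It then remains to let $\varphi_{\ord}$ be the binary encoding of the integer quotient of these two values; since integer division of binary-encoded integers is polynomial-time computable, it is $\FP$-definable on this representation, as noted after \cref{def:ord_operator}. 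The arities are consistent: as $\ker(m)\le G\le \Sym(A^{\vec s})$, the quotient fits into the required $2|\vec s|$ numerical positions.

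For part \emph{(ii)}, I would exploit the characterization $\ker(m) = \{\tau\in G \tq m(\tau)=\Id\}$, which splits the membership test into a test of membership in $G$ and a check that $m(\tau)$ is the identity. The first is exactly \cref{lem:fp_ord_membership} applied with the relational variable $R$ playing the role of $\psi$, yielding the formula $(R\in\langle\varphi_G\rangle)_{\vec p.\vec s.\vec t}$. For the second, when $\tau\in G$ the formula $\varphi_m(R,\vec x,\vec y)$ evaluated at $R=\graph(\tau)$ defines $\graph(m(\tau))$, so $m(\tau)=\Id$ is expressed by $\forall\vec x\,\forall\vec y\,(\varphi_m(R,\vec x,\vec y)\leftrightarrow \vec x=\vec y)$. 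I would therefore set $\varphi_\in(R) := (R\in\langle\varphi_G\rangle)_{\vec p.\vec s.\vec t}\ \land\ \forall\vec x\,\forall\vec y\,(\varphi_m(R,\vec x,\vec y)\leftrightarrow \vec x=\vec y)$. Conjoining the two tests is precisely what makes this safe: if $\tau\notin G$ then $\varphi_m(\graph(\tau))$ carries no correctness guarantee, but then the first conjunct already fails, so the spurious behaviour of $\varphi_m$ outside $G$ never affects the outcome.

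The genuine content lies in part \emph{(i)}: recognising that, although $\ker(m)$ is inaccessible through a generating set, its order is nonetheless pinned down by the first isomorphism theorem from two quantities that \emph{are} accessible, namely $|G|$ and $|\im(m)|$. Once this indirection is in place, everything else is routine — the $\FP$-definability of binary arithmetic, the bookkeeping of arities, and the variable renaming needed to substitute $\varphi_G$ into $\varphi_m$ in \cref{lem:fpc_def_image_morphism}. I expect this conceptual step, rather than any calculation, to be the crux.
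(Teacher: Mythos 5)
Your proposal is correct and follows essentially the same route as the paper: part \emph{(i)} via the first isomorphism theorem, computing $|G|$ and $|\im(m)|$ (the latter through \cref{lem:fpc_def_image_morphism}) and taking their $\FP$-definable integer quotient, and part \emph{(ii)} by conjoining the membership test of \cref{lem:fp_ord_membership} with the condition that $\varphi_m(R,\cdot,\cdot)$ defines the identity. Your explicit remarks on the arity bookkeeping and on why the unreliability of $\varphi_m$ outside $G$ is harmless are sound and, if anything, slightly more careful than the paper's own writeup.
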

\begin{proof}
  Given $\tau\in \Sym(A^{\vec s})$, it is in $\ker(m)$ iff it is in $G$ and $m(\tau) = \Id$. Using \cref{lem:fp_ord_membership}, this is easily definable in $\FP + \ord$:
  \begin{align*}
    \varphi_\in(R) &:= (R(\vec s,\vec t)\in \langle \varphi_G\rangle)_{\vec p.\vec s.\vec t} \land \forall \vec x, \varphi_m(R,\vec x,\vec x)
    \intertext{To compute the order of $\ker(m)$, we use the fact that $|\ker(m)| = \frac{|G|}{|\im(m)|}$:}
    \varphi_{\ord}(\vec\mu)&:= \left(\frac{(\ord_{\vec p.\vec s.\vec t}\varphi_G)}{(\ord_{\vec p.\vec s.\vec t}\varphi_{\im(m)})}\right)(\vec \mu)
  \end{align*}
  where $(\frac P Q)$ denotes the result of the euclidean division of $P$ by $Q$, where $P$ and $Q$, and $(\frac P Q)$ are $2|\vec s|$-ary numerical relations encoding integers bounded by $2^{n^{2|\vec s|}}$. Note that, since euclidean division of integers (encoded in binary) is a polynomial-time computable arithmetic function, the Immerman-Vardi theorem ensures once again that the above expression is definable in $\FP + \ord$. 
\end{proof}

As mentioned above, the interest of morphism-definable subgroups stems from the ability to consider their intersection:
\begin{definition}
    \label{def:prod_morphisms}
For $m_1 : G\to X$ and $m_2 : G\to Y$ two group morphisms, let
\begin{align*}
    m_1\otimes m_2 : G&\to X\times Y\\
    g &\mapsto (m_1(g),m_2(g))
\end{align*}
\end{definition}
It is quite clear that $\ker(m_1\otimes m_2) = \ker(m_1)\cap \ker(m_2)$, and as such, $m_1\otimes m_2$ defines (from $G$) the intersection of the subgroups defined by $m_1$ and $m_2$ from $G$. It remains to show that, under a suitable permutation representation of direct products of groups, the $\otimes$ operation on morphisms is definable in $\FPC$.
Indeed, \cref{def:prod_morphisms} relies on direct product of groups, thus we must first provide a representation of direct products of groups in fixed-point logics. 

Intuitively, if $X,Y\le\Sym(A)$, we can represent $X\times Y$ as a subgroup of $\Sym(A\times\{0,1\})$ with
\[ (x,y)\cdot (a,i) := \begin{cases}
    (x(a),i)&\text{if }i = 0\\
    (y(a),i)&\text{if }i = 1.
\end{cases}
\]
However, in \cref{sec:rk_lt_ord}, we will need to construct the intersection of $O(|A|)$ different subgroups, and as such, we need a representation of polynomially large product of groups $\le \Sym(A^T)$, while keeping the type of tuples on which this representation acts constant. 
Although the operation at hand is different, this situation is similar to that of \cref{lem:union_generating_sets}.
In the following, we suppose that a \emph{family of groups} is \emph{defined by $\varphi_{\mathcal G}(\vec \mu,\vec p,\vec s,\vec t)$ over $\mathfrak A$ binding $\vec\mu$}, that is, there is a family of groups $(\mathcal G_{k})$ where, for $k\le |A|^{|\vec\mu|}$,  $\mathcal G_{k} := \GspanFP{\varphi_{\mathcal G}}{\vec p.\vec s.\vec t}{\mathfrak A,\vec \mu(k)}$, where $\vec\mu(k)$ is the unique tuple of numerical values encoding $k$ over $\mathfrak A$.
\begin{lemma}
    \label{lem:enc_prod_groups}
    Let $\varphi_{\mathcal G}(\vec \mu,\vec p,\vec s,\vec t)$ be a $\FP + \ord[\Sigma]$-formula defining a \emph{family} of groups binding $\vec \mu$, and let $\Omega := A^{\vec\mu}\times A^{\vec s}$.
    Then, $\prod_{\vec n\in (\oA)^{|\vec\mu|}} \mathcal G_{\vec n}$ is isomorphic to a subgroup of $\Sym(\Omega)$. Moreover, there is a $\FP + \ord$ formula $\varphi_{\Pi \mathcal G}$ that defines a generating set for this permutation group isomorphic to $\prod \mathcal G_{k}$.
\end{lemma}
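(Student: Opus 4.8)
I need to show that the direct product $\prod_{\vec n} \mathcal{G}_{\vec n}$ of a definable family of groups embeds into $\Sym(\Omega)$ with $\Omega = A^{\vec\mu} \times A^{\vec s}$, and that a generating set is definable. Let me think about this carefully.

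First, the representation. We have groups $\mathcal{G}_{\vec n} \le \Sym(A^{\vec s})$ indexed by $\vec n \in (A^\le)^{|\vec\mu|}$. The natural way to make the direct product act on $\Omega = A^{\vec\mu} \times A^{\vec s}$ is: an element of the product is a tuple $(\sigma_{\vec n})_{\vec n}$ with $\sigma_{\vec n} \in \mathcal{G}_{\vec n}$, and it acts on $(\vec n, \vec s)$ by $(\vec n, \sigma_{\vec n}(\vec s))$. So each "block" $\{\vec n\} \times A^{\vec s}$ is preserved, and the product acts independently on each block. This is clearly a faithful action giving the isomorphism to a subgroup of $\Sym(\Omega)$.

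**The generating set.** Now for definability. A generating set for a direct product $\prod_{\vec n} \mathcal{G}_{\vec n}$ is obtained by taking, for each index $\vec n$ and each generator $\sigma$ of $\mathcal{G}_{\vec n}$, the element that applies $\sigma$ in block $\vec n$ and is the identity everywhere else. So a generator is parameterized by $(\vec n, \vec p)$ — the block index plus the parameter enumerating generators of $\mathcal{G}_{\vec n}$. I'd define $\varphi_{\Pi\mathcal{G}}$ with parameter block $(\vec\mu, \vec p)$ and acting on variables of type $(\text{type of }\vec\mu)\cdot T$. The permutation: within block $\vec\mu$, apply the permutation $\varphi_{\mathcal{G}}(\vec\mu, \vec p, \cdot, \cdot)$; outside block $\vec\mu$, act as identity.

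Let me now think about what the main obstacle is.

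**The main obstacle.** The difficulty is twofold. First, for the construction to yield well-defined permutations, I need $\varphi_{\mathcal{G}}(\mathfrak{A}, \vec n, \vec a)$ to be the graph of a permutation of $A^{\vec s}$ — but the framework explicitly does NOT guarantee this for all parameter values; it only takes generators where the formula *happens* to define a permutation. So the block-diagonal construction must be careful: if for some parameter $\varphi_{\mathcal{G}}$ does not define a permutation, the corresponding generator of $\varphi_{\Pi\mathcal{G}}$ simply should not define a permutation either, and gets discarded by the $\langle\cdot\rangle$ semantics — but I must ensure the *good* parameters produce exactly the block generators, so that their span is the full product. This matches the convention of the $\langle\cdot\rangle$ operator. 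The second, subtler point is verifying that the span of these block generators is exactly the full direct product and not something smaller: this is the standard group-theoretic fact that $\prod_{\vec n}\mathcal{G}_{\vec n}$ is generated by the union of the images of the $\mathcal{G}_{\vec n}$ under the block-inclusion maps, which commute with each other since they have disjoint support. I expect the bulk of the proof to be writing down $\varphi_{\Pi\mathcal{G}}$ explicitly and checking this span equality.

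Concretely, here is my plan. I would write:
\[
\varphi_{\Pi\mathcal{G}}(\vec\mu,\vec p,\ (\vec\nu_1,\vec s),\ (\vec\nu_2,\vec t)) :=
\begin{lorcases}
\vec\nu_1 = \vec\mu \land \vec\nu_2 = \vec\mu \land \varphi_{\mathcal{G}}(\vec\mu,\vec p,\vec s,\vec t)\\
\vec\nu_1 \ne \vec\mu \land \vec\nu_1 = \vec\nu_2 \land \vec s = \vec t
\end{lorcases}
\]
Here $\vec\nu_1, \vec\nu_2$ range over $A^{\vec\mu}$ (the block coordinate) and $\vec s,\vec t$ over $A^{\vec s}$. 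The first case applies the generator $\varphi_{\mathcal{G}}(\vec\mu,\vec p,\cdot,\cdot)$ inside block $\vec\mu$; the second fixes every point outside block $\vec\mu$. When $\varphi_{\mathcal{G}}(\mathfrak{A},\vec\mu,\vec p)$ is the graph of a permutation $\sigma \in \mathcal{G}_{\vec\mu}$, this formula is the graph of the block-embedded permutation, and these range over a generating set of $\prod \mathcal{G}_{\vec n}$ as $(\vec\mu,\vec p)$ vary; when it is not a permutation graph, the whole formula fails to be a permutation graph and contributes nothing under the $\langle\cdot\rangle_{\vec\mu,\vec p}$ semantics. Then I would invoke \cref{lem:union_generating_sets} — which is the $\ord$-analogue handling $O(|A|)$-indexed unions — to conclude that $\GspanFP{\varphi_{\Pi\mathcal{G}}}{\vec\mu,\vec p.(\vec\mu,\vec s).(\vec\mu,\vec t)}{\mathfrak{A}}$ is exactly the product, completing the proof.
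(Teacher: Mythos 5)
Your proposal is correct and takes essentially the same approach as the paper: the paper uses exactly the same block-diagonal embedding $\iota(\vec g) : (\vec\lambda,\vec a)\mapsto(\vec\lambda, g_{\vec\lambda}(\vec a))$ and a generating-set formula logically equivalent to yours (the paper merely factors out the conjunct $\vec\nu_s = \vec\nu_t$). Your extra remarks --- that parameter values where $\varphi_{\mathcal G}$ fails to define a permutation are automatically discarded by the $\langle\cdot\rangle$ semantics, and the appeal to \cref{lem:union_generating_sets} to identify the span with the full product --- are sound refinements of steps the paper leaves implicit.
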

\begin{proof}
Consider the function $\iota : \prod_{k} \mathcal G_{k}$ on $\Omega$:
\begin{align*}
    \iota : \prod_{k} \mathcal G_{k} &\to \Sym(\Omega)\\
    (\vec g)&\mapsto \left((\vec \lambda,\vec a) \mapsto (\vec\lambda, g_{\vec \lambda}(\vec a))\right)
\end{align*}
It is quite clear that $\iota$ is an injective group morphism. As such, $\iota(\prod_{k} \mathcal G_{k})$ is a permutation group representing $\prod \mathcal G_{k}$

Let us now show that it is definable in $\FPC$. As $\prod \mathcal G_{k}$ is generated by $\bigcup S_{k}$, where $S_{k}$ is a generating set for $\mathcal G_{k}$, it is sufficient to show that, for any permutation defined by $\varphi_{\mathcal G}$, we can define its action on $\Omega$ in $\FPC$:
\[ \varphi_{\Pi \mathcal G}(\vec \mu,\vec p,\vec\nu_s,\vec s,\vec \nu_t,\vec t) := \begin{landcases}
    \vec \nu_s = \vec \nu_t \\
    \begin{lorcases}
        \vec \mu = \vec \nu_s \land \varphi_{\mathcal G}(\vec\mu,\vec p, \vec s,\vec t)\\
        \vec \mu\neq \vec \nu_s \land \vec s = \vec t
    \end{lorcases}
\end{landcases}
\]
For any $k < |A|^{|\vec\mu|}$ and any $\vec c\in A^{\vec p}$, $\varphi_{\Pi \mathcal G}(\mathfrak A,\vec \mu(k),\vec c)$ defines the action of the permutation in $\mathcal G_{k}$ whose graph is $\varphi_{\mathcal G}(\mathfrak A,\vec \mu(k),\vec c)$ on $\Omega$.
As such, $\GspanFP{\varphi_{\Pi \mathcal G}}{\vec\mu\vec p}{\mathfrak A} \simeq \prod_{k} \mathcal G_{k}$. 
\end{proof}

\begin{lemma}\label{lem:fpc_defines_prod_morphisms}
    Consider two formulae $\varphi_G(\vec p,\vec s,\vec t)$ and $\varphi_m(\vec \mu,R,\vec x,\vec y)$ such that, for all structure $\mathfrak A$ and $k < |A|^{|\vec\mu|}$, $\varphi_m(\mathfrak A,\vec \mu(k))$ defines a morphism $ m_{k} : \GspanFP{\varphi_G}{\vec p}{\mathfrak A}\to \Sym(A^{\vec s})$.
    There is a formula $\varphi_{\otimes m}(R,\vec\nu_x\vec x,\vec \nu_y,\vec y)$ that defines the morphism
    \[\bigotimes_{k\le |A|^{|\vec \mu|}} m_{k} : G \mapsto \prod_{k\le |A|^{|\vec \mu|}}\im(m_{k})\]
\end{lemma}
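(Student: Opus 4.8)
The plan is to write the morphism $\bigotimes_k m_k$ down directly as a block-wise action, reusing the representation of direct products supplied by \cref{lem:enc_prod_groups}. First I would set $\Omega := A^{\vec\mu}\times A^{\vec x}$ and record that the codomain $\prod_{k\le|A|^{|\vec\mu|}}\im(m_k)$ embeds into $\Sym(\Omega)$: applying \cref{lem:fpc_def_image_morphism} uniformly in $k$ (keeping $\vec\mu$ free, so that substituting $R(\vec s,\vec t)$ by $\varphi_G(\vec p,\vec s,\vec t)$ in $\varphi_m$ yields a single formula defining the family $(\im(m_k))_k$) lets me invoke \cref{lem:enc_prod_groups}. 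This gives the injective morphism $\iota$ sending $(v_k)_k$ to the permutation $(\vec\lambda,\vec a)\mapsto(\vec\lambda,v_{\vec\lambda}(\vec a))$ of $\Omega$, through which $\prod_k\im(m_k)$ is faithfully represented.

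Under this representation, the permutation $\iota\big((m_k(\sigma))_k\big)$ attached to $\sigma\in G$ fixes the numerical block $\vec\lambda$ and acts on the $A^{\vec x}$-coordinate of block $k$ exactly as $m_k(\sigma)$. This makes the defining formula immediate: with $R$ carrying $\graph(\sigma)$, I would set
\[\varphi_{\otimes m}(R,\vec\nu_x,\vec x,\vec\nu_y,\vec y) := \vec\nu_x = \vec\nu_y \land \varphi_m(\vec\nu_x, R,\vec x,\vec y),\]
i.e. $\varphi_m$ with its numerical parameter $\vec\mu$ instantiated to the block index $\vec\nu_x$. The correctness equation of \cref{def:definable_morphism} is then a direct unfolding: for $R=\graph(\sigma)$, the pair $\big((\vec\nu_x,\vec x),(\vec\nu_y,\vec y)\big)$ satisfies $\varphi_{\otimes m}$ iff $\vec\nu_x=\vec\nu_y$ and $(\vec x,\vec y)\in\graph(m_{\vec\nu_x}(\sigma))$, which is precisely $\graph\big(\iota((m_k(\sigma))_k)\big)$.

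It remains to verify that the map so defined is a group morphism, and I would factor this through the abstract product rather than reason inside $\Sym(\Omega)$. The assignment $\sigma\mapsto(m_k(\sigma))_k$ is a morphism $G\to\prod_k\im(m_k)$ because each $m_k$ is, whence $(m_k(\sigma\tau))_k=(m_k(\sigma))_k\,(m_k(\tau))_k$ coordinate-wise; composing with the injective morphism $\iota$ of \cref{lem:enc_prod_groups} yields a morphism $G\to\Sym(\Omega)$ with image $\iota(\prod_k\im(m_k))$, exactly the one defined by $\varphi_{\otimes m}$. Injectivity of $\iota$ moreover gives $\ker\big(\bigotimes_k m_k\big)=\bigcap_k\ker(m_k)$, which is the property this construction is ultimately meant to deliver.

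I do not anticipate a real obstacle: the lemma is essentially bookkeeping, and the only delicate points are (i) keeping the three variable blocks of $\varphi_m$ and the two coordinate blocks of $\Omega$ aligned through the substitution $\vec\mu\mapsto\vec\nu_x$, and (ii) confirming that the block-wise formula lands inside the $\iota$-representation of $\prod_k\im(m_k)$ and not in some larger subgroup of $\Sym(\Omega)$. Both are settled by the explicit description of $\iota$ recorded in \cref{lem:enc_prod_groups}.
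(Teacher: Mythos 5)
Your proposal is correct and follows essentially the same route as the paper: the defining formula $\vec\nu_x = \vec\nu_y \land \varphi_m(\vec\nu_x, R, \vec x, \vec y)$ is exactly the paper's one-line proof (with variables renamed), and your additional verification that this lands in the $\iota$-representation of \cref{lem:enc_prod_groups} and is a morphism merely spells out what the paper leaves implicit.
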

\begin{proof}
\[
\varphi_{\otimes m}(R,\vec\nu_s,\vec s,\vec \nu_t,\vec t) :=
\begin{landcases}
    \vec \nu_s = \vec \nu_t\\
    \varphi_m(\vec\nu_s, R, \vec s, \vec t)
\end{landcases}
\qedhere\]
\end{proof}
As such, if $m_k$ morphism-defines a group $H_k\normle G$, $\bigotimes_k m_k$ morphism-defines $\bigcap H_k$. Thus,
\begin{corollary}
    \label{corol:morph_def_grp_intersection}
    If $(H_i)$ is a $\FP + \ord$ definable sequence of morphism-definable subgroups of $G$, $\bigcap H_i$ is mormhism-definable in $G$.
\end{corollary}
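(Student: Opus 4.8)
The plan is to realize $\bigcap_i H_i$ as the kernel of a single definable morphism, namely $\bigotimes_i m_i$, where $m_i$ is the morphism from $G$ witnessing the morphism-definability of $H_i$. First I would unpack the hypothesis that the sequence $(H_i)$ is $\FP+\ord$-definable: this amounts to the existence of one formula $\varphi_m(\vec\mu,R,\vec x,\vec y)$ such that, for each index $i<|A|^{|\vec\mu|}$, the formula $\varphi_m(\mathfrak A,\vec\mu(i))$ defines a morphism $m_i : G\to\Sym(A^{\vec x})$ with $\ker(m_i)=H_i$, together with the common $\FP+\ord$-definable generating set $\varphi_G(\vec p,\vec s,\vec t)$ for $G$. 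This is exactly the input format of \cref{lem:fpc_defines_prod_morphisms}.

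Next I would apply \cref{lem:fpc_defines_prod_morphisms} to $\varphi_G$ and $\varphi_m$ to obtain the formula $\varphi_{\otimes m}$ defining the morphism $\bigotimes_i m_i : G\to\prod_i\im(m_i)$. The identity $\ker(m_1\otimes m_2)=\ker(m_1)\cap\ker(m_2)$ recorded after \cref{def:prod_morphisms} extends directly from the definition of $\otimes$ over the whole family to $\ker(\bigotimes_i m_i)=\bigcap_i\ker(m_i)=\bigcap_i H_i$. Hence this morphism has precisely the desired kernel, and $\varphi_{\otimes m}$ defines it.

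The one point that must be verified to invoke \cref{def:definable_morphism} is that $\bigotimes_i m_i$ lands in $\Sym(A^{T'})$ for a single fixed type $T'$, independent of the polynomially many factors. This is supplied by \cref{lem:enc_prod_groups}: the product $\prod_i\im(m_i)$, each factor sitting inside $\Sym(A^{\vec x})$, embeds as a permutation group on $\Omega:=A^{\vec\mu}\times A^{\vec x}$ via the injective morphism $\iota$, so we may take $T':=\numb^{|\vec\mu|}\cdot\type(\vec x)$. The formula $\varphi_{\otimes m}$ already describes this permutation action on $\Omega$, so it defines a morphism into $\Sym(A^{T'})$ in the sense of \cref{def:definable_morphism}. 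Combined with the generating set $\varphi_G$ for $G$, this exhibits $\bigcap_i H_i$ as $\FP+\ord$-morphism-definable from $G$.

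The main obstacle here is bookkeeping rather than genuine mathematical difficulty: one must ensure that the type $T'$ of the product representation does not grow with the number of factors, so that a single formula of bounded arity defines the combined morphism uniformly across all indices. This is exactly the role of the fixed-type product construction of \cref{lem:enc_prod_groups}, and once that machinery is in place the corollary follows by assembling the three ingredients above.
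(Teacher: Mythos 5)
Your proposal is correct and follows the paper's own route exactly: the paper derives the corollary immediately from \cref{lem:fpc_defines_prod_morphisms} together with the identity $\ker(m_1\otimes m_2)=\ker(m_1)\cap\ker(m_2)$ extended to the whole family, with the definability hypothesis interpreted precisely as you state (a single formula $\varphi_m(\vec\mu,R,\vec x,\vec y)$ giving the sequence of morphisms). Your additional care about the fixed-type codomain via \cref{lem:enc_prod_groups} is a detail the paper leaves implicit in the statement of \cref{lem:fpc_defines_prod_morphisms}, but it is the same construction, not a different argument.
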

Note that we expect the \emph{sequence} of subgroups to be definable in the sense that the sequence of morphisms defining the groups should be given as a formula $\varphi_m$ as in \cref{lem:fpc_defines_prod_morphisms}.
\section{Defining the rank of a matrix}
    \label{sec:rank}
    The remainder of this article is devoted to the comparison of $\FP + \rk$ and $\FP + \ord$. In this section, we show that $\FP + \ord$ is at least as expressive as $\FP + \rk$, and we will show in the following section that $\FP + \ord$ expresses the query shown by Lichter to separate $\FP + \rk$ from $\P$. Together, those results imply that $\FP + \rk < \FP + \ord$.

    To show that $\FP + \rk \le \FP + \ord$, we show that any instance of the $\rk$ operator can be defined by a formula in $\FP + \ord$.
    Let us first introduce the rank operator. We denote $\mathbb F_p$ the unique finite field with $p$ elements, for $p$ a prime number.
    \begin{definition}
        \label{def:matrix_relation}
        A relation $R\subseteq X\times Y\times \mathbb N$ is said to \emph{define a matrix over $(X,Y)$} if, for all $x\in X, y\in Y$, there is a unique $m_{x,y}\in\mathbb N$ such that $(x,y,m_{x,y})\in R$.
        In such a case, $M_R := (m_{x,y})_{x\in X,y\in Y}$ is the \emph{matrix defined by $R$}.
    \end{definition}
    In such a case, and for $p$ a prime number, $M_R$ defines a linear map $f_{R,p}: \mathbb F_p^X\to \mathbb F_p^Y$, that maps $\vec u\in \mathbb F_p^{X}$ to $\vec v\in \mathbb F_p^Y$, where $v_y = \left(\sum_{x\in X} m_{x,y}u_x\right)\mod p$.
    The \emph{rank} of this linear map, denoted $\rank_p(M_R)$ is the dimension of the vector space $f_{R,p}(\mathbb F_p^X)$. Recall that, on input $M_R$, $\rank_p(M_R)$ is computable in polynomial-time through Gaussian elimination.
    \begin{definition}
        For any formula $\varphi(\vec x,\vec y,\vec\mu)$ on signature $\Sigma$ with $\vec\mu$ and $\vec\pi$ tuples of numerical variables of same length,
        \[\psi := (\rk_{\vec x.\vec y.\vec\mu}\ \varphi . \vec\pi)\]
        is a relation of type $\numb^{\min\{|\vec x|,|\vec y|\}}$ and free variables $\{\vec\pi\}\cup \freeVar(\varphi)\setminus\{\vec x,\vec y,\vec\mu\}$.

        For any $\Sigma$-structure $\mathfrak A$, any $p < |A|^{|\vec\pi|}$, and any valuation $v$ of the free variables of $\psi$: 
        \begin{itemize}
            \item if $\varphi(\mathfrak A,v)$ defines a matrix $M^{\mathfrak A,v}_\varphi$ over $(A^{\vec x}, A^{\vec y})$, $\psi(\mathfrak A,v,\vec\pi\gets p)$ consists of the unique tuple of numerical values which encodes $\rank_{p}(M^{\mathfrak A,v}_\varphi)$. 
            \item otherwise, $\psi(\mathfrak A,v) =\emptyset$
        \end{itemize}
    \end{definition}
    This definition of the $\rk$ operator is very close to~\cite[Definition 3.3.1]{pakusaFinite2010}, the main distinction being that, to avoid the handling of numerical terms, we define $(\rk \varphi)$ as a numerical relation. Note that the way its semantics are defined, there is at most one value in this relation.
    Contrary to \cite{pakusaFinite2010}, we only consider the \emph{uniform} variant of the $\rk$ operator --- when the size of the field is a variable of the operator. This is the most expressive variant of the $\rk$ operator.

    We now show that $\FP + \ord \ge \FP +\rk$:
    \begin{theorem}
        \label{thm:ord_ge_rk}
        Let $\varphi\in(\FP + \ord)[\Sigma]$, $\vec x,\vec s$ be tuples of variables and $\vec\mu,\vec\pi$ be tuples of numerical variables of the same length. There is a formula $\psi(\vec\pi)\in(\FP + \ord)[\Sigma]$ such that, for any $p < |A|^{|\vec\pi|}$, $\psi(\mathfrak A,\vec\pi\gets p) = (\rk_{\vec x.\vec s.\vec\mu} \ \varphi . \vec\pi)(\mathfrak A,\vec\pi\gets p)$.
    \end{theorem}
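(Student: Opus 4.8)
The plan is to express the rank of the matrix defined by $\varphi$ as a base-$p$ logarithm of a group order, exploiting the fact that the image of the associated linear map is an $\mathbb F_p$-vector space whose cardinality is exactly $p^{\rank_p}$. Write $X := A^{\vec x}$ and $Y := A^{\vec s}$, and let $f : \mathbb F_p^X\to\mathbb F_p^Y$ be the linear map determined by the (mod-$p$ reduced) matrix $M$. Since $f$ sends the indicator vector $e_{\vec x}$ to the $\vec x$-th row $w^{\vec x} := (m_{\vec x,\vec s}\bmod p)_{\vec s\in Y}$ of $M$, the image $\im(f)$ is the span of these rows, so that $|\im(f)| = p^{\rank_p(M)}$. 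The core idea is to realize $\im(f)$ as a permutation group of polynomial degree and apply the $\ord$ operator.

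To this end, I represent the additive group $(\mathbb F_p^Y,+)$ faithfully on the set $\Omega := Y\times \mathbb F_p$, where $\mathbb F_p$ is encoded by those numerical tuples in $(\oA)^{|\vec\pi|}$ which encode integers $< p$. A vector $v\in\mathbb F_p^Y$ acts by the translation $(\vec s,c)\mapsto(\vec s,(c+v_{\vec s})\bmod p)$, while numerical tuples encoding a value $\ge p$ are fixed. This action preserves the $Y$-coordinate and is faithful, so $v\mapsto$ this permutation is an injective morphism; consequently the subgroup generated by the row-translations $\{t_{w^{\vec x}}\}_{\vec x}$ has order exactly $|\im(f)| = p^{\rank_p(M)}$.

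It remains to define this generating set in $\FP + \ord$ and to read off the rank. The generators are parametrized by $\vec x$: for each $\vec x$ I define the graph of $\sigma_{\vec x}\in\Sym(A^{\type(\vec s)\cdot\numb^{|\vec\pi|}})$ by a formula asserting that the $Y$-coordinate is preserved, that the numerical coordinate is shifted (cyclically, modulo $p$) by the entry $m_{\vec x,\vec s}$ recovered from the unique $\vec\mu$ with $\varphi(\vec x,\vec s,\vec\mu)$, and that points encoding values $\ge p$ are fixed; all of this is $\FP$-definable arithmetic on numerical values. Applying $\ord$ binding $\vec x$ to this formula yields a binary encoding of $p^{\rank_p(M)}$, from which $\rank_p(M)$ is obtained as $\log_p$, a polynomial-time arithmetic operation on binary-encoded integers and hence $\FP$-definable by the remark following \cref{def:ord_operator}. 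I then convert the result into the $\numb^{\min\{|\vec x|,|\vec s|\}}$-tuple demanded by the $\rk$ operator, and guard the entire formula by a sentence checking that $\varphi$ defines a matrix (that every $(\vec x,\vec s)$ admits a unique $\vec\mu$), so that $\psi(\mathfrak A)=\emptyset$ when this fails.

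The main obstacle is the design of the faithful, polynomial-degree permutation representation of $\im(f)$: once the image is encoded as the group generated by the row-translations on $\Omega$, whose order is $p^{\rank_p(M)}$, everything else reduces to routine $\FP$-definable arithmetic (shifting modulo $p$, taking a base-$p$ logarithm, and the well-definedness check). The point requiring particular care is that $p$ may be as large as $|A|^{|\vec\pi|}$, so elements of $\mathbb F_p$ must be encoded as tuples of numerical values rather than single numerical elements; this is precisely why $\Omega$ carries the mixed type $\type(\vec s)\cdot\numb^{|\vec\pi|}$ and why the constructed permutations act on a mixed-type domain.
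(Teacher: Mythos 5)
Your proposal is correct and takes essentially the same route as the paper: both reduce $\rank_p(M)$ to $\log_p$ of the order of the image of the associated linear map, realized as a group of translation permutations acting on $A^{\vec s}\times K$ (with $K$ the numerical-tuple encoding of $\mathbb F_p$, needed because $p$ can be as large as $|A|^{|\vec\pi|}$), whose order is extracted with $\ord$ and post-processed by $\FP$-definable binary arithmetic. The only cosmetic difference is that you generate the image using the row-translations alone, whereas the paper enumerates the images of all scaled unit vectors $E^{v}_{\vec a}$; these generate the same additive subgroup, since scalar multiples arise by iteration.
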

    \begin{proof}[Sketch of proof]
        Fix a structure $\mathfrak A$, and let $\mathcal M$ be the  matrix defined by $\varphi(\mathfrak A,v)$ over $(A^{\vec x},A^{\vec s})$.
        To ease notations, let \mbox{$I := A^{\vec x}$} and $J := A^{\vec s}$.
        Recall that $\rank_p(\mathcal M)$ is the dimension of the image of $\mathcal M$, i.e. $\im_{\mathbb F_p}(\mathcal M) := \{\mathcal M\cdot \vec X, \vec X\in \mathbb F_p^{I}\}$, and
        \[\rank_p(\mathcal M) = \log_p\left|\im_{\mathbb F_p}(\mathcal M)\right|\]

        Since the base $p$ logarithm is a $\P$-computable arithmetic function, it can be defined in $\FPC$, and it is only left to show that $|\im_{\mathbb F_p}(\mathcal M)|$ can be defined in $\FP + \ord$.

        $\im_{\mathbb F_p}(\mathcal M)$ is a subgroup of the additive group of $\mathbb F_p^{J}$, and it is the image of the group-morphism $f_{\mathcal M,p} : \mathbb F_p^{I}\to \mathbb F_p^{J}$ as defined below \cref{def:matrix_relation}.
        Thus, the theorem reduces to the definability of a generating set $\mathcal E$ for $\mathbb F^I_p$ and of the morphism $f_{\mathcal M,p}$, using \cref{lem:fpc_def_image_morphism}.
        However, because the general definition within $\FP +\ord$ of the morphism $f_{\mathcal M,p}$ is quite convoluted, we will instead provide directly a formula enumerating $f_{\mathcal M,p}(\mathcal E)$.
        The formulae defining $f_{\mathcal M,p}(\mathcal E)$ are provided in appendix, \cref{app:rank}.

    \end{proof}

        Let us mention that this proof actually generalises to broader algebraic structures than fields: for instance, for any ring\footnote{Let us recall that a ring has the same properties as a field, except for the existence of multiplicative inverses. That is, we only consider commutative rings with a multiplicative identity.} $\mathcal R$, a set of linear equations over $\mathcal R$ can be represented as a matrix $\mathcal M\subseteq \mathcal R^{I\times J}$ for some index-sets $I,J$. As long as $|\mathcal R| \le |A^k|$ for some $k$, and the addition and multiplication in $\mathcal R$ are provided, we can construct a generating set for the additive group $\im_{\mathcal R}(\mathcal M) := \{\mathcal M\cdot X, X\in \mathcal R^I\}$.
    
        This bears some importance as further candidate logics for $\P$ have been considered\cite{pakusaFinite2010,dawarDefinability2013}, introducing operators allowing to check, given such a matrix $\mathcal M$ and a tuple $Y\in \mathcal R^J$, whether there is a tuple $X\in \mathcal R^I$ such that $\mathcal M\cdot X = Y$. This is the Ring Equation Satisfiability operator (RES).
    
        Since a generating set for $\im_{\mathcal R}(\mathcal M)$ can, under the assumptions given above, be defined in $\FP + \ord$, we can check whether $Y\in \im_{\mathcal R}(\mathcal M)$ using the membership operator defined in \cref{lem:fp_ord_membership}, which shows that $\FP + \ord$ is also at least as expressive as $\FP + \mathrm{RES}$.

\section{Separation of rank and group order logics}
\label{sec:rk_lt_ord}

\knowledge{notion}
    |relative encoding
\knowledge{notion}
    |local
    |local scale

\knowledge{notion}
    |semi-local
    |semi-local scale

\knowledge{notion}
    |global
    |global scale
 
\knowledge{notion}
    |partial ordering
\knowledgenewrobustcmd
    \mapFP
    {\cmdkl {\mathsf{map}}}

\knowledgenewcommandPIE
    \mapMa
    {\cmdkl {\mathsf{map}#1#2#3}}

\knowledgenewrobustcmd
    \strA
    {{\cmdkl {\mathfrak A}}}


\knowledgenewrobustcmd
    \domA
    {\cmdkl{A}}

\newcommand
    \domAO
    {\stroA}

\knowledgenewmathcommandPIE
    \ldomA
    {\cmdkl{A#1#2#3}}

\knowledgenewmathcommandPIE
    \ldomAO
    {\cmdkl {A#2^<}}

\knowledgenewrobustcmd
    \PhiAB
    {\cmdkl {\Phi}}

\knowledgenewrobustcmd
    \precAB
    {\cmdkl {\mathrel{\preceq}}}

\knowledgenewrobustcmd
    \mAB
    {\cmdkl {\ensuremath{m}}}


\disablecommand\gamma
\knowledgerenewcommandPIE
    \gamma
    {\cmdkl{\LaTeXgamma#1#2#3}}

\disablecommand\vartheta
\knowledgerenewcommandPIE
    \vartheta
    {\cmdkl{\LaTeXvartheta#1#2#3}}

\knowledgenewcommandPIE
    \gslv
    {\cmdkl{v#1#2#3}}

\knowledgenewrobustcmd
    \encMa
    {\cmdkl{\enc}}
\knowledgenewcommandPIE\lGa{\cmdkl{\Gamma#2}#1#3}
\knowledgenewcommandPIE\olGa{\cmdkl{\Gamma^<#2}}

\knowledgenewrobustcmd\gGa{\cmdkl{\Gamma}}
\knowledgenewrobustcmd\ogGa{\cmdkl{\Gamma^<}}

\knowledge{notion}
    |labelling coset
    |Labelling coset
    |labelling cosets
    |Labelling cosets


\knowledgenewrobustcmd
    \lLaCos
    {\cmdkl{\mathcal O}}


\knowledgenewcommandPIE
    \slE
    {\cmdkl{E#1#2#3}}

\knowledgenewcommandPIE
    \slStrA
    {\cmdkl{\mathfrak A#2}}

\knowledgenewcommandPIE
    \slm
    {\cmdkl{m#1#2#3}}

\knowledge{notion}
    |\slm

\knowledgenewcommandPIE
    \slmStar
    {\cmdkl{m^*#2}}
\knowledgenewcommandPIE
    \slOm
    {\cmdkl{\Omega#2}}
\knowledgenewcommandPIE
    \kDel
    {\cmdkl{\Delta#2}}

\knowledgenewcommandPIE
    \koDel
    {\cmdkl{\Delta^<#2}}

\knowledgenewrobustcmd
    \aut
    {\cmdkl{\mathsf{aut}}}

\knowledgenewrobustcmd
    \coset
    {\cmdkl{\mathsf{coset}}}
\knowledgenewrobustcmd
    \slMorph
    {\cmdkl {\mathsf{slMorph}}}

\knowledgenewcommandPIE
    \phiRepr
    {\cmdkl{\varphi#1#2#3}}
\knowledgenewcommandPIE
    \psiRepr
    {\cmdkl{\psi#1#2#3}}

\knowledgenewrobustcmd
    \gOm
    {\cmdkl{\ensuremath{\Omega}}}
\knowledgenewrobustcmd
    \mathcalG
    {\cmdkl{\ensuremath{\mathcal G}}}

\knowledgenewrobustcmd
    \mathcalH
    {\cmdkl{\ensuremath{\mathcal H}}}
\knowledgenewrobustcmd
    \genG
    {\cmdkl{\ensuremath{\mathsf{gen}\mathcal G}}}

\knowledgenewcommandPIE
    \minit
    {\ensuremath{\cmdkl{m_{init}#1#3}}}

\knowledgenewrobustcmd
    \vinit
    {\cmdkl {\ensuremath{v_{init}}}}

\knowledgenewrobustcmd
    \initMorph
    {\cmdkl {\mathsf{initMorph}}}

\knowledgenewrobustcmd
    \initValue
    {\cmdkl {\mathsf{initValue}}}

\knowledgenewrobustcmd
    \vFP
    {\cmdkl{\mathsf{v}}}

\knowledgenewrobustcmd
    \varthetaFP
    {\cmdkl{\LaTeXvartheta}}

\knowledgenewrobustcmd
    \encFP[4]
    {\cmdkl {\enc_{#1 ; #2}#3;#4}}

\knowledgenewrobustcmd
    \compound
    {\cmdkl{\mathsf{compound}\LaTeXvartheta}}

\knowledgenewcommandPIE
    \uvarthetaMa
    {\cmdkl{\underline\LaTeXvartheta#1#2#3}}

\knowledgenewrobustcmd
    \uvarthetaFP
    {\cmdkl{\underline\LaTeXvartheta}}

In this section, we show that the $\ord$ operator is strictly more expressive than the $\rk$ operator (in the context of fixed-point logics).
We have just shown that $\FP + \rk \le \FP + \ord$, so that it is only left to exhibit a property inexpressible in $\FP + \rk$ that is definable in $\FP + \ord$.

Recently, Lichter~\cite{lichterSeparating2023} provided such a property separating $\FP + \rk$ from $\CPT$ (although whether $\FP + \rk \le \CPT$ remains unknown).
More precisely, Lichter exhibits a class of structures $\mathcal K$ and a property $\mathcal P\subseteq \mathcal K$ such that no $\FP + \rk$ formula expresses $\mathcal P$, while $\mathcal P$ is $\CPT$-definable.

The $\CPT$-definability of $\mathcal P$ stems from the fact that $\mathcal P$ is $\P$-computable, and that $\CPT$ canonizes structures in $\mathcal K$.
The ability of $\CPT$ to canonize structures in $\mathcal K$ is a direct consequence of the fact that those structures have \emph{definable abelian colors}, a notion that we will make precise in the second section of this chapter. The fact that structures with abelian colors can be canonized in $\CPT$ was proved in~\cite{pakusaLinear2015}.

We will show that $\FP + \ord$ also canonizes structures with abelian colors. First, we review the notion of abelian colors and the method used in~\cite{pakusaLinear2015} to canonize those structures in $\CPT$.
Subsequently, we adapt this method in the context of $\FP + \ord$.

\subsection{Canonizing structures with abelian colors}
A \emph{coloring} of a structure $\mathfrak A$ is a function $c : A\to \range{m}$ for some $m$. For $i\le m$, the $i$-th \emph{color-class} of $\mathfrak A$ is the set $c^{-1}(i)\subseteq A$, denoted $A_i$.
A coloring is usually represented within a structure as a total pre-order $\preceq$ (i.e. a total, transitive, reflexive binary relation), such that $c^{-1}(i)$ is the set of elements which admit a maximal $\prec$-increasing sequence of length $i$ (where $x\prec y$ iff $x\preceq y\land\lnot y\preceq x$).

An \emph{abelian group} is a commutative group. A group $G$ is said to \emph{act} on a set $X$ if we are given a morphism $m : G\to \Sym(X)$. A group action is \emph{faithful} if $m$ is injective ; it is \emph{transitive} if $\{m(g)(x), g\in G\} = X$ for some (and thus all) $x\in X$. When $G\le \Sym(X)$ we say that $G$ is transitive if its action through the identity morphism is transitive.

\begin{definition}
    \label{def:abelian_colors}
    \AP A $\Sigma$-structure with \emph{Abelian colors} is a $\Sigma\cup\{\intro *\precAB,\intro *\PhiAB\}$-structure $\mathfrak A$, where $\type(\precAB) =\element^2$ and $\type(\PhiAB) = \numb^2\element^2$, such that:
    \begin{itemize}
    \item $\reintro *\precAB$ is a total pre-order on $A$. From now on, let $\intro *\mAB$ be the number of equivalence classes of $\precAB$ ; and $A_i$ be the $i$-th equivalence class.
    \itemAP for any $i < \mAB$, $j < |A_i|$, $\reintro *\PhiAB(\mathfrak A,i,j)$ is the graph of a permutation $\intro*\gamma^i_j\in \Sym(A_i)$ (recall that $\PhiAB(\mathfrak A,i,j) = \{(s,t)\tq (i,j,s,t)\in\PhiAB^{\mathfrak A}\}$);
    \itemAP for any $i < m$, ${\intro *\lGa_i} := \{\gamma^i_j \tq j < |A_i|\}$ is an abelian, transitive permutation group over $A_i$.\footnote{Note that as a subgroup of $\Sym(\ldomA_i)$, $\lGa_i$ acts faithfully on $\ldomA_i$.}
    \end{itemize}
\end{definition}

That is, $\mathfrak A$ has abelian colors if it is equipped with a total pre-order $\preceq$ \emph{and} a relation $\Phi$ that explicitly enumerates a family of abelian transitive groups acting on each of the color-classes defined by $\preceq$. Even more so, the type of $\Phi$ implies that each of those groups is linearly ordered. Given a $\Sigma$-structure $\mathfrak A$, we call such an interpretation of $\preceq$ and $\Phi$ on $\mathfrak A$ an \emph{abelian coloring} of $\mathfrak A$. 
Note that we require that $|\Gamma_i| \le |A_i|$. This always holds because a transitive abelian group $G\le \Sym(X)$ has exactly $|X|$ elements:
\begin{lemma}\label{lem:regular_groups}
    Let $G$ be an abelian group acting faithfully and transitively on a set $X$. Then, the action of $G$ on $X$ is regular, i.e. for any $x\in X$ and $g\in G$, $g\cdot x = x \iff g = 1$. As such, for any fixed $y\in X$, the map $g\mapsto g\cdot y$ is a bijection between $G$ and $X$.
\end{lemma}
\begin{proof}
    Consider $g\in G$ such that $g\cdot x = x$. We will show that $g = 1_G$. Because the action of $G$ on $X$ is faithful, this amounts to show that, for any $y$, $g\cdot y = y$. Consider such a $y\in X$, and by transitivity, let $h\in G$ such that $h\cdot x = y$. Then,
    \begin{align*}
    g\cdot y &= (hh^{-1}g)\cdot  y
    = (hgh^{-1})\cdot y &\emph{since $G$ is abelian} \\
    &= (hg)\cdot x
    = h\cdot x = y&\qedhere
    \end{align*}
\end{proof}

The structures defined by Lichter to separate $\FP + \rk$ from $\P$ have abelian colors~\cite{lichterSeparating2023,pakusaLinear2015}.
Moreover, a direct adaptation of \cite[Theorem 5.5.1]{hodgesModel1993} implies that the canonization of structures with abelian colors reduces to the canonization of \emph{graphs} with abelian colors. Together, those results reduce the separation between $\FP + \rk$ and $\FP + \ord$ to the canonization of graphs with abelian colors in $\FP + \ord$.

\begin{theorem}
    \label{thm:ord_abelian_colors}
    $\FP + \ord$ canonizes graphs with abelian colors.
\end{theorem}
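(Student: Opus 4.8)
The plan is to simulate the group-theoretic canonization algorithm of Babai-Luks as presented in~\cite{babaiCanonical1983}, using the morphism-definability machinery from \cref{sec:first_properties} to represent labeling cosets in an isomorphism-invariant way. The key structural feature we exploit is that each color-class $A_i$ carries, via $\PhiAB$, an explicitly enumerated abelian transitive group $\lGa_i$ acting on it. By \cref{lem:regular_groups}, this action is \emph{regular}, so that fixing any base point $y_i \in A_i$ gives a canonical bijection $\lGa_i \simeq A_i$. This is the crucial source of tractability: regularity means the group $\lGa_i$ is its own canonical "coordinate system" for $A_i$, and composing these per-color identifications should yield an initial labeling whose stabilizer under the automorphisms of the graph can be refined step by step.

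First I would define, for the disjoint union of color-classes, an initial labeling coset $\lLaCos$ together with the large group $\mathcalG$ of admissible relabelings --- intuitively the product over colors of the (ordered) regular representations $\lGa_i$, which is definable as a generating set via \cref{lem:union_generating_sets} and the product construction of \cref{lem:enc_prod_groups}. The edges of the graph then impose constraints: a candidate labeling is \emph{graph-consistent} exactly when it maps the edge relation to a fixed target on $\stroA$. I would encode each such constraint as a morphism $\slm_e$ out of $\mathcalG$ whose kernel is the subgroup of relabelings respecting that edge-constraint, so that the subgroup of graph-consistent relabelings is $\bigcap_e \ker(\slm_e)$. By \cref{corol:morph_def_grp_intersection}, this intersection is again morphism-definable from $\mathcalG$, and by \cref{lem:morphism_definability_repr} both its order and its membership test are $\FP+\ord$-definable. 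Because the quotient $|\mathcalG|/|\,\bigcap_e \ker(\slm_e)\,|$ is bounded by the number of possible edge-images --- polynomially many per step --- the morphism targets stay of polynomial size, keeping everything inside $\FP+\ord$.

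The output formulae $(\varphi_R)_{R\in\Sigma}$ must name, for each pair of numerical positions $(k,l)\in(\stroA)^2$, whether the canonical graph has an edge there. The idea is that the canonical coset, being morphism-definable, has a canonical \emph{image} $v$ under the defining morphism (exactly the definable coset representative described below \cref{lem:fpc_def_image_morphism}): although $\FP+\ord$ cannot select a single labeling, it can compute the unique image-value shared by the whole coset, and this value records enough information to decide, isomorphism-invariantly, the canonical adjacency between numerical positions. I would verify that the resulting structure on $\stroA$ is isomorphic to $\mathfrak A$ by checking that the admissible relabelings act transitively enough on each color to pin down edges up to the residual (definable) symmetry, and that the leftover symmetry is resolved by the ordering $\precAB$ between color-classes plus the canonical choice of image-value.

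The main obstacle I expect is the faithful, isomorphism-invariant representation of the labeling \emph{coset} at each refinement step --- not merely its order or a membership test. The difficulty, emphasized already in the introduction, is that a labeling in the unordered setting is a bijection to an initial segment of integers, and such bijections cannot be composed, so the naive "coset = group element times subgroup" picture breaks. The morphism formalism sidesteps this by representing the coset through the unique image-value of its defining morphism rather than through a chosen representative; the delicate part is to show that this image-value genuinely determines the canonical structure, i.e. that two graphs with Abelian colors yield the same canonical output iff they are isomorphic. Establishing that correctness --- that the composed per-color regular identifications, refined by the edge-constraint kernels and tie-broken by $\precAB$, constitute a \emph{complete} canonical invariant --- is where the real work lies; the $\FP+\ord$-definability of each individual ingredient follows from the lemmas of \cref{sec:first_properties}.
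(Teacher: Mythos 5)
Your high-level plan coincides with the paper's: simulate the Babai--Luks/Pakusa refinement algorithm, use regularity of the $\Gamma_i$ (\cref{lem:regular_groups}) to obtain the canonical families of local orderings $\mathsf{map}^i_a$, and represent the labeling cosets arising along the run via the morphism-definability machinery of \cref{sec:first_properties}. But there is a genuine gap exactly at the point you declare already handled. You take $\mathcal G$ to be (a permutation representation of) $\Gamma := \prod_i \Gamma_i$, the group of admissible \emph{re}labelings, and assert that ``the morphism formalism sidesteps'' the non-composability of labelings $A\to\stroA$. It does not: the formalism of \cref{sec:first_properties} speaks only about subgroups and cosets \emph{inside} a definable permutation group, and a set of labelings is not a subset of your $\mathcal G$. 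The kernels $\bigcap_e \ker(m_e)\le\Gamma$ record only the residual symmetry; nothing in your setup identifies \emph{which} coset of labelings --- equivalently, which of the polynomially many achievable encodings of $E_{i,j}$ --- is the canonical one, and there is no morphism whose domain contains the labelings to produce your ``unique image-value''. The paper closes this gap with a construction absent from your plan: each ordering $\sigma$ is first encoded as a genuine group element $\varphi(\sigma)\in\Gamma^A$, with $\varphi(\sigma)_b := (\mathsf{map}^i_b)^{-1}\sigma_{\restriction A_i}$ for $b\in A_i$; the ambient group is $\mathcal G := \langle \bigcup_i \varphi_i(\pi_i\Gamma_i)\rangle$, which is strictly larger than the embedded copy $\psi(\Gamma)$ of $\Gamma$ precisely because it contains the encoded labelings; then \cref{thm:minit_vinit} exhibits a definable pair $(m_{init},v_{init})$ with $\varphi(\pi\Gamma) = m_{init}^{-1}(v_{init})$, and \cref{thm:ext_semi_local_morphisms} provides, for each pair of color classes, both a morphism $\vartheta_{i,j}$ on $\mathcal G$ \emph{and} a definable value function $v_{i,j}(a,b)$ anchored to $\mathsf{map}^i_a\oplus\mathsf{map}^j_b$, so that ``the labelings whose encoding of $E_{i,j}$ equals a given achievable one'' becomes a morphism-defined coset inside $\mathcal G$.

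Second, you posit the edge-constraint morphisms $m_e$ and claim that the $\FP+\ord$-definability of ``each individual ingredient follows from the lemmas of \cref{sec:first_properties}.'' It does not; constructing them is most of the work. The paper builds $m_{i,j}\colon \Gamma_i\Gamma_j\to\Sym(\Omega_{i,j})$ with kernel $\Gamma_i\Gamma_j\cap\Aut(E_{i,j})$ (\cref{thm:semi_local_morphisms}) by letting $\Gamma_i\Gamma_j$ act on the cosets of that kernel, each coset named by its lexicographically least representative --- a step that uses abelianness (for normality) \emph{and} the explicit linear ordering of the generators supplied by $\Phi$, and which is what keeps $\Omega_{i,j}$ polynomially bounded. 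By contrast, the correctness statement you single out as ``where the real work lies'' (that the image-value is invariant under the choice of representative pair) is in the paper a short computation (\cref{corol:slMorph_value_invariant}, building on \cref{lem:OAi_coset}). So: right algorithm, right tools, but the two constructions that make the tools applicable to labeling cosets --- the encoding of orderings as permutations, and the explicit semi-local morphisms --- are missing from your argument.
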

\begin{corollary}
    \label{corol:rk_lt_ord}
    $\FP + \rk < \FP + \ord$.
\end{corollary}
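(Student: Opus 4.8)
The plan is to derive \cref{corol:rk_lt_ord} as a formal consequence of the two results that bracket it, and to concentrate the real work on \cref{thm:ord_abelian_colors}. For the containment direction, \cref{thm:ord_ge_rk} already gives $\FP + \rk \le \FP + \ord$, so it remains only to produce a separating query. Lichter's class $\mathcal K$ consists of structures with abelian colors, and on $\mathcal K$ he exhibits a polynomial-time property $\mathcal P$ that no $\FP + \rk$ formula defines \cite{lichterSeparating2023}. Using the adaptation of \cite[Theorem 5.5.1]{hodgesModel1993} noted above, canonizing structures with abelian colors reduces to canonizing graphs with abelian colors; granting \cref{thm:ord_abelian_colors}, $\FP + \ord$ therefore canonizes $\mathcal K$, and by Immerman--Vardi it captures $\P$ on $\mathcal K$. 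In particular $\mathcal P$ becomes $\FP + \ord$-definable, exhibiting a query in $(\FP + \ord)\setminus(\FP + \rk)$ and closing the corollary. The entire difficulty is thus pushed into \cref{thm:ord_abelian_colors}.

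To prove \cref{thm:ord_abelian_colors} I would simulate the group-theoretic canonical-labeling procedure of \cite{babaiCanonical1983} in the spirit of \cite{pakusaLinear2015}, but with labeling cosets represented through the morphism formalism of \cref{sec:first_properties} rather than by systems of equations. The starting point is that each color class $A_i$ carries, via $\Phi$, an abelian transitive group; by \cref{lem:regular_groups} this action is regular, so $|A_i|$ equals the group order and the class behaves like the regular representation of an abelian group. First I would fix, class by class, a definable generating set for these groups together with an initial morphism whose kernel records the color-wise automorphisms and whose image encodes the partial canonical order on that class.

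The heart of the argument is to propagate these local canonical labelings across the edges of the graph while staying isomorphism-invariant. Since choosing a coset representative is forbidden, I would keep every intermediate labeling coset as a \emph{morphism-definable} subgroup of the ambient product group: the kernel is the current partial automorphism group and the definable image plays the role of a canonical representative, exactly as argued below \cref{lem:fpc_def_image_morphism}. Constraints coming from edges between classes are then imposed by intersecting the relevant morphism-definable subgroups, which is available in $\FP + \ord$ by \cref{corol:morph_def_grp_intersection}; this is precisely why the representation must accommodate products of $O(|A|)$ groups. At each refinement step I would invoke \cref{lem:morphism_definability_repr} to compute orders and membership in the refined subgroups, and the order/logarithm bookkeeping from the proof of \cref{thm:ord_ge_rk} to read off canonical indices in $\stroA$ in a definable way. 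Iterating over the linearly ordered color classes by the fixed-point operator of $\FP$ then yields a morphism-definable representation of a canonical labeling coset, from which the canonical graph on $\stroA$ is extracted.

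The main obstacle is precisely this isomorphism-invariant handling of labeling cosets: a labeling is naturally a bijection onto an initial segment of the integers, and such bijections cannot be composed, so there is no group of labelings to work inside directly. The morphism framework circumvents this by never naming a representative and instead tracking a coset implicitly through a definable morphism. The delicate part of the proof is verifying that the composition of refinement steps stays morphism-definable --- in particular that the successive intersections and images remain $\FP + \ord$-definable as the fixed point unfolds --- which is exactly where the abelian and regular structure of the color classes is essential.
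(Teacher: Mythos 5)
Your derivation of the corollary matches the paper's exactly: containment from \cref{thm:ord_ge_rk}, strictness from Lichter's $\FP+\rk$-undefinable polynomial-time property on structures with abelian colors, the Hodges-style reduction to graphs, and \cref{thm:ord_abelian_colors} plus Immerman--Vardi to make that property $\FP+\ord$-definable. Your sketch of \cref{thm:ord_abelian_colors} also follows the paper's route (Babai--Luks canonization with labeling cosets kept morphism-definable, intersections via \cref{corol:morph_def_grp_intersection}, regularity from \cref{lem:regular_groups}), so this is essentially the same proof.
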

The remainder of this section is devoted to the proof of \cref{thm:ord_abelian_colors}.
\AP In~\cite{pakusaLinear2015}, the author introduces an algorithm to canonize structures with abelian colors. We will follow the same procedure. Before we present this algorithm, a few introductory definitions are in order.
Fix a colored graph $\mathfrak A$ and let $c$ be its coloring. We denote $\intro* \ldomA_i$ the $i$-th color-class of $\mathfrak A$, and $\intro* \slE_{i,j}$ the edges between $A_i$ and $A_j$, i.e. $\reintro* \slE_{i,j} := E \cap (A_i\times A_j\cup A_j\times A_i)$.
An \emph{ordering of $A$ consistent with $c$}  is a bijection $\sigma : A\to \stroA$ such that for any $a,b\in A$, $c(a) < c(b)\implies \sigma(a) < \sigma(b)$.
Let $\intro*\ldomAO_i := \sigma(A_i)$ for some ordering $\sigma$ consistent with $c$. Note that this definition of $\ldomAO_i$ does not depend on the choice of $\sigma$.
An \emph{ordering of $A_i$ consistent with $c$} is a bijection $\sigma : A_i\to\ldomAO_i$. It is clear that any ordering of $A$ consistent with $c$ can be decomposed in a product $\prod_{i \le m} \sigma_i$, where $\sigma_i$ is an ordering of $A_i$ consistent with $c$. 
Given a relation $R$ over $A$ and $\sigma$ an ordering of $A$, we can define the \emph{encoding of $R$ relative to $\sigma$}, denoted $R^\sigma$: $R^\sigma$ has type $\numb^{|\type(R)|}$, and $R^\sigma := \{ (\sigma^*(v_1),\sigma^*(v_2), \dots, \sigma^*(v_l))\tq (v_1,\dots,v_l)\in R\}$, where
\[\sigma^*(v) := \begin{cases}
\sigma(v)&\text{if }v\in A\\
v&\text{if }v\in\domAO
\end{cases}\]
A set of orderings $\mathcal C$ is said to \emph{canonize} $R$ if, for any $\sigma,\tau\in \mathcal C$, $R^\sigma = R^\tau$.
From now on, we only consider orderings consistent with the coloring of the structure at hand. 
Moreover, an ordering $\sigma$ is \emph{definable in $\mathcal L$} if there is a $\mathcal L$-formula such that $\varphi(\mathfrak A) = \graph(\sigma)$.
It is easy to show that, if $R$ and $\sigma$ are definable in $\mathcal L$, $R^\sigma$ is definable as well (for any logic $\mathcal L$ extending $\FPC$).

\Cref{algo:can_procedure} presents the canonization procedure used in the context of $\CPT$ in\cite{pakusaLinear2015}.
\begin{algorithm}
    \SetKwInOut{Input}{Input}
    \SetKwInOut{Output}{Output}
    \Input{$\mathfrak A = (A,E,\precAB,\PhiAB)$ a structure with Abelian colors}
    \Output{A numerical relation $E^<$ isomorphic to $E$}
    \BlankLine
    Find, for each $i\le m$, a canonical set $\mathcal O(A_i)$ of orderings of $A_i$\;\label{algo:can_procedure_l1}
    $\mathcal C := \prod_{i = 1}^m \mathcal O(A_i)$\;\label{algo:can_procedure_l2}
    \For{$(i,j)\in [m]^2$}{
        $E_{i,j}^<$, the smallest lexicographical encoding of $E_{i,j}$ which is compatible with $\mathcal C$, i.e. $\exists \sigma\in \mathcal C, E_{i,j}^\sigma = E_{i,j}^<$ \;\label{algo:can_procedure_l4}
        $\mathcal C\gets \{ \sigma\in \mathcal C, E_{i,j}^\sigma = E_{i,j}^<\}$\;\label{algo:can_procedure_l5}
    }
    \Return{$E^< := \bigcup_{i,j} E_{i,j}^<$}
    \vspace{0.5em}
    \caption{canonisation procedure}
    \label{algo:can_procedure}
\end{algorithm}
As a first remark, note that for this canonization procedure to be complete, we should also provide relations $(\precAB)^<$ and $\Phi^<$. We will actually see in \cref{lem:OAi_can_Phi} that the initial value of $\mathcal C$ already canonizes $\PhiAB$, and because by construction, all orderings in $\mathcal C$ are consistent with $c$, $\mathcal C$ canonizes $\precAB$ as well.

The structure of this algorithm in and of itself is easily definable in $\FPC$: the only control-flow mechanism is a for-loop over an ordered domain, which can obviously be implemented in $\FPC$. On the other hand, it is not obvious how to represent sets of orderings, and it is precisely in how this is achieved that we depart from~\cite{pakusaLinear2015}.
Before delving into this question, we show that the construction of sets $\mathcal O(A_i)$ on line 1 of \cref{algo:can_procedure} is $\FPC$-definable.
To ease reading, throughout this section, we fix a graph $\mathfrak A$ with abelian colors. 
\begin{lemma}
    \label{lem:local_labellings}\AP
    There is an $\FPC$-formula $\intro *\mapFP(\lambda,x,y,\mu)$ such that, for $i\le m$ and $a\in \ldomA_i$, $\mapFP(\mathfrak A,i,a)$ defines an ordering of $\ldomA_i$ (that is, a bijection between $\ldomA_i$ and $\ldomAO_i$)
\end{lemma}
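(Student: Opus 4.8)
The plan is to read the desired ordering directly off the explicit enumeration $\PhiAB$, using the regularity of the action of each color group that is guaranteed by \cref{lem:regular_groups}. Fix a color $i \le \mAB$ and an \emph{anchor} $a \in A_i$. Since $\Gamma_i$ acts faithfully and transitively on $A_i$ and is abelian, its action is regular, so the evaluation map $g \mapsto g\cdot a$ is a bijection from $\Gamma_i$ onto $A_i$; in particular, for every $b \in A_i$ there is a unique $g \in \Gamma_i$ with $g\cdot a = b$.

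The second ingredient is that $\PhiAB$ also provides an ordering of the group $\Gamma_i$ itself. By \cref{lem:regular_groups} we have $|\Gamma_i| = |A_i|$, and since the $|A_i|$ permutations $\perm(\PhiAB(\mathfrak A, i, j))$ for $j < |A_i|$ exhaust the $|A_i|$-element set $\Gamma_i$, they must be pairwise distinct; hence $j \mapsto \perm(\PhiAB(\mathfrak A, i, j))$ is a bijection between $\{0, \dots, |A_i| - 1\}$ and $\Gamma_i$. Composing this with the evaluation bijection above, the map sending $b \in A_i$ to the unique index $j$ with $\perm(\PhiAB(\mathfrak A, i, j))(a) = b$ --- equivalently, with $(i, j, a, b) \in \PhiAB^{\mathfrak A}$ --- is a bijection from $A_i$ onto $\{0, \dots, |A_i| - 1\}$.

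It remains to translate this into a formula and to shift the codomain from $\{0, \dots, |A_i|-1\}$ to $A_i^<$. The index $j$ is characterized in first-order logic by $\PhiAB(\lambda, j, x, y)$, and its uniqueness is exactly the regularity statement. To reach $A_i^<$, I add the offset $n_{<i}$ equal to the number of domain elements lying in color classes strictly before $i$; this is $\FPC$-definable by counting against the pre-order $\precAB$, and it is precisely the least element of $A_i^<$, since every ordering consistent with $c$ maps the earlier classes bijectively onto $\{0, \dots, n_{<i}-1\}$, whence $A_i^< = \{n_{<i}, \dots, n_{<i} + |A_i| - 1\}$. Writing $\mathsf{off}(\lambda, \nu)$ for the $\FPC$-formula stating that $\nu = n_{<\lambda}$, I set
\[
\mapFP(\lambda, x, y, \mu) := \exists j\, \exists \nu\ \bigl( \PhiAB(\lambda, j, x, y) \land \mathsf{off}(\lambda, \nu) \land \mu = j + \nu \bigr),
\]
where addition is one of the $\FPC$-definable arithmetic operations on the numerical sort. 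For each $i \le \mAB$ and $a \in A_i$, the relation $\mapFP(\mathfrak A, i, a)$ then contains exactly one pair $(b, \mu)$ for each $b \in A_i$ and is the graph of a bijection $A_i \to A_i^<$, as required.

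The conceptual core is light --- it is entirely carried by \cref{lem:regular_groups} --- so the only points needing care are bookkeeping. The main one is verifying that the offset places the image exactly on $A_i^<$, which reduces to the $\sigma$-independent characterization of $A_i^<$ recalled above; the remaining check, that the enumeration $j \mapsto \perm(\PhiAB(\mathfrak A, i, j))$ is injective, is forced by the equality $\Gamma_i = \{\perm(\PhiAB(\mathfrak A, i, j)) : j < |A_i|\}$ together with $|\Gamma_i| = |A_i|$. I expect no substantial obstacle beyond making the counting term $\mathsf{off}$ precise.
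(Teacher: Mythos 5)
Your proposal is correct and follows essentially the same route as the paper: fix an anchor $a\in A_i$, use the regularity guaranteed by \cref{lem:regular_groups} to read the unique index $j$ with $\gamma^i_j(a)=b$ directly off $\PhiAB$, and shift by the $\FPC$-definable offset $\sum_{\lambda'<i}|A_{\lambda'}|$ to land in $\ldomAO_i$ --- your formula is the paper's appendix formula up to renaming of bound variables. The only addition is your explicit check that $j\mapsto\perm(\PhiAB(\mathfrak A,i,j))$ is injective, a detail the paper leaves implicit.
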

\begin{proof}
    First, \cref{lem:regular_groups} ensures that the action of $\lGa_i$ on $\ldomA_i$ is regular.
    Thus, for any fixed $a\in \ldomA_i$,
    \begin{equation}
        \label{eqn:local_ordering_def}
        \gamma^i_1(a) < \gamma^i_2(a) < \dots < \gamma^i_{|A_i|}(a)
    \end{equation}
    defines a linear ordering on $\ldomA_i$.
    This ordering corresponds to a bijection between $\ldomA_i$ and $\ldomAO_i$ whose graph is easily definable in $\FPC$, using $\PhiAB$ and some basic arithmetic. The formal definition of the formula $\mapFP$ is provided in \cref{app:local_labellings}.
\end{proof}

\AP We denote $\intro *\mapMa^i_a$ the ordering defined by $\mapFP(\mathfrak A, i, a)$.
By definition of $\mapMa$, with the notations introduced above, we have, for any $a,b\in \ldomA_i$ and $\mu\in\ldomAO_i$,
\begin{equation}
    \label{eqn:mapia_defining_prop}
    \mapMa^i_a(b) = \mu \iff \gamma^i_\mu(a) = b.
\end{equation}

We follow Pakusa's notation and denote $\intro* \lLaCos(\ldomA_i)$ the set of all $\mapMa^i_a$, for $a\in A_i$.
$\mathcal O(A_i)$ is not an arbitrary set of orderings:
\begin{lemma}\label{lem:OAi_coset}
    For any $i\le m$ and $a\in \ldomA_i$, $\lLaCos(\ldomA_i) = \mapMa^i_a \lGa_i$.
\end{lemma}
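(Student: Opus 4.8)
The plan is to exhibit, for each $b\in A_i$, the unique group element of $\Gamma_i$ that realises the labelling $\mathsf{map}^i_b$ as a composite $\mathsf{map}^i_a\circ g$, and conversely to show that every such composite is some $\mathsf{map}^i_b$. First recall what the coset $\mathsf{map}^i_a\Gamma_i$ means: since $\mathsf{map}^i_a$ is a bijection $A_i\to A_i^<$ and the elements of $\Gamma_i$ are permutations of $A_i$, composing a group element $\gamma$ on the right (applying $\gamma$ first, following the convention fixed in \cref{lem:perm_equalities}) yields another ordering of $A_i$, so $\mathsf{map}^i_a\Gamma_i=\{\mathsf{map}^i_a\circ\gamma\tq\gamma\in\Gamma_i\}$.

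The core step is the following identity. Fix $a$ and an arbitrary $b\in A_i$. By regularity of the $\Gamma_i$-action (\cref{lem:regular_groups}) there is a unique $g\in\Gamma_i$ with $g(b)=a$, and I claim $\mathsf{map}^i_b=\mathsf{map}^i_a\circ g$. To verify this I evaluate both sides at an arbitrary $c\in A_i$ through the defining property \cref{eqn:mapia_defining_prop}: on one hand $\mathsf{map}^i_b(c)=\mu$ iff $\gamma^i_\mu(b)=c$; on the other $(\mathsf{map}^i_a\circ g)(c)=\mu$ iff $\gamma^i_\mu(a)=g(c)$. The crux is that, whenever $\gamma^i_\mu(b)=c$, commutativity of $\Gamma_i$ gives $g(c)=g(\gamma^i_\mu(b))=\gamma^i_\mu(g(b))=\gamma^i_\mu(a)$, so the two conditions pick out the same $\mu$. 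This is exactly where the abelian hypothesis enters, and it is the only nontrivial point of the argument.

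The two inclusions then follow immediately. Letting $b$ range over $A_i$ shows every $\mathsf{map}^i_b$ is a composite $\mathsf{map}^i_a\circ g$, giving $\mathcal O(A_i)\subseteq\mathsf{map}^i_a\Gamma_i$; conversely, for any $\gamma\in\Gamma_i$ I set $b:=\gamma^{-1}(a)$ so that $\gamma$ is the unique element sending $b$ to $a$, and the identity above yields $\mathsf{map}^i_a\circ\gamma=\mathsf{map}^i_b\in\mathcal O(A_i)$, giving the reverse inclusion. As a consistency check both sets have exactly $|A_i|$ elements, since $|\mathsf{map}^i_a\Gamma_i|=|\Gamma_i|=|A_i|$ by regularity and $b\mapsto\mathsf{map}^i_b$ is injective by left-cancellation of the bijection $\mathsf{map}^i_a$. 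I expect the only delicate part to be bookkeeping the order of composition and pinpointing precisely where commutativity is used; everything else is a formal consequence of regularity and \cref{eqn:mapia_defining_prop}, and no appeal to the $\FP+\ord$ machinery is required, as this is a purely group-theoretic statement about the labelling coset $\mathcal O(A_i)$.
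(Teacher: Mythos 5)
Your proof is correct and follows essentially the same route as the paper: your key identity $\mapMa^i_b = \mapMa^i_a\circ g$ for the unique $g$ with $g(b)=a$ is exactly the paper's identity $\mapMa^i_a\gamma^i_j = \mapMa^i_{(\gamma^i_j)^{-1}\cdot a}$ read in reverse, established in the same way from \cref{eqn:mapia_defining_prop} plus commutativity of $\lGa_i$, with regularity/transitivity then giving the two inclusions just as in the paper.
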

\begin{proof}
    Let $a\in \ldomA_i$, and $\gamma^i_j\in \lGa_i$. Then, for any $b\in \ldomA_i$,
    \begin{align*}
        \mapMa^i_a\gamma^i_j\cdot b = \mu &\iff \gamma^i_\mu\cdot a = \gamma^i_j\cdot b &\text{by \cref{eqn:mapia_defining_prop}}\\
        &\iff ((\gamma^i_j)^{-1}\gamma^i_\mu)\cdot a = b&\\
        &\iff \gamma^i_\mu\cdot ((\gamma^i_j)^{-1}\cdot a) = b&\text{($\lGa_i$ is abelian)}\\
        &\iff \mapMa^i_{(\gamma^i_j)^{-1}\cdot a}\cdot b = \mu&
    \end{align*}
Therefore, $\lLaCos(\ldomA_i)$ is closed by multiplication on the right (i.e. precomposition) by elements of $\lGa_i$, or, said differently, $\mapMa^i_a \lGa_i \subseteq \lLaCos(\ldomA_i)$.
The transitivity of $\lGa_i$ yields the other inclusion: for $a,b\in \ldomA_i$, let $\gamma^i_j$ be the element\footnote{Recall that $\lGa_i$ is regular, and thus this element is unique} such that $\gamma^i_j\cdot b = a$.
Then, $\mapMa^i_a\gamma^i_j = \mapMa^i_{(\gamma^i_j)^{-1}\cdot a} = \mapMa^i_{b}$.
\end{proof}
We now show that $\mathcal O(A_i)$ canonizes $\PhiAB(\mathfrak A,i)$:
\begin{lemma}
    \label{lem:OAi_can_Phi}
    For any $a,b\in \ldomA_i$, the encodings of $\PhiAB(\mathfrak A,i)$ relative to $\mapMa^i_a$ and $\mapMa^i_b$ are equal.
\end{lemma}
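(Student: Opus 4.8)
We must show that for any $a, b \in A_i$, the encodings $\Phi(\mathfrak A, i)^{m^i_a}$ and $\Phi(\mathfrak A, i)^{m^i_b}$ coincide. Recall from Lemma~\ref{lem:OAi_coset} that $\mathcal O(A_i) = m^i_a \Gamma_i$, so any two orderings in $\mathcal O(A_i)$ differ by a right-composition with an element of $\Gamma_i$; in particular $m^i_b = m^i_a \gamma$ for some $\gamma \in \Gamma_i$. The plan is to reduce the equality of the two encodings to the fact that right-composition by an element of $\Gamma_i$ merely permutes the enumeration of $\Gamma_i$ itself, leaving the relation $\Phi(\mathfrak A, i)$ invariant as a numerical object.

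**The approach.** I would unwind the definition of the relative encoding. Recall $\Phi(\mathfrak A, i) = \{(j, s, t) \tq (i, j, s, t) \in \Phi^{\mathfrak A}\}$, where for each $j < |A_i|$ the slice $\Phi(\mathfrak A, i, j)$ is the graph of $\gamma^i_j$. Applying an ordering $\sigma = m^i_a$ to encode this relation replaces each domain element $s$ by $\sigma(s)$ and each $t$ by $\sigma(t)$, while leaving the numerical coordinate $j$ untouched. So $\Phi(\mathfrak A, i)^{m^i_a}$ records, for each $j$, the numerical graph $\{(\sigma(s), \sigma(t)) \tq \gamma^i_j(s) = t\}$, which is precisely the graph of the \emph{conjugate} permutation $\sigma \gamma^i_j \sigma^{-1}$ viewed as a permutation of $A^<_i$. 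The key computation is therefore to compare $m^i_a \gamma^i_j (m^i_a)^{-1}$ with $m^i_b \gamma^i_j (m^i_b)^{-1}$ for each fixed $j$.

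**The main step.** Write $m^i_b = m^i_a \gamma$ with $\gamma \in \Gamma_i$, using Lemma~\ref{lem:OAi_coset}. Then
\[
m^i_b \, \gamma^i_j \, (m^i_b)^{-1} = m^i_a \, \gamma \, \gamma^i_j \, \gamma^{-1} \, (m^i_a)^{-1} = m^i_a \, \gamma^i_j \, (m^i_a)^{-1},
\]
where the last equality uses that $\Gamma_i$ is \emph{abelian}, so $\gamma$ and $\gamma^i_j$ commute and $\gamma \gamma^i_j \gamma^{-1} = \gamma^i_j$. This shows that for every $j$ the $j$-th numerical slice of the two encodings agree. Since the numerical index $j$ is preserved verbatim by the encoding operation, summing over all $j < |A_i|$ yields $\Phi(\mathfrak A, i)^{m^i_a} = \Phi(\mathfrak A, i)^{m^i_b}$, which is the claim.

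**Where the content lies.** The genuine obstacle — indeed the only place where a nontrivial hypothesis is used — is the commutation step, which relies essentially on $\Gamma_i$ being abelian; for a general transitive group the conjugate $\gamma \gamma^i_j \gamma^{-1}$ would be a different element of $\Gamma_i$ and the slices would be permuted among themselves rather than fixed, so the encodings would agree only up to a relabeling of the index $j$, not on the nose. The remaining work is bookkeeping: checking that the encoding operation acts slice-wise and leaves the numerical coordinate fixed, and translating the regularity of the action (Lemma~\ref{lem:regular_groups}) into the fact that each slice really is the graph of a well-defined permutation of $A^<_i$. I would present the computation in a single display as above, after a sentence reducing the relative encoding of $\Phi(\mathfrak A, i)$ to the conjugates $\sigma \gamma^i_j \sigma^{-1}$.
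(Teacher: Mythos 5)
Your proof is correct and follows essentially the same route as the paper's: unwind the relative encoding of each slice $\PhiAB(\mathfrak A,i,j)$ as the graph of the conjugate $\sigma\gamma^i_j\sigma^{-1}$, then use \cref{lem:OAi_coset} to write the two orderings as differing by an element of $\lGa_i$ and invoke commutativity to conclude that the conjugates coincide. The only cosmetic difference is that you factor $\mapMa^i_b = \mapMa^i_a\gamma$ up front, whereas the paper inserts $(\mapMa^i_a)^{-1}\mapMa^i_b(\mapMa^i_b)^{-1}$ mid-computation --- these are the same algebraic step.
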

\begin{proof}
Fix $i\le m,j\in\range{\ldomA_i}, a\in \ldomA_i$. Then, the following holds:
\begin{align*}
    \PhiAB^{\mapMa^i_a}(i,j) &= \{ (\mapMa^i_a\cdot b,\mapMa^i_a\gamma^i_j\cdot b), b\in \ldomA_i\}\\
    &= \{(\alpha, \mapMa^i_a \gamma^i_j (\mapMa^i_a)^{-1}\cdot\alpha), \alpha\in \ldomAO_i\}
\end{align*}
In words, encoding $\lGa_i$ relative to $\mapMa^i_a$ entails to enumerate the elements of $\lGa_i$ conjugated by $\mapMa^i_a$.
It happens that the conjugation actions of $\mapMa^i_a$ and $\mapMa^i_b$ on $\lGa_i$ coincide:
\begin{align*}
    (\gamma^i_j)^{\mapMa^i_a} &= \mapMa^i_a \gamma^i_j  (\mapMa^i_a)^{-1}\\
    &= \mapMa^i_a \gamma^i_j(\mapMa^i_a)^{-1}\mapMa^i_b(\mapMa^i_b)^{-1}\\
    \text{Since $\lGa_i$ is abelian,}\quad&=\mapMa^i_a(\mapMa^i_a)^{-1}\mapMa^i_b\gamma^i_j(\mapMa^i_b)^{-1}\\
&= \mapMa^i_b \gamma^i_j (\mapMa^i_b)^{-1}\\
&= (\gamma^i_j)^{\mapMa^i_b}
\end{align*}
Note that we have used the fact that $(\mapMa^i_a)^{-1}\mapMa^i_b\in \lGa_i$, which is a direct consequence of \cref{lem:OAi_coset}.
\end{proof}

It is now time to discuss the representation of sets of orderings. Indeed, \cref{algo:can_procedure} shows that the proof of \cref{thm:ord_abelian_colors} reduces to the existence of a $\FP + \ord$-definable representation of sets of orderings which enables the four following operations:
\begin{itemize}
\item The definition of $\mathcal C_0 := \prod_{i = 1}^m \lLaCos(\ldomA_i)$, as on line 2 of \cref{algo:can_procedure}.
\item Given $i,a$ and $j,b$, the definition of the set $\mathcal C^{i,j}_{a,b}$ of orderings in $\mathcal C_0$ which yield the same encoding of $\slE_{i,j}$ as $\mapMa^i_a\oplus\mapMa^j_b$.\footnote{Where, $f\oplus g$ is the minimal common extension of $f$ and $g$ (if such an extension exists).}
\item Given $\mathcal C,\mathcal C'$, the definition of $\mathcal C\cap \mathcal C'$.
\item Given $\mathcal C$, checking if $\mathcal C = \emptyset$.
\end{itemize}
Let us show how these operations enable the definition of line 4 of \cref{algo:can_procedure}.
Given $\mathcal C$, define a binary relation $\mathrm{Comp}_{\mathcal C}$ which holds on $(a,b)$ iff $\mapMa^i_a\oplus\mapMa^j_b = \sigma_{\restriction A_i\cup A_j}$ for some $\sigma\in \mathcal C$:
\[\mathrm{Comp}_{\mathcal C}(x,y) := (\mathcal C^{i,j}_{x,y}\cap \mathcal C)\ne \emptyset\]
Then, find an pair $(a,b)\in \mathrm{Comp}_{\mathcal C}$ which yields the minimal encoding of $E_{i,j}$:
\[\mathrm{min}_{\mathcal C}(x,y)\! :=\!\begin{landcases}\mathrm{Comp}_{\mathcal C}(x,y)\\
 \forall x',y' \begin{lorcases} \lnot\mathrm{Comp}_{\mathcal C}(x',y')\\
    (E_{i,j})^{\mapMa^i_x\oplus\mapMa^j_y}\!\le\!(E_{i,j})^{\mapMa^i_{x'}\oplus\mapMa^j_{y'}}
 \end{lorcases}
\end{landcases}
 \]
 (where $\le$ denotes the lexicographical comparison of numerical relations, which is obviously $\FPC$-definable).
 Because all pairs $(a,b)\in\mathrm{min}_{\mathcal C}(\mathfrak A)$ yield the same encoding of $\slE_{i,j}$, we can define
 \[E^<_{i,j}(\mu,\nu) := \exists x,y,s,t, \begin{landcases}
    \mathrm{min}_{\mathcal C}(x,y)\\
    \mapFP(i,x,s,\mu)\land
    \mapFP(j,y,t,\nu)\\
    E(s,t)
 \end{landcases}
 \]
The new set of orderings $\mathcal C'$ defined on line 5 is then $\mathcal C\cap \mathcal C^{i,j}_{a,b}$ for any $(a,b)\in \mathrm{min}_{\mathcal C}(\mathfrak A)$. 
The definition of line 2 is a direct application of the definability of $\mathcal C_0$.

In order to define such a representation of sets of orderings, let us first remark that the sets under consideration along the run of \cref{algo:can_procedure} have a strong structural property: they are of the form $\sigma\Lambda$, for some $\sigma\in \mathcal C_0$, and $\Lambda$ a subgroup of $\intro *\gGa := \prod_{i = 1}^m\lGa_i$. We call such sets \emph{labeling cosets}.
The fact that those sets are all labeling cosets is a direct consequence of the three following lemmas:
\begin{lemma}\label{corol:pi_gamma}
    For any $\pi\in \mathcal C_0$, we have $\mathcal C_0 = \pi\gGa$.
\end{lemma}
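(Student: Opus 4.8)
\textbf{Proof plan for \cref{corol:pi_gamma}.}

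The plan is to show that $\mathcal C_0$ is exactly one right coset of the group $\gGa = \prod_{i=1}^m \lGa_i$, namely $\pi\gGa$ for any chosen $\pi \in \mathcal C_0$. Recall that, by definition, $\mathcal C_0 = \prod_{i=1}^m \lLaCos(\ldomA_i)$, where the product is understood as the set of orderings $\sigma$ of $A$ consistent with $c$ whose restriction to each $A_i$ equals some element of $\lLaCos(\ldomA_i)$. The key structural input is \cref{lem:OAi_coset}, which states that for each color-class, $\lLaCos(\ldomA_i) = \mapMa^i_a \lGa_i$ for any $a \in \ldomA_i$; that is, each local set of orderings is itself a single right coset of the local group $\lGa_i$.

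First I would fix $\pi \in \mathcal C_0$ and decompose it as $\pi = \prod_{i=1}^m \pi_i$, where each $\pi_i := \pi_{\restriction A_i}$ is an ordering of $A_i$ consistent with $c$, using the decomposition of orderings of $A$ into products of local orderings noted just before \cref{lem:local_labellings}. By definition of $\mathcal C_0$, each $\pi_i$ lies in $\lLaCos(\ldomA_i)$. Then, for the inclusion $\pi\gGa \subseteq \mathcal C_0$, I would take an arbitrary $\lambda = \prod_i \lambda_i \in \gGa$ with $\lambda_i \in \lGa_i$, and observe that $(\pi\lambda)_{\restriction A_i} = \pi_i \lambda_i$. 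Since $\pi_i \in \lLaCos(\ldomA_i)$ and $\lLaCos(\ldomA_i)$ is closed under right multiplication by $\lGa_i$ (the heart of \cref{lem:OAi_coset}), we get $\pi_i\lambda_i \in \lLaCos(\ldomA_i)$, whence $\pi\lambda \in \mathcal C_0$.

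For the reverse inclusion $\mathcal C_0 \subseteq \pi\gGa$, I would take an arbitrary $\sigma = \prod_i \sigma_i \in \mathcal C_0$ with $\sigma_i \in \lLaCos(\ldomA_i)$. By \cref{lem:OAi_coset} both $\pi_i$ and $\sigma_i$ lie in the single coset $\mapMa^i_a\lGa_i$, so $\pi_i^{-1}\sigma_i \in \lGa_i$; setting $\lambda_i := \pi_i^{-1}\sigma_i$ and $\lambda := \prod_i \lambda_i \in \gGa$, we obtain $\sigma = \pi\lambda \in \pi\gGa$. Combining both inclusions gives $\mathcal C_0 = \pi\gGa$, and since $\pi$ was arbitrary this holds for every $\pi \in \mathcal C_0$.

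The argument is essentially bookkeeping once the local statement \cref{lem:OAi_coset} is in hand, so I do not expect a genuine obstacle. The only point requiring care is to make the identification of the global product $\prod_i \sigma_i$ with an ordering of all of $A$ fully rigorous --- in particular, verifying that composing these per-class orderings with a global group element $\lambda \in \gGa$ (which acts block-diagonally, fixing each color-class setwise since each $\lGa_i \le \Sym(A_i)$) indeed yields another consistent ordering of $A$, and that the coset structure is preserved under this block decomposition. This is exactly the reason the product $\gGa = \prod_i \lGa_i$ must be taken as a group of permutations each supported on a single color-class, so that right multiplication respects the partition of $A$ into color-classes.
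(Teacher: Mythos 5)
Your proof is correct and takes essentially the same route as the paper: the paper's proof is the single line ``this is a direct consequence of \cref{lem:OAi_coset}'', and your two-inclusion, blockwise argument (decomposing orderings along color-classes and applying the local coset identity $\lLaCos(\ldomA_i) = \mapMa^i_a\lGa_i$ in each block) is exactly the expansion of that one-liner.
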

\begin{proof} 
    This is a direct consequence of \cref{lem:OAi_coset}
\end{proof}
\begin{lemma}
    For any $i,j$, there is a group $\overline\Delta_{i,j}\le\gGa$ such that $\mathcal C_{a,b}^{i,j} = \sigma\overline\Delta_{i,j}$ for any $\sigma\in\mathcal C_0$ s.t. $\sigma_{\restriction A_i\cup A_j} = \mapMa^i_a\oplus\mapMa^j_b$.
\end{lemma}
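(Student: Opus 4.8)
The plan is to exhibit $\overline\Delta_{i,j}$ as the preimage in $\gGa$ of the setwise stabiliser of the bipartite edge relation $\slE_{i,j}$ inside $\Sym(\ldomA_i\cup\ldomA_j)$, and then to read the coset structure of $\mathcal C^{i,j}_{a,b}$ directly off \cref{corol:pi_gamma}.

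First I would record the elementary fact that, because $\slE_{i,j}$ only contains pairs from $\ldomA_i\times\ldomA_j\cup\ldomA_j\times\ldomA_i$, the encoding $(\slE_{i,j})^\tau$ depends solely on $\tau_{\restriction \ldomA_i\cup\ldomA_j}$. Hence, writing $\mu:=\mapMa^i_a\oplus\mapMa^j_b$, the set $\mathcal C^{i,j}_{a,b}$ is exactly $\{\tau\in\mathcal C_0 : (\slE_{i,j})^\tau=(\slE_{i,j})^{\mu}\}$. I would then fix a witness $\sigma\in\mathcal C_0$ with $\sigma_{\restriction\ldomA_i\cup\ldomA_j}=\mu$ and invoke \cref{corol:pi_gamma} to write every $\tau\in\mathcal C_0$ uniquely as $\tau=\sigma g$, with $g=(g_1,\dots,g_{\mAB})\in\gGa$ and $g_k\in\lGa_k$. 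Since each $\mathcal O(\ldomA_k)$ is a right coset of $\lGa_k$ acting by precomposition (\cref{lem:OAi_coset}), this $g$ restricts to $g_i$ on $\ldomA_i$ and to $g_j$ on $\ldomA_j$, so that $\tau_{\restriction\ldomA_i\cup\ldomA_j}=\mu\circ(g_i\oplus g_j)$.

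The computational heart is a one-line calculation. Let $\pi_{i,j}:\gGa\to\Sym(\ldomA_i\cup\ldomA_j)$ be the block projection $g\mapsto g_i\oplus g_j$; it is a group homomorphism because $\gGa$ acts block-wise. Applying the identity $R^{\mu\circ\nu}=(\nu^*(R))^{\mu}$ with $\nu:=g_i\oplus g_j$ gives $(\slE_{i,j})^{\sigma g}=\bigl((g_i\oplus g_j)^*(\slE_{i,j})\bigr)^{\mu}$. As $\mu$ is a bijection, the map $R\mapsto R^{\mu}$ is injective, so this equals $(\slE_{i,j})^{\mu}$ iff $(g_i\oplus g_j)^*(\slE_{i,j})=\slE_{i,j}$, i.e. iff $\pi_{i,j}(g)\in\Aut(\slE_{i,j})$. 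I would therefore set $\overline\Delta_{i,j}:=\pi_{i,j}^{-1}\bigl(\Aut(\slE_{i,j})\bigr)$, which is a subgroup of $\gGa$ as the preimage of a subgroup under a homomorphism, and which depends only on $i,j$ (not on $a,b$). The calculation says exactly that $\tau=\sigma g\in\mathcal C^{i,j}_{a,b}$ iff $g\in\overline\Delta_{i,j}$, whence $\mathcal C^{i,j}_{a,b}=\sigma\overline\Delta_{i,j}$. Independence of the chosen witness $\sigma$ is then automatic, since any two admissible witnesses differ by an element of $\gGa$ that is trivial on $\ldomA_i\cup\ldomA_j$, and such an element lies in $\overline\Delta_{i,j}$.

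The one place demanding care — and the step I expect to be the only real obstacle — is keeping the composition conventions straight. Because \cref{lem:OAi_coset} presents $\mathcal O(\ldomA_i)$ as the \emph{right} coset $\mapMa^i_a\lGa_i$ (the group acting by precomposition), the decomposition $\tau=\sigma g$ restricts to $\mu\circ(g_i\oplus g_j)$ rather than $(g_i\oplus g_j)\circ\mu$; it is precisely this ordering that turns $\mathcal C^{i,j}_{a,b}$ into the left coset $\sigma\overline\Delta_{i,j}$ of the statement. The remaining ingredients — functoriality of $R\mapsto R^\tau$ and the fact that postcomposing with a fixed bijection preserves equality of relations — are routine.
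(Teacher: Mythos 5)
Your proof is correct and takes essentially the same route as the paper: your group $\pi_{i,j}^{-1}(\Aut(E_{i,j}))$ coincides with the paper's explicit choice $(\Gamma_i\Gamma_j\cap\Aut(E_{i,j}))\cdot\prod_{\lambda\in\range{m}\setminus\{i,j\}}\Gamma_\lambda$, since an element of $\Gamma=\prod_k\Gamma_k$ lies in that preimage exactly when its $(i,j)$-block stabilizes $E_{i,j}$, the remaining blocks being unconstrained. The verification you spell out --- decomposing $\tau=\sigma g$ via the coset structure of $\mathcal C_0$ and checking that the encodings agree iff the block projection of $g$ is an automorphism of $E_{i,j}$ --- is precisely what the paper compresses into ``the proof is straight-forward.''
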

\begin{proof}
    Let $\overline\Delta_{i,j} = (\lGa_i\lGa_j\cap \Aut(E_{i,j}))\cdot\prod_{\lambda \in\range{m}\setminus\{i,j\}} \lGa_\lambda$, where
    $\Aut(E_{i,j})$ is the group of permutations of $\ldomA_i\cup\ldomA_j$ which stabilize $E_{i,j}$. The proof is straight-forward.
\end{proof}
\begin{lemma}
    Given two labeling cosets $ \sigma\Lambda, \tau\Lambda'$, $ \sigma\Lambda\cap \tau\Lambda'$ is either empty, or a labeling coset of $\Lambda\cap \Lambda'$.
\end{lemma}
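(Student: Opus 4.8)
The plan is to prove the standard fact that a coset intersection is either empty or a coset of the intersection subgroup, adapted to the setting of labeling cosets $\sigma\Lambda$ with $\sigma\in\mathcal C_0$ and $\Lambda\le\gGa$. First I would dispose of the easy case: if $\sigma\Lambda\cap\tau\Lambda'=\emptyset$, there is nothing to prove. So suppose the intersection is nonempty and pick some $\rho\in\sigma\Lambda\cap\tau\Lambda'$. The claim will be that $\sigma\Lambda\cap\tau\Lambda' = \rho(\Lambda\cap\Lambda')$.

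\emph{Key steps.} First I would observe that since $\rho\in\sigma\Lambda$ and $\sigma\Lambda$ is a left coset, we have $\rho\Lambda=\sigma\Lambda$ (two left cosets of the same subgroup either coincide or are disjoint, and they share $\rho$); similarly $\rho\Lambda'=\tau\Lambda'$. Thus it suffices to show $\rho\Lambda\cap\rho\Lambda'=\rho(\Lambda\cap\Lambda')$. For the inclusion $\supseteq$, any element of $\rho(\Lambda\cap\Lambda')$ has the form $\rho\delta$ with $\delta\in\Lambda$ and $\delta\in\Lambda'$, hence lies in both $\rho\Lambda$ and $\rho\Lambda'$. For $\subseteq$, take $\rho\delta_1=\rho\delta_2$ with $\delta_1\in\Lambda$, $\delta_2\in\Lambda'$; left-cancelling $\rho$ (valid since $\rho$ is a permutation, hence invertible in $\gGa\le\Sym(A)$) gives $\delta_1=\delta_2\in\Lambda\cap\Lambda'$, so the common element lies in $\rho(\Lambda\cap\Lambda')$. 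Here I am using that $\Lambda\cap\Lambda'$ is again a subgroup of $\gGa$, which is immediate from the fact that an intersection of subgroups is a subgroup.

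One subtlety I would be careful about is the bookkeeping between the ambient group $\gGa=\prod_i\lGa_i$ and the set $\mathcal C_0$ of orderings, which is itself a coset $\pi\gGa$ (by \cref{corol:pi_gamma}) rather than a subgroup. The leading factors $\sigma,\tau$ need not lie in $\gGa$, so $\sigma\Lambda$ is genuinely a coset in the larger permutation group containing $\mathcal C_0$, not inside $\gGa$. The only group-theoretic facts I actually need, however, are cancellation and the coset-partition property, both of which hold in any group in which $\sigma,\tau,\rho$ are invertible elements acting by left multiplication; since all these objects are permutations, this is unproblematic. I do not expect a genuine obstacle here: this is the classical argument that coset intersections are cosets of the intersection, and the \emph{labeling}-coset wrapper adds only the remark that $\rho$ remains a valid left-multiplier. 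The mild care required is simply to phrase ``coset of $\Lambda\cap\Lambda'$'' correctly, namely as $\rho(\Lambda\cap\Lambda')$ for the chosen witness $\rho$, and to note this is independent of the witness since any two witnesses differ by an element of $\Lambda\cap\Lambda'$.
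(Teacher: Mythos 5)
Your proof is correct and follows essentially the same route as the paper's: pick a witness $\rho$ in the nonempty intersection, observe $\sigma\Lambda=\rho\Lambda$ and $\tau\Lambda'=\rho\Lambda'$, and conclude $\rho\Lambda\cap\rho\Lambda'=\rho(\Lambda\cap\Lambda')$. One small caveat on your aside: there is no ``larger permutation group containing $\mathcal C_0$'' --- as the paper stresses, orderings $A\to\stroA$ cannot be composed with one another, so they form no group --- but this is harmless, since your argument only ever composes an ordering with permutations of $A$ and uses injectivity of $\rho$ for left-cancellation, both of which are unproblematic.
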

\begin{proof}
    Suppose $\mathcal C := \sigma\Lambda\cap \tau\Lambda'\ne\emptyset$ and let $\rho \in\mathcal C$. Then, $\sigma\Lambda = \rho\Lambda$, $\tau\Lambda' = \rho\Lambda'$ and $\mathcal C = \rho\Lambda \cap \rho\Lambda' = \rho(\Lambda\cap\Lambda')$.
\end{proof}
At this point, let us point out that the algorithm we are aiming to define is a special case of the canonical placement-coset algorithm defined by Babai and Luks in \cite[Section 3.2]{babaiCanonical1983}.
However, in the algorithmic context, a labeling coset $\sigma\Lambda$ can be represented by an arbitrary witness $\tau\in\sigma\Lambda$ and a generating set for $\Lambda$. Such an arbitrary choice is not isomorphism-invariant.
Here, the fact that $\gGa$ is abelian comes at play: in this context, any labeling coset $\sigma\Lambda$ is such that $\Lambda \normle \gGa$, and thus there is a morphism $m_\Lambda : \gGa\to X_\Lambda$ for some group $X_\Lambda$, such that $\ker(m_\Lambda) = \Lambda$.
In such a case, $m_\Lambda$ defines a bijection between cosets of $\Lambda$ and $\im(m_\Lambda)$.

This leads to a second issue with the unordered domain: while in the algorithmic context, a labeling coset $\sigma\Lambda$ is a coset of $\Lambda$ in a group $\mathcal G$ that contains $\Lambda$ as a subgroup, this is not the case here, as orderings (i.e. bijections from $A$ to $\domAO$), unlike \emph{re}orderings (i.e. bijections from $\domAO$ to $\domAO$) cannot be composed with one another.

To overcome this, we now show that we can define in $\FPC$ a representation $\varphi$ of orderings as permutations over a fixed domain ($A^T$ for some fixed type $T$). Then, setting $\mathcal G := \langle \varphi(\pi\gGa)\rangle$, we will show that for any labeling coset $\sigma\Lambda$ considered during the run of \cref{algo:can_procedure}, $\varphi(\sigma\Lambda)$ is a coset of a morphism-definable subgroup of $\mathcal G$. With a few encoding details, this will conclude our proof of \cref{thm:ord_abelian_colors}, as \cref{corol:morph_def_grp_intersection,lem:fpc_def_image_morphism} ensure that $\FP + \ord$ defines the intersection operation and checks the emptiness of morphism-definable labeling cosets, respectively.

\subsection{Orderings as permutations}
Fix some $\pi\in \prod_{i}\lLaCos(\ldomA_i)$, and let $\pi_i := \pi_{\restriction \ldomA_i}$.
Recall that, by \cref{corol:pi_gamma}, $\pi\gGa = \prod_i\lLaCos(\ldomA_i)$. 

Let us first give an intuition of our construction: suppose we were given a ``base'' ordering $f : A\to \stroA$. Then, there is a bijection mapping any $g : A \to \stroA$ to the permutation of $A$ that maps $f$ to $g$ through composition, that is, $f^{-1} g$.
Here, while we obviously do not have access to such a fixed \emph{single} ordering, for each color class $\ldomA_i$, we have a canonical \emph{family} $\pi_i\lGa_i$ of $|\ldomA_i|$  ``base'' orderings\footnote{Recall that~\cref{lem:OAi_coset} implies that $\pi_i\lGa_i = \lLaCos(\ldomA_i)$}, and therefore any $\sigma\in\pi_i\lGa_i$ can be mapped injectively to $\lGa_i^{\ldomA_i}$:
\begin{align*}
    \intro* \phiRepr_i : \pi_i\lGa_i&\to \gGa^{A}\\
    \sigma &\mapsto \left(b\mapsto \begin{cases}
        (\mapMa^i_b)^{-1}\sigma &\text{if }b\in \ldomA_i\\
        \Id&\text{otherwise}\end{cases}\right)
\intertext{This encoding is compatible with the morphism}
    \intro *\psiRepr_i: \lGa_i &\to \gGa^{A}\\
    \LaTeXgamma&\mapsto \left(b\mapsto\begin{cases}\LaTeXgamma&\text{if }b\in \ldomA_i\\
        \Id&\text{otherwise}
    \end{cases}\right)
\end{align*}
in the sense that $\forall \sigma\in\pi_i\lGa_i,\LaTeXgamma\in\lGa_i, \phiRepr_i(\sigma\LaTeXgamma)=\phiRepr_i(\sigma)\psiRepr_i(\LaTeXgamma)$. Note that this implies that, for any $\sigma,\tau\in\pi_i\lGa_i$,
\begin{equation}
    \label{eqn:local_repr_coset}
    \phiRepr_i(\sigma)^{-1}\phiRepr_i(\tau) = \psiRepr_i(\sigma^{-1}\tau)
\end{equation}
For any $\sigma\in\pi_i\lGa_i$ and $\LaTeXgamma\in\lGa_i$, $\phiRepr_i(\sigma)$ and $\psiRepr_i(\LaTeXgamma)$ are families of elements of $\gGa$ indexed by $A$, and given $a\in A$, we denote the $a$-component of $\phiRepr_i(\sigma)$ (resp. $\psiRepr_i(\LaTeXgamma)$) by $\phiRepr_i(\sigma)_a$ (resp. $\psiRepr_i(\LaTeXgamma)_a$).
Note that, while we have defined $\phiRepr_i$ and $\psiRepr_i$ to range over $\gGa^A$, their image is actually quite restricted: first, for any $a\in A$, $\phiRepr_i(\sigma)_a$ and $\psiRepr_i(\LaTeXgamma)_a$ are in $\lGa_i$. Moreover, all the non-trivial values of $\phiRepr_i(\sigma)_a$ and $\psiRepr_i(\LaTeXgamma)_a$ are reached for $a\in \ldomA_i$. That is, morally, $\phiRepr_i$ and $\psiRepr_i$ take values in $\lGa_i^{\ldomA_i}$.
However, providing a uniform codomain to all those functions is convenient, as it allows us to combine them easily into a ``global'' representation of $\pi\gGa$ within $\gGa^A$:
\begin{align*}
    \reintro*\phiRepr : \pi\gGa &\to \gGa^A\\
    \sigma&\mapsto \prod_{i =1}^m\phiRepr_i(\sigma_{\restriction A_i})\\
    \reintro*\psiRepr: \gGa &\to \gGa^A\\
    \LaTeXgamma &\mapsto \prod_{i = 1}^m \psiRepr_i(\LaTeXgamma_{\restriction A_i})
\end{align*}
$\phiRepr$ is compatible with $\psiRepr$, hence $\phiRepr(\pi\gGa)=\phiRepr(\pi)\psiRepr(\gGa)$ is a coset of $\psiRepr(\gGa)$. Moreover, as was the case for $\phiRepr_i$ and $\psiRepr_i$:
\begin{equation}
    \label{eqn:global_repr_coset}
    \phiRepr(\sigma)^{-1}\phiRepr(\tau) = \psiRepr(\sigma^{-1}\tau)
\end{equation}
One can also easily check that $\phiRepr_i,\psiRepr_i$ are injective for all $i\le m$, and thus so are $\phiRepr$ and $\psiRepr$.
As a side note, $\gGa^A$ is not \emph{exactly} a permutation group, but a product of permutation groups. However, we can use \cref{lem:enc_prod_groups} to represent $\gGa^A$ as a subgroup of $\Sym(A\times A)$. We will mostly keep this encoding nuance implicit.
$\phiRepr(\pi\gGa)$ is a coset of $\psiRepr(\gGa)$, and we will show that $\psiRepr(\gGa)$ is morphism-definable (as defined in \cref{def:definable_morphism}), which will enable the representation of its coset $\phiRepr(\pi\gGa)$ by defining, in $\FPC$:
\begin{itemize}
    \item A generating set for a group $\mathcalG$ that contains both $\phiRepr(\pi\gGa)$ and $\psiRepr(\gGa)$ (as subsets)
    \item A morphism $\minit : \mathcalG\to \Sym(\gOm)$ such that $\ker(\minit) = \psiRepr(\gGa)$
    \item A value $\vinit\in\Sym(\gOm)$ such that $\minit^{-1}(\vinit) = \phiRepr(\pi\gGa)$.
\end{itemize}
Because $\phiRepr(\sigma) = \prod_{i = 1}^m \phiRepr_i(\sigma_{\restriction \ldomA_i})$, and for each $\sigma\in\pi\gGa$, $\sigma_{\restriction \ldomA_i} \in \lGa_i$, we have 
\[\phiRepr(\pi\gGa) \subseteq \prod_{i = 1}^m \phiRepr_i(\pi_i\lGa_i)\le \langle \bigcup_{i = 1}^m \phiRepr_i(\pi_i\Gamma_i)\rangle\]
\AP Therefore, we set $\intro*\mathcalG := \langle\bigcup_{i = 1}^m\phiRepr_i(\pi_i\lGa_i)\rangle$.
\begin{lemma}
    \label{lem:fpc_def_genG}
    There is a $\FPC$ formula $\genG$ which defines a generating set for $\iota(\mathcalG)$.
\end{lemma}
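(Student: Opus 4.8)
The plan is to write $\genG$ explicitly, binding a colour parameter $i$ and a base point $a\in\ldomA_i$, so that it enumerates the graphs of the permutations $\iota(\phiRepr_i(\mapMa^i_a))$. First I would unfold the definition of $\mathcalG$: by \cref{lem:OAi_coset} we have $\pi_i\lGa_i = \lLaCos(\ldomA_i) = \{\mapMa^i_a \tq a\in \ldomA_i\}$, so that $\bigcup_{i=1}^m\phiRepr_i(\pi_i\lGa_i) = \{\phiRepr_i(\mapMa^i_a)\tq i\le m,\ a\in\ldomA_i\}$. Since $\iota$ is a group morphism, $\iota(\mathcalG) = \langle\,\iota(\bigcup_i\phiRepr_i(\pi_i\lGa_i))\,\rangle$, and it therefore suffices to define the finite set $\{\iota(\phiRepr_i(\mapMa^i_a))\tq i\le m,\ a\in\ldomA_i\}$ in $\FPC$.

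The heart of the matter is a formula for the $b$-indexed component of $\phiRepr_i(\mapMa^i_a)$, i.e. the element $g_b := [\phiRepr_i(\mapMa^i_a)]_b\in\gGa$, which equals $(\mapMa^i_b)^{-1}\mapMa^i_a$ for $b\in\ldomA_i$ and $\Id$ otherwise. By \cref{eqn:local_repr_coset} (together with \cref{lem:OAi_coset}) the former lies in $\lGa_i$, hence acts as a genuine permutation of $\ldomA_i$ and trivially off $\ldomA_i$; and the defining property \cref{eqn:mapia_defining_prop} shows that $((\mapMa^i_b)^{-1}\mapMa^i_a)(x)=x'$ is equivalent to $\mapMa^i_a(x)=\mapMa^i_b(x')$. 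Using the $\FPC$-formula $\mapFP$ of \cref{lem:local_labellings}, I would therefore set
\[\varphi_{\mathrm{loc}}(i,a,b,x,x') := \bigl(c(b)=i\land c(x)=i\land \exists\mu\,(\mapFP(i,a,x,\mu)\land\mapFP(i,b,x',\mu))\bigr)\lor\bigl(\lnot(c(b)=i\land c(x)=i)\land x'=x\bigr),\]
where $c(z)=i$ is the $\FPC$-definable predicate stating that $z$ lies in the $i$-th colour class (readable off $\precAB$ by counting). The first disjunct captures the non-trivial action of $g_b$ on $\ldomA_i$ when $b\in\ldomA_i$, while the second yields $x'=x$ in every remaining regime, matching both the trivial action of $\lGa_i$ outside $\ldomA_i$ and the trivial component for $b\notin\ldomA_i$.

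It then remains to assemble these components into a single permutation of $A\times A$ through the representation $\iota$ of \cref{lem:enc_prod_groups}, taken with the product indexed by the domain $A$ rather than by a numerical sort, so that $(g_b)_{b\in A}\in\gGa^A$ becomes the permutation $(b,x)\mapsto(b,g_b(x))$ of $A\times A$ (the construction of that lemma carries over verbatim, replacing the comparison $\vec\mu=\vec\nu_s$ by a domain-equality test). Concretely I would define
\[\genG(i,a,\ b,x,\ b',x') := b=b'\ \land\ \varphi_{\mathrm{loc}}(i,a,b,x,x'),\]
binding $(i,a)$ and reading $(b,x),(b',x')$ as the pre-image and image of a permutation of $A\times A$; restricting the binding to $a\in\ldomA_i$ (a trivial guard) ensures that only the intended generators arise. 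For every $i\le m$ and $a\in\ldomA_i$ this is exactly $\graph(\iota(\phiRepr_i(\mapMa^i_a)))$, so $\genG$ defines the required generating set, and since $\mapFP$, permutation inverse and composition (\cref{lem:perm_equalities}) and the product encoding (\cref{lem:enc_prod_groups}) are all $\FPC$-expressible, $\genG$ is an $\FPC$-formula. I expect the only delicate points to be purely organisational: verifying $\varphi_{\mathrm{loc}}$ on all four cases of $(b\in\ldomA_i,\,x\in\ldomA_i)$, and checking that the domain-indexed instance of \cref{lem:enc_prod_groups} genuinely realises $\gGa^A$ inside $\Sym(A\times A)$. The group-theoretic substance is entirely supplied by \cref{lem:OAi_coset} and \cref{eqn:local_repr_coset}.
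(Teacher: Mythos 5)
Your proposal is correct and takes essentially the same route as the paper: the paper's proof likewise reduces the lemma to exhibiting, for each colour $i$ and base point $a\in\ldomA_i$, an $\FPC$-formula binding $(i,a)$ that defines the graph of $\iota(\phiRepr_i(\mapMa^i_a))$, expressing the component $(\mapMa^i_{b})^{-1}\mapMa^i_{a}$ via the formula $\mapFP$ of \cref{lem:local_labellings} and defaulting to the identity elsewhere, exactly as your $\varphi_{\mathrm{loc}}$ does. If anything, your double guard $c(b)=i\land c(x)=i$ is slightly more careful than the paper's formula, whose non-trivial clause is guarded only by the colour of $x_s$, so that pairs with $b_s\notin\ldomA_i$ and $x_s\in\ldomA_i$ satisfy neither disjunct; with your guard (identity in all remaining cases) the defined relation is a total permutation of $A\times A$, as required.
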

\begin{proof}
    We remind the reader that we actually define a generating set for the group $\iota(\mathcalG)$. By definition of $\mathcalG$, it is enough to build a formula $\genG$ such that, for any $i,a$,
    \[\genG(\mathfrak A,i,a) = \graph(\iota(\phiRepr_i(\mapMa^i_a))).\] 
    Consider the following formula:
\[
    \intro* \genG(p_1p_2,b_sx_s,b_tx_t) := \begin{landcases}(b_s = b_t) \\
        \begin{lorcases}
        \begin{landcases}
            i(x_s) = p_1 \\ x_t = (\mapMa^{p_1}_{b_s})^{-1}\mapMa^{p_1}_{p_2}(x_s)
        \end{landcases}\\
        i(x_s)\ne {p_1} \land x_s = x_t
    \end{lorcases}
\end{landcases}
\]
In this formula, $p_1,p_2$ are the enumeration parameters of this generating set. Note that $p_1$ is numerical (and ranges over the indices of the color classes), while $p_2$ is a domain variable.
The pairs of variables $b_sx_s$ and $b_tx_t$ are used to represent permutations in $\Sym(A\times A)$ as in \cref{def:ord_operator}.
For any $i,a\in \oA\times A$, $\mathfrak A\models \genG(i,a,\vec s,\vec t)$ if $s_1 = t_1$ and $t_2 = (\mapMa^i_{s_1})^{-1}\cdot\mapMa^i_a(s_2)$, i.e., if $\vec t = \iota(\phiRepr_i(\mapMa^i_a))(\vec s)$, which yields the desired result.
\end{proof}
\begin{theorem}
    \label{thm:minit_vinit}
    There is an $\FPC$-definable morphism $\minit : \mathcalG \to \gGa^{A\times A}$ and an $\FPC$-definable value $\vinit\in \gGa^{A\times A}$, such that $\phiRepr(\pi\Gamma) = \phiRepr(\pi)\psiRepr(\gGa) = \{\lambda\in\mathcalG, \minit(\lambda) = \vinit\}$.
\end{theorem}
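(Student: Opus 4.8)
The plan is to exhibit $\psiRepr(\gGa)$ as the kernel of an explicit $\FPC$-definable morphism on $\mathcalG$, and then to check that the image under that morphism of the (non-definable) representative $\phiRepr(\pi)$ is nonetheless $\FPC$-definable. The starting observation is that $\gGa$ is abelian, hence so is $\gGa^A$, and therefore $\mathcalG\le\gGa^A$ is abelian; this commutativity is exactly what turns a componentwise \emph{difference} map into a group morphism. Recall two structural facts: every $\lambda=(\lambda_b)_{b\in A}\in\mathcalG$ has each component $\lambda_b$ in $\lGa_{c(b)}$ (its generators $\phiRepr_i(\mapMa^i_a)$ are supported on the single class $\ldomA_i$), and $\psiRepr(\gGa)$ is precisely the set of $\lambda\in\mathcalG$ that are constant on each color class, i.e. with $\lambda_a=\lambda_b$ whenever $c(a)=c(b)$. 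A morphism killing exactly $\psiRepr(\gGa)$ should therefore record, within each class, how far $\lambda$ is from being constant.

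Accordingly, I would define $\minit:\gGa^A\to\gGa^{A\times A}$ by
\[\minit(\lambda)_{(a,b)} := \begin{cases}\lambda_a\lambda_b^{-1}&\text{if }c(a)=c(b),\\ \Id&\text{otherwise.}\end{cases}\]
Commutativity of $\gGa$ makes this a morphism, since $\minit(\lambda\mu)_{(a,b)}=\lambda_a\mu_a\mu_b^{-1}\lambda_b^{-1}=(\lambda_a\lambda_b^{-1})(\mu_a\mu_b^{-1})$. Its kernel is computed directly: $\minit(\lambda)=\Id$ means $\lambda$ is constant on every class, and, combined with $\lambda_b\in\lGa_{c(b)}$, this says exactly that $\lambda=\psiRepr(\delta)$ for the element $\delta\in\gGa$ whose action on $\ldomA_i$ is the common value of the $\lambda_b$ with $b\in\ldomA_i$; hence $\ker(\minit)\cap\mathcalG=\psiRepr(\gGa)$. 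Since $\psiRepr(\gGa)=\phiRepr(\pi)^{-1}\phiRepr(\pi\gGa)$ is a set of products of elements of $\mathcalG$, it is a genuine subgroup of $\mathcalG$, automatically normal as $\mathcalG$ is abelian, so $\psiRepr(\gGa)$ is morphism-definable from $\mathcalG$ in the sense of \cref{def:definable_morphism} (using the generating set of \cref{lem:fpc_def_genG}).

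Finally, set $\vinit:=\minit(\phiRepr(\pi))$. Since $\phiRepr(\pi)\in\mathcalG$, the standard kernel/coset identity yields
\[\{\lambda\in\mathcalG : \minit(\lambda)=\vinit\}=\phiRepr(\pi)\psiRepr(\gGa)=\phiRepr(\pi\gGa),\]
which is the desired equality. The only genuinely delicate point --- the main obstacle --- is that $\vinit$ must be $\FPC$-definable even though $\pi$, an \emph{arbitrary} base ordering, is not: the naive coset representative $\phiRepr(\pi)$ is not definable. This is exactly where the morphism earns its keep. For $c(a)=c(b)=i$, using $\phiRepr(\pi)_a=(\mapMa^i_a)^{-1}\pi_i$,
\[\bigl(\vinit\bigr)_{(a,b)}=\phiRepr(\pi)_a\,\phiRepr(\pi)_b^{-1}=(\mapMa^i_a)^{-1}\pi_i\pi_i^{-1}\mapMa^i_b=(\mapMa^i_a)^{-1}\mapMa^i_b,\]
so the undefinable factor $\pi_i$ cancels and $\vinit$ depends only on the $\FPC$-definable orderings $\mapMa^i_a$ of \cref{lem:local_labellings}; concretely $(x,y)\in\graph\bigl((\mapMa^i_a)^{-1}\mapMa^i_b\bigr)$ iff $\exists\mu,\ \mapFP(i,b,x,\mu)\land\mapFP(i,a,y,\mu)$. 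It then remains only to phrase $\minit$ and $\vinit$ as $\FPC$-formulae over the permutation encoding: each component $\lambda_a$ is recovered by first-order means from the graph of $\iota(\lambda)$ (\cref{lem:enc_prod_groups}), whence the graphs of $\iota(\minit(\lambda))$ and of $\iota(\vinit)$ are definable. This translation through the $\iota$-encoding of products of groups is routine bookkeeping.
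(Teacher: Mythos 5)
Your proposal is correct and is essentially the paper's own proof: you define the identical difference morphism $\minit(\lambda)_{a,b}=\lambda_a\lambda_b^{-1}$ (for $a,b$ in a common color class), verify the morphism property via abelianness exactly as in \cref{lem:minit_morphism}, and arrive at the identical value $\vinit$, your cancellation computation $(\vinit)_{a,b}=(\mapMa^i_a)^{-1}\pi_i\pi_i^{-1}\mapMa^i_b=(\mapMa^i_a)^{-1}\mapMa^i_b$ being precisely why the paper's directly-defined $\vinit$ is $\FPC$-definable and satisfies $\minit(\phiRepr(\sigma))=\vinit$. The only cosmetic difference is organizational: you compute $\ker(\minit)\cap\mathcalG=\psiRepr(\gGa)$ and translate by $\phiRepr(\pi)$, whereas \cref{lem:initMorph_correct} proves the coset equality $\minit^{-1}(\vinit)=\phiRepr(\pi\gGa)$ by two direct inclusions, reconstructing $\sigma_i:=\mapMa^i_{a_i}\lambda_{a_i}$ from $\lambda$ --- the same argument up to translation by $\phiRepr(\pi)$.
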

That is, we prove that $\psiRepr(\gGa)$ is morphism-definable from $\mathcalG$ in $\FPC$ (by the morphism $\minit$), and provide a $\FPC$-definable value ($\vinit$) which represent its coset $\phiRepr(\pi\gGa)$ (w.r.t. the morphism $\minit$).
\begin{proof}[Sketch of proof] We define $\minit$ and $\vinit$ as follows:
\AP
\phantomintro\minit
\phantomintro\vinit
\begin{align*}
    \reintro *\minit(\lambda)_{a,b} &:= \begin{cases}
\lambda_a\lambda_b^{-1} &\text{if }\exists i, \{a,b\}\subseteq \ldomA_i\\
\Id &\text{otherwise}
\end{cases}\\
(\reintro *\vinit)_{a,b} &:= \begin{cases}
    (\mapMa^i_a)^{-1}\mapMa^i_{b} &\text{if }\exists i, \{a,b\}\subseteq \ldomA_i\\
\Id &\text{otherwise}
\end{cases}
\end{align*}
\Cref{lem:minit_morphism,lem:initMorph_correct,lem:initMorph_FP} in appendix show, respectively, that $\minit$ is indeed a morphism, that $\minit^{-1}(\vinit) = \phiRepr(\pi\gGa)$, and that $\minit$ and $\vinit$ are $\FPC$-definable, which altogether proves the theorem. Note that \cref{lem:minit_morphism} requires $\Gamma$ to be abelian.
\end{proof}
This concludes the morphism-definability of $\mathcal C_0$. It is only left to show that $\mathcal C^{i,j}_{a,b}$ is morphism-definable from $\mathcal G$, since intersections can be defined using \cref{corol:morph_def_grp_intersection}, and whether a coset represented by $(m,v)$ is empty can be defined using \cref{lem:fpc_def_image_morphism} (as this is equivalent to $v\not\in\im(m)$).
\begin{theorem}
    \label{thm:ext_semi_local_morphisms}
    For each $i <j\le m$, there is a $\FPC$-definable morphism
    \[\vartheta_{i,j} : \mathcalG\to \Sym(A\times A\times\slOm_{i,j})\]
    and a $\FPC$ definable function $\gslv_{i,j} : \ldomA_i\times \ldomA_j\to \Sym(A\times A\times \slOm_{i,j})$ such that, for any $(a,b)\in \ldomA_i\times \ldomA_j$ and $\sigma\in \pi\gGa$,
    \[ 
        \vartheta_{i,j}(\phiRepr(\sigma)) = \gslv_{i,j}(a,b)\iff \slE_{i,j}^{\sigma} =\slE_{i,j}^{\mapMa^i_a\oplus\mapMa^j_b}
    \]
\end{theorem}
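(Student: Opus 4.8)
The plan is to recognise $\vartheta_{i,j}$ as the morphism that reads off, from the family of components of an element of $\mathcalG$, the induced action on the edges $\slE_{i,j}$, and to reduce the stated equivalence to a single kernel computation on $\psiRepr(\gGa)$. Recall that $\{\sigma\in\pi\gGa : \slE_{i,j}^\sigma = \slE_{i,j}^{\mapMa^i_a\oplus\mapMa^j_b}\}$ equals $\rho\overline\Delta_{i,j}$ for any $\rho\in\pi\gGa$ extending $\mapMa^i_a\oplus\mapMa^j_b$, with $\overline\Delta_{i,j} = (\lGa_i\lGa_j\cap\Aut(\slE_{i,j}))\cdot\prod_{\lambda\neq i,j}\lGa_\lambda$. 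Using \cref{eqn:global_repr_coset}, for $\sigma,\sigma'\in\pi\gGa$ one has $\phiRepr(\sigma')^{-1}\phiRepr(\sigma) = \psiRepr(\sigma'^{-1}\sigma)$ with $\sigma'^{-1}\sigma\in\gGa$; hence, once $\vartheta_{i,j}$ is a morphism, $\vartheta_{i,j}(\phiRepr(\sigma)) = \vartheta_{i,j}(\phiRepr(\sigma'))$ holds iff $\psiRepr(\sigma'^{-1}\sigma)\in\ker(\vartheta_{i,j})$. Thus it suffices to build $\vartheta_{i,j}$ so that its restriction to $\psiRepr(\gGa)$ has kernel exactly $\psiRepr(\overline\Delta_{i,j})$, and then set $\gslv_{i,j}(a,b) := \vartheta_{i,j}(\phiRepr(\rho))$ for $\rho$ as above, a value that depends only on $\mapMa^i_a\oplus\mapMa^j_b$ and is therefore $\FPC$-definable directly from $(a,b)$.

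For the kernel, let $\slOm_{i,j}$ be the orbit of $\slE_{i,j}$ under the action of $\lGa_i\lGa_j$ on subsets of $\ldomA_i\times\ldomA_j$. The crucial point is that $\lGa_i\lGa_j$ is abelian and acts transitively on $\slOm_{i,j}$, so --- exactly as in the proof of \cref{lem:regular_groups} --- every point-stabiliser is conjugate to, hence equal to, the stabiliser of the base point $\slE_{i,j}$; the kernel of the action morphism $\lGa_i\lGa_j\to\Sym(\slOm_{i,j})$ therefore coincides with the setwise stabiliser $\lGa_i\lGa_j\cap\Aut(\slE_{i,j})$. Extending this action trivially on the colours $\lambda\neq i,j$ yields a morphism $\gGa\to\Sym(\slOm_{i,j})$ whose kernel is precisely $\overline\Delta_{i,j}$.

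It then remains to realise this orbit action as a single $\FPC$-definable morphism on all of $\mathcalG$. For each \emph{frame} $(a,b)\in\ldomA_i\times\ldomA_j$, evaluating the components of an element at $a$ and $b$ gives a morphism $\lambda\mapsto(\lambda_a,\lambda_b)$ from $\mathcalG$ to $\lGa_i\times\lGa_j$, which, composed with the orbit action of the previous paragraph, is a morphism $\mathcalG\to\Sym(\slOm_{i,j})$. Since no frame can be selected invariantly, I would combine all frames at once by the product-representation technique of \cref{lem:enc_prod_groups} and \cref{lem:fpc_defines_prod_morphisms}, now indexed by a block $A\times A$ of domain elements in place of a numerical block, yielding $\vartheta_{i,j} : \mathcalG\to\Sym(A\times A\times\slOm_{i,j})$. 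On $\psiRepr(\gGa)$ the components of $\psiRepr(\gamma)$ are constant on each colour class, so every frame induces the same orbit action; hence the restriction of $\vartheta_{i,j}$ to $\psiRepr(\gGa)$ has kernel exactly $\psiRepr(\overline\Delta_{i,j})$, which by the first paragraph gives the stated equivalence. Taking the product of $\vartheta_{i,j}$ with $\minit$ (\cref{corol:morph_def_grp_intersection}) then pins down simultaneously that $\sigma\in\pi\gGa$ and the encoding of $\slE_{i,j}$, making $\mathcal C^{i,j}_{a,b}$ a morphism-definable labelling coset.

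The main obstacle is precisely this invariant lift to $\mathcalG$: because $\phiRepr(\sigma)_a = (\mapMa^i_a)^{-1}\sigma_{\restriction\ldomA_i}$ depends genuinely on $a$, one cannot extract a single element of $\lGa_i\times\lGa_j$ from $\lambda$ without choosing base points and breaking isomorphism-invariance, which is what forces the frame-indexed representation. The verification that the resulting map is a well-defined morphism with the correct kernel on $\psiRepr(\gGa)$ --- rather than on a larger subgroup --- rests, just as in \cref{lem:OAi_can_Phi} and in the morphism property of $\minit$, on $\gGa$ being abelian.
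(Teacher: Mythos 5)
Your proposal is correct and takes essentially the same approach as the paper: the same frame-indexed morphism $\vartheta_{i,j}(\lambda)_{(a,b)} = \slm_{i,j}(\lambda_a\lambda_b)$ (assembled over all frames via \cref{lem:enc_prod_groups,lem:fpc_defines_prod_morphisms}), the same value $\gslv_{i,j}(a,b) = \vartheta_{i,j}(\phiRepr(\rho))$ for any $\rho\in\pi\gGa$ extending $\mapMa^i_a\oplus\mapMa^j_b$, and the same reduction of the equivalence, via \cref{eqn:global_repr_coset}, to the kernel computation $\ker(\vartheta_{i,j})\cap\psiRepr(\gGa)=\psiRepr(\overline\Delta_{i,j})$, with the abelian hypothesis doing the same work as in \cref{lem:vartheta_morphism,lem:vartheta_correct}. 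Your only deviation is realizing the semi-local morphism as the orbit action of $\lGa_i\lGa_j$ on the images of $\slE_{i,j}$ rather than as the coset action of \cref{thm:semi_local_morphisms}; the two actions are canonically isomorphic, but since orbit elements are subsets of the unordered domain, making your $\slOm_{i,j}$ an $\FPC$-definable \emph{ordered} set requires indexing orbits by lexicographically minimal pairs of $\PhiAB$-indices, which is precisely the paper's coset-representative construction, so this variation collapses back into \cref{thm:semi_local_morphisms} once implemented.
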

The proof of this theorem relies on the existence of morphisms defining $\intro*\kDel_{i,j} := \lGa_i\lGa_j\cap\Aut(E_{i,j})$:
\phantomintro\slm
\begin{theorem}\label{thm:semi_local_morphisms}
    For each $i<j\le m$, there is a $\FPC$ definable ordered set $\intro*\slOm_{i,j}$, and a $\FPC$ definable morphism
    \[\reintro*\slm_{i,j} : \lGa_i\lGa_j \to \Sym(\slOm_{i,j})\]
    such that $\ker(\slm_{i,j}) = \kDel_{i,j}$.
\end{theorem}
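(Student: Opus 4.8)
The plan is to realize $\slm_{i,j}$ as the action of $\lGa_i\lGa_j$ on the orbit of the edge relation under pointwise translation. Write $E^+ := \slE_{i,j}\cap(\ldomA_i\times\ldomA_j)$ for the ``forward'' edges; since the graph is undirected and every element of $\lGa_i\lGa_j$ preserves each color class, a permutation in $\lGa_i\lGa_j$ stabilizes $\slE_{i,j}$ if and only if it stabilizes $E^+$. The group $\lGa_i\lGa_j\cong\lGa_i\times\lGa_j$ acts on $\ldomA_i\times\ldomA_j$ componentwise, each factor acting on its own color class, and this action extends to subsets of $\ldomA_i\times\ldomA_j$. I would set $\slOm_{i,j}$ to be the orbit $\{\sigma\cdot E^+ : \sigma\in\lGa_i\lGa_j\}$ inside the power set, and let $\slm_{i,j}(\sigma)$ be the permutation of this orbit induced by $\sigma$. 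As the action of $\lGa_i\lGa_j$ on $\ldomA_i\times\ldomA_j$ is regular (\cref{lem:regular_groups}), the orbit has at most $|\ldomA_i|\cdot|\ldomA_j|$ elements, so it is of polynomial size.

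First I would settle the group theory. The map $\slm_{i,j}$ is a morphism into $\Sym(\slOm_{i,j})$ simply because it is the action of $\lGa_i\lGa_j$ on one of its orbits in the power set. Its kernel consists of the $\sigma$ that fix \emph{every} orbit element; in particular such a $\sigma$ fixes $E^+$, whence $\ker(\slm_{i,j})\subseteq\kDel_{i,j}$. For the reverse inclusion I would use that $\lGa_i\lGa_j$ is abelian (\cref{def:abelian_colors}): if $\sigma\cdot E^+=E^+$, then for any orbit element $\tau\cdot E^+$ we get $\sigma\cdot(\tau\cdot E^+)=\tau\cdot(\sigma\cdot E^+)=\tau\cdot E^+$, so $\sigma$ fixes the whole orbit and lies in $\ker(\slm_{i,j})$. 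Hence $\ker(\slm_{i,j})=\kDel_{i,j}$. This is the only point where commutativity is used; it is exactly the fact that all point-stabilizers of a transitive action of an abelian group coincide.

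The remaining, and main, work is the $\FPC$-definability of $\slm_{i,j}$ together with an order on $\slOm_{i,j}$. The key resource is that $\PhiAB$ supplies a linear enumeration of the elements of $\lGa_i$ (indexed by the numbers below $|\ldomA_i|$) and likewise of $\lGa_j$; this both yields a definable generating set for $\lGa_i\lGa_j$ and assigns to each $\sigma\in\lGa_i\lGa_j$ a numerical index $(k,l)$, definable from $\graph(\sigma)$ using $\PhiAB$, with $\FPC$-definable group multiplication on indices. Using $\PhiAB$ I can compute the concrete set $\sigma\cdot E^+$ from the index of $\sigma$ with no choice of base point. I would then represent each orbit element by the lexicographically least index that produces it, giving an $\FPC$-definable injection of $\slOm_{i,j}$ into $\oA\times\oA$; this simultaneously realizes $\slOm_{i,j}$ as a definable subset of $\oA\times\oA$ and endows it with the induced numerical order. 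Finally $\slm_{i,j}(\sigma)$ sends a representative to the least representative of the translate of its index by that of $\sigma$, extended by the identity outside $\slOm_{i,j}$, which fits the format of \cref{def:definable_morphism} with codomain $\Sym(\oA\times\oA)$.

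The main obstacle is precisely this last definability bookkeeping: passing between a permutation presented as a relation $R$ and its numerical index, verifying that reduction to least coset representatives (which implicitly decides membership in $\kDel_{i,j}$) is $\FPC$-definable, and confining the entire construction to one fixed numerical type so that the resulting morphism can be fed to \cref{lem:fpc_defines_prod_morphisms,corol:morph_def_grp_intersection} in the subsequent intersection arguments. None of these steps is conceptually deep, but each must be carried out while preserving isomorphism-invariance and staying within $\FPC$.
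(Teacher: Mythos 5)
Your proof is correct and is essentially the paper's: since the stabilizer of $E^+$ in $\Gamma_i\Gamma_j$ is exactly $\Delta_{i,j}$, your orbit $\{\sigma\cdot E^+ : \sigma\in\Gamma_i\Gamma_j\}$ is canonically isomorphic as a $\Gamma_i\Gamma_j$-set to the coset space $(\Gamma_i\Gamma_j)/\Delta_{i,j}$ on which the paper lets the group act by left multiplication, and your kernel argument uses commutativity exactly where the paper uses it (normality of $\Delta_{i,j}$ in an abelian group). Moreover, your definability scheme --- indexing group elements by numerical pairs via $\Phi$, reducing coset equality to an edge-preservation test, and representing each orbit element by its lexicographically least index pair, which also orders $\Omega_{i,j}$ --- is precisely the paper's construction of the formulae $\mathsf{aut}$, $\mathsf{coset}$ and $\mathsf{witness}$ in its appendix proof.
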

\begin{proof}[Sketch of proof of \cref{thm:semi_local_morphisms}]
Because $\lGa_i$ and $\lGa_j$ are abelian, $\kDel_{i,j}\normle \lGa_i\lGa_j$. Consider the canonical morphism
\begin{align*}
    \theta_{i,j} : \lGa_i\lGa_j &\to (\lGa_i\lGa_j)/\kDel_{i,j}\\
    \LaTeXgamma &\mapsto \LaTeXgamma\kDel_{i,j}
\end{align*}
Because $\lGa_i$ and $\lGa_j$ are ordered by $\Phi$, we can represent each coset by its minimal representative. Let $\Omega_{i,j}$ be the set of those minimal representatives. Note that $|\Omega_{i,j}|$ is polynomially bounded by $|A|$ because $|\lGa_i\lGa_j|$ is. The action of $\Gamma_i\Gamma_j$ by left multiplication on $(\Gamma_i\Gamma_j)/\kDel_{i,j}$ defines an action of $\Gamma_i\Gamma_j$ on $\Omega_{i,j}$, and the corresponding morphism is definable within $\FP + \ord$. To summarize, for $\LaTeXgamma\in\lGa_i\lGa_j$ and $\omega\in\slOm_{i,j}$, $\slm_{i,j}(\LaTeXgamma)(\omega)$ is the unique $\omega'\in\slOm_{i,j}$ such that $\LaTeXgamma\omega\kDel_{i,j} = \omega'\kDel_{i,j}$. 
\end{proof}

\begin{proof}[Sketch of proof of \cref{thm:ext_semi_local_morphisms}]
    We provide the definition of $\vartheta_{i,j}$ and $\gslv_{i,j}$. For readability purposes, we present $\vartheta_{i,j}(\sigma)$ and $\gslv_{i,j}(a,b)$ as elements of $\Sym(\slOm_{i,j})^{A\times A}$ rather than $\Sym(A\times A\times\slOm_{i,j})$, relying on \cref{lem:enc_prod_groups} as in the previous subsections.
\phantomintro\vartheta
\begin{align*}
    \reintro*\vartheta_{i,j}: \mathcalG&\to \Sym(\slOm_{i,j})^{A\times A}\\
    \lambda&\mapsto \left((a,b)\mapsto
    \begin{cases}
        \slm_{i,j}(\lambda_{a}\lambda_{b})&\text{if }a\in \ldomA_i, b\in \ldomA_j\\
        \Id &\text{otherwise}
    \end{cases}\right)
\end{align*}
For $(a_i,a_j)\in \ldomA_i\times \ldomA_j$, let $\gslv_{i,j}(a_i,a_j)\in \Sym(\slOm_{i,j})^{A\times A}$ be defined, for any $(a,b)\in \ldomA_i\times \ldomA_j$, as
\[\intro*\gslv_{i,j}(a_i,a_j)(a,b) := 
    \slm_{i,j}((\mapMa^i_{a}\oplus\mapMa^j_{b})^{-1}\mapMa^i_{a_i}\oplus\mapMa^j_{a_j})\]
When $(a,b)\not\in\ldomA_i\times\ldomA_j$, we set $\gslv_{i,j}(a_i,a_j)(a,b)$ to $\Id$.
In \cref{lem:vartheta_morphism,lem:vartheta_correct,lem:vartheta_FP} in appendix, we show, respectively, that $\vartheta_{i,j}$ is a morphism, that it behaves as expected regarding to the encoding of $\slE_{i,j}$, and that both $\vartheta_{i,j}$ and $\gslv_{i,j}$ are $\FPC$-definable, which altogether yield the theorem.
\end{proof}
This concludes the definition of our representation of labeling cosets.
An important corollary of \cref{thm:ext_semi_local_morphisms} is that the value of $\gslv_{i,j}$ does not depend on the choice of pair $(a,b)$ within its equivalence class:
\begin{corollary}
    \label{corol:slMorph_value_invariant}
    For any two pairs $(a,b),(a',b')\in\ldomA_i\times\ldomA_j$, if $\slE_{i,j}^{\mapMa^i_a\oplus\mapMa^j_b} = \slE_{i,j}^{\mapMa^i_{a'}\oplus\mapMa^j_{b'}}$, then $\gslv_{i,j}(a,b) = \gslv_{i,j}(a',b')$.
\end{corollary}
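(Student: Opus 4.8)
The plan is to deduce the corollary directly from the biconditional in \cref{thm:ext_semi_local_morphisms}, by evaluating its two relevant instances at one common ordering $\sigma$. First I would produce an ordering $\sigma\in\pi\gGa$ whose restriction to $\ldomA_i\cup\ldomA_j$ equals $\mapMa^i_a\oplus\mapMa^j_b$. Such a $\sigma$ exists because, by \cref{corol:pi_gamma,lem:OAi_coset}, $\pi\gGa = \prod_{k=1}^m\lLaCos(\ldomA_k)$ and $\mapMa^i_a\in\lLaCos(\ldomA_i)$, $\mapMa^j_b\in\lLaCos(\ldomA_j)$; it therefore suffices to take the product of $\mapMa^i_a$ on the $i$-th factor, $\mapMa^j_b$ on the $j$-th factor, and an arbitrary element of $\lLaCos(\ldomA_k)$ on each remaining factor.

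Next I would observe that $\slE_{i,j}^{\sigma} = \slE_{i,j}^{\mapMa^i_a\oplus\mapMa^j_b}$. This is immediate from $\slE_{i,j} = E\cap(\ldomA_i\times\ldomA_j\cup\ldomA_j\times\ldomA_i)$: the encoding $R^\tau$ only reads the values of $\tau$ on the elements occurring in $R$, so $\slE_{i,j}^{\tau}$ depends on $\tau$ solely through $\tau_{\restriction \ldomA_i\cup\ldomA_j}$, which for $\sigma$ is exactly $\mapMa^i_a\oplus\mapMa^j_b$. Applying the biconditional of \cref{thm:ext_semi_local_morphisms} to the pair $(a,b)$ and this $\sigma$, the right-hand side holds, hence $\vartheta_{i,j}(\phiRepr(\sigma)) = \gslv_{i,j}(a,b)$.

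Finally I would invoke the hypothesis $\slE_{i,j}^{\mapMa^i_a\oplus\mapMa^j_b} = \slE_{i,j}^{\mapMa^i_{a'}\oplus\mapMa^j_{b'}}$, which combined with the previous step gives $\slE_{i,j}^{\sigma} = \slE_{i,j}^{\mapMa^i_{a'}\oplus\mapMa^j_{b'}}$. Applying the biconditional a second time, now to the pair $(a',b')$ and the \emph{same} $\sigma$, yields $\vartheta_{i,j}(\phiRepr(\sigma)) = \gslv_{i,j}(a',b')$. Chaining the two equalities through the shared value $\vartheta_{i,j}(\phiRepr(\sigma))$ gives $\gslv_{i,j}(a,b) = \gslv_{i,j}(a',b')$, as claimed.

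There is no genuine obstacle here: the corollary is essentially a reformulation of \cref{thm:ext_semi_local_morphisms}. The only point requiring care is that a single $\sigma$ simultaneously realizes the encoding associated with both pairs, which is precisely what makes the two invocations of the biconditional share the left-hand side $\vartheta_{i,j}(\phiRepr(\sigma))$ and thus forces the two values of $\gslv_{i,j}$ to coincide.
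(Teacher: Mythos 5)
Your proposal is correct and matches the paper's intended derivation: the corollary is stated as an immediate consequence of \cref{thm:ext_semi_local_morphisms}, obtained exactly as you do by instantiating the biconditional twice at a common $\sigma\in\pi\gGa$ whose restriction to $\ldomA_i\cup\ldomA_j$ is $\mapMa^i_a\oplus\mapMa^j_b$ (the paper's proof of \cref{lem:vartheta_correct} records this same observation, that $\gslv_{i,j}(a,b) = \vartheta_{i,j}(\phiRepr(\sigma))$ for any such $\sigma$). Your care in noting that the encoding $\slE_{i,j}^{\sigma}$ depends only on $\sigma_{\restriction \ldomA_i\cup\ldomA_j}$, and that such a $\sigma$ exists because $\pi\gGa=\prod_k\lLaCos(\ldomA_k)$, is exactly what is needed.
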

This proves that our representation scheme is indeed isomorphism-invariant.
\Cref{thm:ord_abelian_colors} follows, with its corollary \cref{corol:rk_lt_ord}. 
We have left some technicalities out of this presentation. In particular, the precise way in which we define this procedure as a fix-point computation requires a few more steps: morphisms are defined by formulae with free second-order variables, not by relations. Indeed, as third order objects, they cannot be seen as variables within a fix-point computation. Instead, we define them syntactically, and only pass the value of $v$ in the representation $(m,v)$ of the current labeling coset. Another second-order variable is used to indicate which morphisms amongst $\minit\cup\{\vartheta_{i,j}\}$ are relevant at the current step of computation.

Another omission concerns the internal edges of a color-class: in the present proof, we have only treated the harder case of edges between two distinct color-classes. Indeed, the edges within $A_i$ can be treated in a simpler way: it happens that we can assume $\mathcal O(A_i)$ to already canonize those edges. For if it does not, we can refine the ordering on $\ldomA_i$, by setting $a\prec a'$ if $E_i^{\mapMa^i_a} <_{lex}E_i^{\mapMa^i_{a'}}$. $\PhiAB$ can be adapted to this finer pre-ordering, since the natural action of $\Gamma_i\cap \Aut(A_i)$ is transitive on each of the new color-classes induced by this refinement. We iterate this refinement process until a fix-point is reached.

\section{Conclusion}
We have introduced $\ord$, an operator enabling the definition of permutation group properties within fixed-point logics. As expected, this operator generalizes the rank operator $\rk$, as shown in \cref{thm:ord_ge_rk}.
Perhaps more surprising is the fact that $\ord$ is strictly more expressive than $\rk$ (\cref{sec:rk_lt_ord}). 
Indeed, this implies that the order of a group represented by a definable generating set is not definable with $\FP + \rk$.
To prove that $\FP + \ord$ is more expressive than $\FP + \rk$, we have shown that it canonizes structures with abelian colors, by defining the computation of the group-theoretic algorithm for the canonization of graphs defined in~\cite{babaiCanonical1983}.
While this algorithm was already defined in a logical context (precisely, in the context of Choiceless Polynomial Time\cite{pakusaLinear2015}), the use of a group-theoretic operator enabled a purely group-theoretic representation of labeling cosets.
This opens the door to novel techniques towards canonization in fixed-point logics: if our representation of labeling cosets relied heavily on the underlying groups being abelian, ordered, and transitive, a more complex representation scheme might allow the relaxation of some of those assumptions.


\bibliographystyle{IEEEtran}
\bibliography{Thesis.bib}

\begin{thebibliography}{10}
\providecommand{\url}[1]{#1}
\csname url@samestyle\endcsname
\providecommand{\newblock}{\relax}
\providecommand{\bibinfo}[2]{#2}
\providecommand{\BIBentrySTDinterwordspacing}{\spaceskip=0pt\relax}
\providecommand{\BIBentryALTinterwordstretchfactor}{4}
\providecommand{\BIBentryALTinterwordspacing}{\spaceskip=\fontdimen2\font plus
\BIBentryALTinterwordstretchfactor\fontdimen3\font minus \fontdimen4\font\relax}
\providecommand{\BIBforeignlanguage}[2]{{%
\expandafter\ifx\csname l@#1\endcsname\relax
\typeout{** WARNING: IEEEtran.bst: No hyphenation pattern has been}%
\typeout{** loaded for the language `#1'. Using the pattern for}%
\typeout{** the default language instead.}%
\else
\language=\csname l@#1\endcsname
\fi
#2}}
\providecommand{\BIBdecl}{\relax}
\BIBdecl

\bibitem{chandraStructureComplexityRelational1982}
A.~Chandra and D.~Harel, ``Structure and complexity of relational queries,'' \emph{Journal of Computer and System Sciences}, vol.~25, no.~1, pp. 99--128, Aug. 1982.

\bibitem{gurevichLogicChallengeComputer1988}
\BIBentryALTinterwordspacing
Y.~Gurevich, ``Logic and the {{Challenge}} of {{Computer Science}},'' \emph{Current Trends in Theoretical Computer Science ed. Egon Boerger}, Jul. 1988. [Online]. Available: \url{https://www.microsoft.com/en-us/research/publication/logic-challenge-computer-science/}
\BIBentrySTDinterwordspacing

\bibitem{caiOptimalLowerBound1992}
J.-Y. Cai, M.~F{\"u}rer, and N.~Immerman, ``An optimal lower bound on the number of variables for graph identification,'' \emph{Combinatorica}, vol.~12, no.~4, pp. 389--410, Dec. 1992.

\bibitem{dawarLogicsRankOperators2009}
A.~Dawar, M.~Grohe, B.~Holm, and B.~Laubner, ``Logics with {{Rank Operators}},'' in \emph{2009 24th {{Annual IEEE Symposium}} on {{Logic In Computer Science}}}, Aug. 2009, pp. 113--122.

\bibitem{holmDescriptive2010}
\BIBentryALTinterwordspacing
B.~Holm, ``Descriptive complexity of linear algebra,'' Ph.D. dissertation, University of Cambridge, 2010. [Online]. Available: \url{http://bjarkiholm.com/publications/Holm_2010_phd-thesis.pdf}
\BIBentrySTDinterwordspacing

\bibitem{pakusaFinite2010}
\BIBentryALTinterwordspacing
W.~Pakusa, ``Finite {{Model Theory}} with {{Operators}} from {{Linear Algebra}},'' 2010. [Online]. Available: \url{https://www.semanticscholar.org/paper/Finite-Model-Theory-with-Operators-from-Linear-Pakusa/32bd5ab2fde31e3876621140f4a7df3893cb9e8e}
\BIBentrySTDinterwordspacing

\bibitem{gradelRank2019}
E.~Gr{\"a}del and W.~Pakusa, ``Rank logic is dead, long live rank logic!'' \emph{The Journal of Symbolic Logic}, vol.~84, no.~1, pp. 54--87, Mar. 2019.

\bibitem{lichterSeparating2023}
M.~Lichter, ``Separating {{Rank Logic}} from {{Polynomial Time}},'' \emph{J. ACM}, vol.~70, no.~2, pp. 14:1--14:53, Mar. 2023.

\bibitem{groheDescriptiveComplexityCanonisation2017}
M.~Grohe, \emph{Descriptive {{Complexity}}, {{Canonisation}}, and {{Definable Graph Structure Theory}}}, 1st~ed.\hskip 1em plus 0.5em minus 0.4em\relax Cambridge University Press, Aug. 2017.

\bibitem{babai_monte-carlo_1979}
\BIBentryALTinterwordspacing
L.~Babai, ``Monte-{{Carlo}} algorithms in graph isomorphism testing,'' \emph{Universit{\'e} de Montr{\'e}al Technical Report, DMS}, no. 79-10, 1979. [Online]. Available: \url{https://people.cs.uchicago.edu/~laci/lasvegas79.pdf}
\BIBentrySTDinterwordspacing

\bibitem{luksIsomorphismGraphsBounded1982}
E.~M. Luks, ``Isomorphism of graphs of bounded valence can be tested in polynomial time,'' \emph{Journal of Computer and System Sciences}, vol.~25, no.~1, pp. 42--65, Aug. 1982.

\bibitem{babaiCanonical1983}
L.~Babai and E.~M. Luks, ``Canonical labeling of graphs,'' in \emph{Proceedings of the Fifteenth Annual {{ACM}} Symposium on {{Theory}} of Computing - {{STOC}} '83}.\hskip 1em plus 0.5em minus 0.4em\relax ACM Press, 1983, pp. 171--183.

\bibitem{babaiGraphIsomorphismQuasipolynomial2016}
L.~Babai, ``Graph {{Isomorphism}} in {{Quasipolynomial Time}},'' Jan. 2016.

\bibitem{simsComputational1970}
C.~C. Sims, ``Computational methods in the study of permutation groups,'' in \emph{Computational {{Problems}} in {{Abstract Algebra}}}, J.~Leech, Ed.\hskip 1em plus 0.5em minus 0.4em\relax Pergamon, Jan. 1970, pp. 169--183.

\bibitem{simsComputation1971}
------, ``Computation with permutation groups,'' in \emph{Proceedings of the Second {{ACM}} Symposium on {{Symbolic}} and Algebraic Manipulation}, ser. {{SYMSAC}} '71.\hskip 1em plus 0.5em minus 0.4em\relax New York, NY, USA: Association for Computing Machinery, Mar. 1971, pp. 23--28.

\bibitem{dawarDefinability2013}
A.~Dawar, E.~Kopczynski, B.~Holm, E.~Gr{\"a}del, and W.~Pakusa, ``Definability of linear equation systems over groups and rings,'' \emph{Logical Methods in Computer Science}, vol. Volume 9, Issue 4, p. 725, Nov. 2013.

\bibitem{schweitzerUnifying2019}
P.~Schweitzer and D.~Wiebking, ``A unifying method for the design of algorithms canonizing combinatorial objects,'' in \emph{Proceedings of the 51st {{Annual ACM SIGACT Symposium}} on {{Theory}} of {{Computing}}}.\hskip 1em plus 0.5em minus 0.4em\relax Phoenix AZ USA: ACM, Jun. 2019, pp. 1247--1258.

\bibitem{pakusaLinear2015}
\BIBentryALTinterwordspacing
W.~Pakusa, ``Linear equation systems and the search for a logical characterisation of polynomial time,'' Ph.D. dissertation, RWTH Aachen University, 2015. [Online]. Available: \url{http://publications.rwth-aachen.de/record/567588/files/567588.pdf?subformat=pdfa}
\BIBentrySTDinterwordspacing

\bibitem{ottoBounded2017}
M.~Otto, \emph{Bounded {{Variable Logics}} and {{Counting}}: {{A Study}} in {{Finite Models}}}, ser. Lecture {{Notes}} in {{Logic}}.\hskip 1em plus 0.5em minus 0.4em\relax Cambridge: Cambridge University Press, 2017.

\bibitem{immermanRelationalQueriesComputable1986}
N.~Immerman, ``Relational queries computable in polynomial time,'' \emph{Information and Control}, vol.~68, no.~1, pp. 86--104, Jan. 1986.

\bibitem{hodgesModel1993}
W.~Hodges, \emph{Model {{Theory}}}, 1st~ed.\hskip 1em plus 0.5em minus 0.4em\relax Cambridge University Press, Mar. 1993.

\end{thebibliography}
\newpage
\appendices
\section{Proof of \cref{thm:ord_ge_rk}}
\label{app:rank}
Consider the set $\mathcal E := \{E_{\vec a}^v \tq \vec a\in I, v\in\mathbb F_p\}$ of vectors of $\mathbb F_p^I$ defined by:
        \[{E_{\vec a}^v}(\vec b) := \begin{cases}
        0_{\mathbb F_p}   &\text{if }\vec a\ne \vec b\\
        v             &\text{otherwise}
        \end{cases}\]
        Obviously, any vector $\vec X \in \mathbb F_p^I$ is of the form $\sum_{\vec a\in I} E_{\vec a}^{X_{\vec a}}$.
        As such, this constitutes a generating set for $\mathbb F_p^I$.
        We now show how to represent $\mathbb F_p^I$ as a permutation group in $\FPC$.  First, we represent $v\in \mathbb F_p$ as the numerical permutation 
        \[f_v(w) : w\mapsto \begin{cases} w &\text{if }w\ge p\\
    w + v\mod p &\text{otherwise}\end{cases}\]
    Let $K := A^{\vec\pi}$.
    $v\mapsto f_v$ defines a morphism from $\mathbb F_p$ to $\Sym(K)$, and can obviously be defined in $\FPC$. In the same way, we define a morphism from $\mathbb F_p^I$ to $\Sym(I\times K)$:
  \[\iota_I(\vec X)(\vec a,w) := (\vec a,f_{X_{\vec a}}(w))\]
    We can define a similar representation $\iota_J$ of $\mathbb F^J_p$ in $\Sym(J\times \oA)$ (note that those representations of products are defined in the same way as in \cref{lem:enc_prod_groups}).
    We can now define an enumeration of $\iota_I(\mathcal E)$:
    \[\varphi_I(\vec p\vec\lambda, \vec x\vec\mu,\vec y\vec\nu)\! :=\! \begin{lorcases}
    \vec p = \vec x = \vec y\land \vec\mu<\vec\pi \land (\vec\nu\!=\! \vec\mu \!+\!\vec\lambda\!\!\!\!\mod\vec\pi)\\
    \vec p = \vec x = \vec y\land \vec\mu\ge\vec\pi\land \vec \nu = \vec \mu\\
    \vec p \ne \vec x = \vec y \land \vec \nu = \vec \mu
    \end{lorcases}\]
    Note that the tuple $\vec\pi$ is free in $\varphi_I$.
    For any $\vec a,v$, $\varphi_I(\mathfrak A,\vec a,\vec\lambda \gets v)$ defines the graph of $\iota_I(E^v_{\vec a})$.
    We can now define in the same way $\iota_I(f_{\mathcal M, p}(E^v_{\vec a}))$:
    \[\varphi_{\im_p(\mathcal M)}(\vec p\vec\lambda, \vec s\vec \mu, \vec t\vec \nu) :=
    \begin{lorcases}
        \vec\mu \ge \vec\pi \land \vec \mu = \vec\nu \land \vec s = \vec t\\
        \exists \vec m,
        \begin{landcases}
            \vec\mu < \vec\pi\land\vec\nu < \vec\pi\\
            \vec s = \vec t\\
            \varphi_M(\vec p,\vec s,\vec m)\\
            \vec\nu = (\vec\mu + \vec\lambda\cdot\vec m)\mod\vec\pi
            \end{landcases} 
    \end{lorcases}
    \]
    Let us walk through all variables appearing in this formula. $\vec p$ tracks the $I$-component $\vec a$ of the unit vector $E^v_{\vec a}$ under consideration. The value of $v$ is tracked by $\vec \lambda$. 
    $\vec s$ and $\vec t$ range over $J$, and represent the component of the vector $\mathcal M\cdot E^v_{\vec a}$ we are currently defining. Per our representation of group products, $\varphi_{\im_p(\mathcal M)}$ holds only if $\vec s = \vec t$. $\vec m$ tracks the value of the matrix $\mathcal M$ at coordinates $\vec a,\vec b$ where $\vec a$ is the current value of $\vec p$ and $\vec b$ the current value of both $\vec s$ and $\vec t$.

    Therefore, $\GspanFP{\varphi_{\im_p(\mathcal M)}}{\vec p\vec\lambda . \vec s\vec\mu.\vec t\vec\nu}{\mathfrak A} = \im_{\mathbb F_p}(\mathcal M)$, and our reasoning at the beginning of this proof yields the desired result.\qed

\section{Proofs omitted in \cref{sec:rk_lt_ord}}

\subsection{Proof of \cref{lem:local_labellings}}
\label{app:local_labellings}
Consider the formula $\Psi$ defined as follows:
\[ \Psi(\lambda,x,y,\mu) :=  \PhiAB(\lambda,\mu,x,y)\]
    $(\mathfrak A,i,a,b,k)\models \Psi$ iff $b$ is the $k$-th element in the ordering of $\ldomA_i$ defined by \cref{eqn:local_ordering_def}.
    However, this defines a bijection between $A_i$ and $[|A_i|]$.
    In order to build a bijection whose image is $\ldomAO_i$ instead of $\range{\ldomA_i}$, we add an adequate offset:
    \[\mapFP(\lambda,x,y,\mu) := \exists \nu, \left(\mu = \nu + \sum_{\lambda < \lambda} |A_\lambda|\right) \land \PhiAB(\lambda,\nu,x,y)\]
    The definability of those arithmetic operations in $\FPC$ is straight-forward. We have successfully built a formula $\mapFP$ such that, for all $i\le m$ and $a\in \ldomA_i$, $\mapFP(\mathfrak A,i,a)$ is the graph of a bijection between $\ldomA_i$ and $\ldomAO_i$, and thus "defines a partial ordering over $\ldomA_i$@definable partial ordering".
\subsection{Proof of \cref{thm:minit_vinit}}
\begin{lemma}
    \label{lem:minit_morphism}
    $\minit : \mathcalG \to \gGa^{A\times A}$ is a morphism
\end{lemma}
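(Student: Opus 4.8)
The plan is to verify the morphism identity $\minit(\lambda\mu)=\minit(\lambda)\minit(\mu)$ coordinate by coordinate over $A\times A$. Throughout I treat elements of $\mathcalG\le\gGa^A$ and of the codomain $\gGa^{A\times A}$ as families indexed by $A$ and by $A\times A$ respectively, whose group laws are componentwise in $\gGa$; the encoding of $\gGa^{A\times A}$ as a genuine permutation group via \cref{lem:enc_prod_groups} affects only bookkeeping and not the algebra below, so I keep it implicit.

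The first step is to record a structural fact about $\mathcalG$: for every $\lambda\in\mathcalG$ and every $b\in\ldomA_i$, the component $\lambda_b$ lies in $\lGa_i$. Indeed, each generator of $\mathcalG$ is of the form $\phiRepr_i(\mapMa^i_a)$, and by definition of $\phiRepr_i$ its $b$-component equals $\Id$ unless $b$ lies in the $i$-th colour class, in which case it lies in $\lGa_i$; hence for any colour class $\ldomA_i$ and any $b\in\ldomA_i$, every generator has $b$-component in $\lGa_i$ (either a genuine element of $\lGa_i$, or $\Id$). Since $\lGa_i$ is a subgroup of $\gGa$, this property is preserved under the componentwise product and inverse, and therefore passes to all of $\mathcalG=\langle\bigcup_i\phiRepr_i(\pi_i\lGa_i)\rangle$.

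Next I would fix $\lambda,\mu\in\mathcalG$ and an arbitrary pair $(a,b)\in A\times A$, and split according to the case distinction in the definition of $\minit$. If no colour class contains both $a$ and $b$, then by definition $\minit(\lambda\mu)_{a,b}=\minit(\lambda)_{a,b}=\minit(\mu)_{a,b}=\Id$, so the identity holds at $(a,b)$. If instead $\{a,b\}\subseteq\ldomA_i$ for some $i$, I expand both sides using the componentwise group law:
\[\minit(\lambda\mu)_{a,b}=(\lambda\mu)_a\bigl((\lambda\mu)_b\bigr)^{-1}=\lambda_a\mu_a\mu_b^{-1}\lambda_b^{-1},\]
while
\[\bigl(\minit(\lambda)\minit(\mu)\bigr)_{a,b}=\lambda_a\lambda_b^{-1}\mu_a\mu_b^{-1}.\]

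The crux --- and the only point where the hypothesis is used --- is that, by the structural fact, the four factors $\lambda_a,\lambda_b,\mu_a,\mu_b$ all lie in $\lGa_i$, which is abelian. Consequently they commute freely, so $\lambda_a\mu_a\mu_b^{-1}\lambda_b^{-1}=\lambda_a\lambda_b^{-1}\mu_a\mu_b^{-1}$, which is exactly the desired equality at $(a,b)$. As $(a,b)$ was arbitrary, this proves $\minit(\lambda\mu)=\minit(\lambda)\minit(\mu)$. I anticipate no real obstacle: the commutativity of $\lGa_i$ is precisely what cancels the off-diagonal terms, which is the reason the statement assumes $\gGa$ abelian; the only subtlety worth a line is confirming that $\minit$ is well-defined into $\gGa^{A\times A}$, i.e.\ that each value $\lambda_a\lambda_b^{-1}$ genuinely lands in $\gGa$, which is immediate from the same structural fact.
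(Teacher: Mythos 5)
Your proof is correct and takes essentially the same approach as the paper's: a coordinate-wise verification that splits on whether $a$ and $b$ share a colour class, with commutativity used to swap the middle factors and cancel the cross terms. Your preliminary structural fact (that components of elements of $\mathcal G$ at points of $A_i$ lie in $\Gamma_i$) is sound but slightly more than is needed, since $\Gamma = \prod_i \Gamma_i$ is itself abelian, so all components commute regardless of which factor they inhabit --- which is all the paper's one-line invocation of abelianness requires.
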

\begin{proof}
    Consider $\lambda,\lambda'\in \mathcalG$.
    For any $i\ne j$, $a\in \ldomA_i$ and $b\in \ldomA_j$, $\minit(\lambda\lambda')_{a,b} = \Id = (\minit(\lambda)_{a,b})(\minit(\lambda')_{a,b})$. 
    
    For $a,b\in \ldomA_i$,
    \begin{align*}
    \minit(\lambda\lambda')_{a,b} &= (\lambda\lambda')_a(\lambda\lambda')_b^{-1}\\
    &= \lambda_a\lambda'_a{\lambda'_b}^{-1}\lambda_b^{-1}\\
    \text{Since $\Gamma_i$ is abelian, }&= \lambda_a\lambda_b^{-1}\lambda'_a{\lambda'_b}^{-1}\\
    &= \minit(\lambda)_{a,b}\cdot \minit(\lambda')_{a,b}
    \end{align*}
\end{proof}
\begin{lemma}
    \label{lem:initMorph_correct}
    $\phiRepr(\pi\gGa) = \{ \lambda\in \mathcalG, \minit(\lambda) = \vinit\}$
\end{lemma}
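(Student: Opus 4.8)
The plan is to establish the set equality by proving the two inclusions separately: the inclusion $\phiRepr(\pi\gGa)\subseteq\minit^{-1}(\vinit)$ by a direct computation, and the reverse inclusion by a coset argument that rests on identifying the kernel of $\minit$ as $\psiRepr(\gGa)$. Throughout I use that, by \cref{corol:pi_gamma}, $\pi\gGa=\prod_i\lLaCos(\ldomA_i)$, so that for $\sigma\in\pi\gGa$ each restriction $\sigma_{\restriction\ldomA_i}$ lies in $\pi_i\lGa_i$ and $\phiRepr_i$ is defined on it.

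First I would verify that $\minit(\phiRepr(\sigma))=\vinit$ for every $\sigma\in\pi\gGa$. When $a,b$ lie in distinct colour classes, both $\minit(\phiRepr(\sigma))_{a,b}$ and $(\vinit)_{a,b}$ equal $\Id$ by definition, so nothing is to check. For $a,b\in\ldomA_i$, only the $i$-th factor of $\phiRepr(\sigma)=\prod_j\phiRepr_j(\sigma_{\restriction\ldomA_j})$ contributes nontrivially at the $a$- and $b$-components, giving $\phiRepr(\sigma)_a=(\mapMa^i_a)^{-1}\sigma_{\restriction\ldomA_i}$ and similarly for $b$. Since the two copies of $\sigma_{\restriction\ldomA_i}$ cancel,
\[
\minit(\phiRepr(\sigma))_{a,b}=\phiRepr(\sigma)_a\,\phiRepr(\sigma)_b^{-1}=(\mapMa^i_a)^{-1}\mapMa^i_b=(\vinit)_{a,b}.
\]
Specialising to $\sigma=\pi$ records the fact $\minit(\phiRepr(\pi))=\vinit$, which I reuse below.

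Next I would identify $\ker(\minit)$; recall that $\minit$ is a morphism by \cref{lem:minit_morphism}. From $\minit(\lambda)_{a,b}=\lambda_a\lambda_b^{-1}$ one sees that $\lambda\in\ker(\minit)$ precisely when $\lambda$ is constant on each colour class $\ldomA_i$. The key structural observation is that every $\lambda\in\mathcalG$ satisfies $\lambda_a\in\lGa_i$ for $a\in\ldomA_i$: this holds for each generator $\phiRepr_i(\mapMa^i_a)$ of $\mathcalG$ and is preserved under the pointwise operation of $\gGa^A$, since $\lGa_i$ is a subgroup. Hence an element of $\ker(\minit)$ is constant equal to some $g_i\in\lGa_i$ on each $\ldomA_i$; setting $g:=(g_1,\dots,g_m)\in\gGa$ yields $\lambda=\psiRepr(g)$, so $\ker(\minit)\subseteq\psiRepr(\gGa)$. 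Conversely every $\psiRepr(g)$ is constant on colour classes and lies in $\mathcalG$, whence $\psiRepr(\gGa)\subseteq\ker(\minit)$, and therefore $\ker(\minit)=\psiRepr(\gGa)$.

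Finally I would conclude by comparing cosets. By the compatibility of $\phiRepr$ with $\psiRepr$ recalled before the lemma, $\phiRepr(\pi\gGa)=\phiRepr(\pi)\psiRepr(\gGa)$ is the coset of $\psiRepr(\gGa)=\ker(\minit)$ through $\phiRepr(\pi)$, whereas $\minit^{-1}(\vinit)$ is the coset of $\ker(\minit)$ through $\phiRepr(\pi)$ (using $\minit(\phiRepr(\pi))=\vinit$). Two cosets of the same subgroup sharing a point coincide, which gives the claimed equality. The \emph{main obstacle} is the kernel identification, specifically the inclusion $\ker(\minit)\subseteq\psiRepr(\gGa)$: it is not enough that $\minit$ annihilates $\psiRepr(\gGa)$, one must also rule out that some element of $\mathcalG$ outside $\psiRepr(\gGa)$ is constant on colour classes, which is exactly why the observation that every $\mathcalG$-element takes its $a$-component in $\lGa_i$ for $a\in\ldomA_i$ is indispensable.
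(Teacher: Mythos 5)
Your proof is correct; the forward inclusion is the same computation as in the paper, but your backward inclusion takes a genuinely different route. The paper proceeds constructively: given $\lambda\in\mathcalG$ with $\minit(\lambda)=\vinit$, it fixes one point $a_i$ in each colour class, sets $\sigma_i:=\mapMa^i_{a_i}\lambda_{a_i}$, and uses the hypothesis in the form $\minit(\lambda)_{b,a_i}=(\vinit)_{b,a_i}$ to show $\mapMa^i_b\lambda_b=\sigma_i$ for all $b\in\ldomA_i$, whence $\lambda=\phiRepr(\sigma_1\cdots\sigma_m)$. You instead determine the kernel, $\ker(\minit)=\psiRepr(\gGa)$, and invoke the abstract fact that the fiber $\minit^{-1}(\vinit)$ and the coset $\phiRepr(\pi)\psiRepr(\gGa)$ are both cosets of $\ker(\minit)$ containing $\phiRepr(\pi)$, hence equal. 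Both arguments hinge on the same structural observation --- every $\lambda\in\mathcalG$ satisfies $\lambda_a\in\lGa_i$ for $a\in\ldomA_i$ --- which you isolate explicitly in order to show that a kernel element, being constant on classes, equals some $\psiRepr(g)$; the paper uses it tacitly, since without it the element $\sigma_i=\mapMa^i_{a_i}\lambda_{a_i}$ need not lie in $\lLaCos(\ldomA_i)$, so $\sigma_1\cdots\sigma_m$ need not lie in $\pi\gGa$ and $\phiRepr$ could not even be applied to it. Your route has the added benefit of yielding $\ker(\minit)=\psiRepr(\gGa)$ directly, which is exactly the morphism-definability assertion of \cref{thm:minit_vinit}, whereas the paper recovers it from the lemma only a posteriori via the coset structure of fibers; the small price is the extra check $\psiRepr(\gGa)\subseteq\mathcalG$, which you assert and which indeed follows from \cref{eqn:global_repr_coset}, as $\psiRepr(g)=\phiRepr(\pi)^{-1}\phiRepr(\pi g)$ with both factors in $\mathcalG$.
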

\begin{proof}
For $\sigma\in\pi\gGa$ and $a,b\in \ldomA_i$, we have:
\begin{align*}
\minit(\phiRepr(\sigma))_{a,b} &= \phiRepr(\sigma)_a\phiRepr(\sigma)_b^{-1}\\
&=(\mapMa^i_a)^{-1}\sigma\sigma^{-1}\mapMa^i_b\\
&=(\vinit)_{a,b}
\end{align*}
We now show the other inclusion. Let $\lambda\in\mathcalG$ such that $\minit(\lambda) = \vinit$. Fix, for all $i$, some $a_i\in \ldomA_i$, and let $\sigma_i := \mapMa^i_{a_i}\lambda_{a_i}$. Then, for any $b\in \ldomA_i$,
\begin{align*}
    \mapMa^i_{b}\lambda_b &=\mapMa^i_b\lambda_b\lambda_{a_i}^{-1}\lambda_{a_i}\\
    &=\mapMa^i_b \minit(\lambda)_{b,a_i}\lambda_{a_i}\\
    &=\mapMa^i_b(\mapMa^i_b)^{-1}\mapMa^i_{a_i}\lambda_{a_i}\\
    &= \sigma_i
\end{align*}
Thus, for any $b$, $\lambda_b = (\mapMa^{i(b)}_b)^{-1}\sigma_{i(b)} = \phiRepr_{i(b)}(\sigma_{i(b)})_b$, which in turn implies that $\lambda = \phiRepr(\sigma_1\dots\sigma_m)$.
\end{proof}
In order to state \cref{lem:initMorph_FP}, we need a representation of $\Sym(A)^{A\times A}$ as a permutation group:
\begin{align*}
\iota_2 : \Sym(A)^{A\times A} &\to \Sym(A\times A\times A)\\
(\sigma_{(a,b)})_{(a,b)\in A\times A}&\mapsto \left((a,b,c)\mapsto (a,b,\sigma_{(a,b)}(c))\right)
\end{align*}
\begin{lemma}\AP
    \label{lem:initMorph_FP}
    \leavevmode
    \begin{itemize}
    \item There is a formula $\intro *\initMorph(R,a_s,b_s,x_s,a_t,b_t,x_t)$, such that, for any $\lambda\in\mathcal G$, $\initMorph(\mathfrak A,\graph(\iota(\lambda))) = \graph(\iota_2(\minit(\lambda)))$.
    \item There is a formula $\intro *\initValue(a_s,b_s,x_s,a_t,b_t,x_t)$ that defines in $\mathfrak A$ the graph of $\iota_2(\vinit)$.
    \end{itemize}
\end{lemma}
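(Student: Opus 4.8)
The plan is to read the two objects off their respective relational encodings and transcribe their definitions into $\FPC$. Recall that, under the encoding $\iota$ of \cref{lem:enc_prod_groups}, an element $\lambda\in\mathcalG\le\gGa^A$ acts by $\iota(\lambda)(a,x)=(a,\lambda_a(x))$, so the input relation $R=\graph(\iota(\lambda))$ records the action of every component: $R(a,x,a,y)$ holds iff $\lambda_a(x)=y$, and $R$ never relates tuples with different first coordinates. Dually, $\iota_2$ acts by $\iota_2(\sigma)(a,b,c)=(a,b,\sigma_{a,b}(c))$, so both target graphs relate $(a_s,b_s,x_s)$ to $(a_t,b_t,x_t)$ exactly when $a_s=a_t$, $b_s=b_t$, and $x_t$ is the image of $x_s$ under the $(a_s,b_s)$-component. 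Throughout I abbreviate $a\sim b:=a\precAB b\land b\precAB a$, the $\FPC$-definable relation ``$a$ and $b$ lie in the same color class''.

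For the morphism, I would set
\[
\initMorph(R,a_s,b_s,x_s,a_t,b_t,x_t):=a_s=a_t\land b_s=b_t\land
\begin{lorcases}
a_s\sim b_s\land\exists z,\ R(b_s,z,b_s,x_s)\land R(a_s,z,a_s,x_t)\\
\lnot(a_s\sim b_s)\land x_s=x_t
\end{lorcases}
\]
By definition $\minit(\lambda)_{a_s,b_s}=\lambda_{a_s}\lambda_{b_s}^{-1}$ when $a_s\sim b_s$ and $\Id$ otherwise, so in the first case the intermediate point $z=\lambda_{b_s}^{-1}(x_s)$ is the unique witness of $R(b_s,z,b_s,x_s)$, and $\lambda_{a_s}(z)=x_t$ is witnessed by $R(a_s,z,a_s,x_t)$; the composition is read in the right order precisely because the two literals share $z$. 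This disjunct is already correct for every $x_s$, including those outside the common class: since $R$ faithfully records that $\lambda_{a_s},\lambda_{b_s}\in\lGa_i$ fix all points of $A\setminus\ldomA_i$, the formula forces $x_t=x_s$ there automatically. The two disjuncts are mutually exclusive, so the defined relation is functional.

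For the value I would use $\mapFP$ from \cref{lem:local_labellings}, where $\mapFP(\lambda,x,y,\mu)$ asserts $\mapMa^\lambda_x(y)=\mu$. Since $(\vinit)_{a_s,b_s}=(\mapMa^i_{a_s})^{-1}\mapMa^i_{b_s}$ on the common class $\ldomA_i$, the equation $x_t=(\vinit)_{a_s,b_s}(x_s)$ becomes ``$\mapMa^\lambda_{b_s}(x_s)$ and $\mapMa^\lambda_{a_s}(x_t)$ share a common value $\mu$'', giving
\[
\initValue(a_s,b_s,x_s,a_t,b_t,x_t):=a_s=a_t\land b_s=b_t\land
\begin{lorcases}
\exists\lambda\,\exists\mu,\ \mapFP(\lambda,b_s,x_s,\mu)\land\mapFP(\lambda,a_s,x_t,\mu)\\
\bigl(\lnot(a_s\sim b_s)\lor\lnot(a_s\sim x_s)\bigr)\land x_t=x_s
\end{lorcases}
\]
The first disjunct already forces $a_s,b_s,x_s,x_t$ into a common class $\lambda$, since $\mapFP(\lambda,\cdot,\cdot,\cdot)$ holds only on elements of $A_\lambda$; there it defines $(\vinit)_{a_s,b_s}$ exactly. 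Here lies the one genuine subtlety, and the main point to get right: unlike $R$, the formula $\mapFP$ only witnesses the action of $\mapMa^i$ on $A_i$ itself and says nothing about points outside, so the fact that $(\vinit)_{a_s,b_s}\in\lGa_i$ fixes $A\setminus\ldomA_i$ must be supplied by hand --- this is the role of the second disjunct, which also absorbs the case $\lnot(a_s\sim b_s)$ where $(\vinit)_{a_s,b_s}=\Id$. Again the two disjuncts are mutually exclusive.

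Finally, I would observe that both formulas are first-order over the relational input $R$, the pre-order $\precAB$, and $\mapFP$, the last being $\FPC$-definable by \cref{lem:local_labellings}; hence both are $\FPC$-definable, as required. There is no deep obstacle here: the work is entirely bookkeeping --- keeping the index coordinate separate from the acting coordinate in $\iota$ and $\iota_2$, orienting the composition and inverse correctly (whence the shared $z$ and shared $\mu$), and handling the asymmetry that $R$ encodes the fixing behavior outside a color class while $\mapFP$ does not. Correctness then follows by a routine case split on whether $a_s$, $b_s$ and $x_s$ share a color class.
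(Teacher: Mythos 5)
Your proof is correct and is essentially the paper's own construction: the paper likewise defines $\initMorph$ by reading the composition off $R$ through a shared witness point (namely $\exists y,\ R(b_s,x_t,y)\land R(a_s,x_s,y)$, which yields $\lambda_{b_s}^{-1}\lambda_{a_s}$ and agrees with $\minit(\lambda)_{a_s,b_s}=\lambda_{a_s}\lambda_{b_s}^{-1}$ only by abelianness, whereas your shared-preimage variant $\exists z,\ R(b_s,z,b_s,x_s)\land R(a_s,z,a_s,x_t)$ needs no commutativity), and it defines $\initValue$ exactly as your first disjunct $\exists i,\mu,\ \mapFP(i,a_s,x_t,\mu)\land\mapFP(i,b_s,x_s,\mu)$. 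The only difference is that you add explicit identity clauses for pairs outside a common color class and for points outside that class, degenerate cases that the paper's terser formulas leave implicit.
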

\begin{proof}
    \[
        \reintro *\initMorph(R,a_s,b_s,x_s,a_t,b_t,x_t):=
        \exists y, 
        \begin{landcases}
        a_s = a_t\\
        b_s = b_t\\
        R(b_s,x_t,y)\\ 
        R(a_s,x_s,y)
        \end{landcases}
    \]
    \[
        \reintro *\initValue(a_s,b_s,x_s,a_t,b_t,x_t) := \exists i,\mu, \begin{landcases}
        a_s = a_t\\
        b_s = b_t\\
        \mapFP(i,a_s,x_t,\mu)\\ 
        \mapFP(i,b_s,x_s,\mu)
        \end{landcases}\qedhere
    \]
\end{proof}

\subsection{Proof of \cref{thm:semi_local_morphisms}}
We introduce two intermediate lemmas.
First, we show that the membership to $\kDel_{i,j}$ is definable in $\FPC$:
\begin{lemma}
    There is a $\FPC$ formula $\intro*\aut(i,j,\mu,\nu)$ such that $\mathfrak A\models \aut(i,j,\alpha,\beta)$ iff $\gamma^i_\alpha\gamma^j_\beta\in \kDel_{i,j}$.
\end{lemma}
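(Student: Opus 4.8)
The plan is to reduce the membership condition $\gamma^i_\alpha\gamma^j_\beta\in\kDel_{i,j}$ to a purely local first-order test on the edge relation. First I would recall that $\kDel_{i,j} = \lGa_i\lGa_j\cap\Aut(\slE_{i,j})$, and observe that the permutation $g := \gamma^i_\alpha\gamma^j_\beta$ lies in $\lGa_i\lGa_j$ by construction, since $\gamma^i_\alpha\in\lGa_i$ and $\gamma^j_\beta\in\lGa_j$. Hence, whenever $\alpha,\beta$ are valid indices, membership in $\lGa_i\lGa_j$ is automatic, and the entire content of the lemma is to test whether $g$ stabilizes $\slE_{i,j}$.

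The key observation is that $g$ acts as $\gamma^i_\alpha$ on $\ldomA_i$ and as $\gamma^j_\beta$ on $\ldomA_j$, and both of these permutations are directly readable from the relation $\PhiAB$: recall that $\PhiAB(i,\alpha,s,t)$ holds iff $\gamma^i_\alpha(s)=t$, and likewise for $\gamma^j_\beta$. Since $g$ is a bijection of $\ldomA_i\cup\ldomA_j$, it stabilizes $\slE_{i,j}$ precisely when, for all $a\in\ldomA_i$ and $b\in\ldomA_j$, one has $E(a,b)\leftrightarrow E(g(a),g(b))$ and $E(b,a)\leftrightarrow E(g(b),g(a))$, covering the two orientations of the edges crossing between the two classes. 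I would therefore universally quantify over $a,a',b,b'$, use $\PhiAB(i,\alpha,a,a')$ and $\PhiAB(j,\beta,b,b')$ to pin down $a'=g(a)$ and $b'=g(b)$, and demand that the edge relation be preserved; note that these two atoms already confine $a,a'$ to $\ldomA_i$ and $b,b'$ to $\ldomA_j$, so no explicit color-class guards are needed. This yields
\[\aut(i,j,\alpha,\beta) := \forall a,a',b,b',\ \begin{landcases}\PhiAB(i,\alpha,a,a')\\ \PhiAB(j,\beta,b,b')\end{landcases}\implies\begin{landcases}E(a,b)\leftrightarrow E(a',b')\\ E(b,a)\leftrightarrow E(b',a')\end{landcases}\]
which is a first-order, hence $\FPC$, formula over $\Sigma\cup\{\precAB,\PhiAB\}$.

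I do not expect a genuine obstacle here: this lemma is a warm-up whose sole content is the faithful translation of ``$g$ stabilizes the crossing edges'' into the relation $\PhiAB$. The one point deserving a line of care is the correctness argument that checking both orientations $E(a,b)\leftrightarrow E(a',b')$ and $E(b,a)\leftrightarrow E(b',a')$ over all quantified pairs indeed captures stabilization of the whole of $\slE_{i,j}$; this follows because $g$ is a bijection mapping $\ldomA_i\times\ldomA_j$ onto itself (and $\ldomA_j\times\ldomA_i$ onto itself), so as $(a,b)$ ranges over $\ldomA_i\times\ldomA_j$ the witnessed pair $(a',b')=(g(a),g(b))$ ranges over exactly the image, making the condition equivalent to $g^*(\slE_{i,j})=\slE_{i,j}$.
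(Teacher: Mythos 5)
Your proof is correct and takes essentially the same approach as the paper: the paper's formula likewise quantifies over elements of the two color classes, uses $\PhiAB(i,\mu,\cdot,\cdot)$ and $\PhiAB(j,\nu,\cdot,\cdot)$ to read off the action of $\gamma^i_\mu\gamma^j_\nu$, and demands $E(a_i,a_j)\iff E(b_i,b_j)$, differing from yours only cosmetically (a $\forall\exists$ phrasing instead of your $\forall$ with implication, and a single edge orientation, which suffices since the structures are undirected graphs).
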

\begin{proof}
:
\[\reintro *\aut(i,j,\mu,\nu) := \forall a_i\in \ldomA_i, a_j\in \ldomA_j,\exists b_i,b_j, \begin{landcases}
    \PhiAB(i,\mu,a_i,b_i)\\
    \PhiAB(j,\nu,a_j,b_j)\\
    \Xi(a_i,a_j,b_i,b_j)
\end{landcases}
\]
where
\[\Xi(a_i,a_j,b_i,b_j) := E(a_i,a_j)\iff E(b_i,b_j)\qedhere\]
\end{proof}
This in turn enables us to check if two elements of $\lGa_i\lGa_j$ belong to the same coset of $\kDel_{i,j}$:
\begin{lemma}
    \label{lem:slmorph}
    There is a $\FPC$ formula $\intro* \coset(i,j,\mu,\nu,\mu',\nu')$ such that
    \[\mathfrak A\models \coset(i,j,\alpha,\beta,\alpha',\beta') \iff \gamma^i_{\alpha}\gamma^j_\beta\kDel_{i,j} = \gamma^i_{\alpha'}\gamma^j_{\beta'}\kDel_{i,j}\]
\end{lemma}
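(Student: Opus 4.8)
The plan is to reduce the same-coset test to the membership test already captured by the formula $\aut$ from the preceding lemma. First I would record the relevant group-theoretic fact: since $\lGa_i$ and $\lGa_j$ act on the disjoint sets $\ldomA_i,\ldomA_j$ and are each abelian, $\lGa_i\lGa_j$ is abelian, so $\kDel_{i,j}\normle\lGa_i\lGa_j$ and its cosets are well defined. Two elements $g := \gamma^i_\alpha\gamma^j_\beta$ and $g' := \gamma^i_{\alpha'}\gamma^j_{\beta'}$ lie in the same coset of $\kDel_{i,j}$ exactly when $g^{-1}g'\in\kDel_{i,j}$. Using commutativity I would rewrite
\[
g^{-1}g' = \big((\gamma^i_\alpha)^{-1}\gamma^i_{\alpha'}\big)\big((\gamma^j_\beta)^{-1}\gamma^j_{\beta'}\big),
\]
where the first factor lies in $\lGa_i$ and the second in $\lGa_j$. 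By regularity of $\lGa_i$ and $\lGa_j$ (\cref{lem:regular_groups}) there are unique indices $\alpha'',\beta''$ with $\gamma^i_{\alpha''}=(\gamma^i_\alpha)^{-1}\gamma^i_{\alpha'}$ and $\gamma^j_{\beta''}=(\gamma^j_\beta)^{-1}\gamma^j_{\beta'}$, so that $g,g'$ are $\kDel_{i,j}$-equivalent iff $\gamma^i_{\alpha''}\gamma^j_{\beta''}\in\kDel_{i,j}$, i.e. iff $\aut(i,j,\alpha'',\beta'')$ holds.

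The remaining task is to recover $\alpha'',\beta''$ inside $\FPC$. Since $\PhiAB(i,\cdot)$ enumerates the graphs of all elements of $\lGa_i$, I would define the index of the quotient directly through $\PhiAB$ by an auxiliary formula $\mathrm{quot}(i,\mu,\mu',\mu'')$ asserting $\gamma^i_{\mu''}=(\gamma^i_\mu)^{-1}\gamma^i_{\mu'}$:
\[
\mathrm{quot}(i,\mu,\mu',\mu'') := \forall a,b\in\ldomA_i,\ \Big(\PhiAB(i,\mu'',a,b)\iff \exists c,\ \PhiAB(i,\mu',a,c)\land \PhiAB(i,\mu,b,c)\Big).
\]
The right-hand side of the biconditional states $\gamma^i_{\mu'}(a)=\gamma^i_\mu(b)$, i.e. $b=(\gamma^i_\mu)^{-1}\gamma^i_{\mu'}(a)$, so $\mathrm{quot}$ holds precisely for the unique index of the quotient; any out-of-range $\mu''$ fails, since its graph under $\PhiAB$ is empty whereas the quotient permutation is total on $\ldomA_i$. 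This formula is plainly $\FPC$-definable, using only $\PhiAB$ and first-order quantification over the (definable) color class $\ldomA_i$.

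Assembling the pieces, I would set
\[
\coset(i,j,\mu,\nu,\mu',\nu') := \exists \mu'',\nu'',\ \mathrm{quot}(i,\mu,\mu',\mu'')\land \mathrm{quot}(j,\nu,\nu',\nu'')\land \aut(i,j,\mu'',\nu''),
\]
which is in $\FPC$ because $\aut$ is and $\mathrm{quot}$ is; correctness is then immediate from the reduction above. The step I expect to require the most care is the definable recovery of the quotient indices $\mu'',\nu''$: this is exactly where abelianness is essential, both to split $g^{-1}g'$ cleanly into an $\lGa_i$-factor and an $\lGa_j$-factor and to let the single enumeration $\PhiAB(i,\cdot)$ play the role of a group table, while regularity guarantees that the indices exist and are unique.
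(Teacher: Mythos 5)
Your proof is correct and follows essentially the same route as the paper's: reduce the same-coset test to membership of the quotient element in $\kDel_{i,j}$, use commutativity of $\lGa_i$ and $\lGa_j$ (disjoint supports) to split that quotient into an $\lGa_i$-factor and an $\lGa_j$-factor indexed by some $\mu'',\nu''$, and then existentially quantify over those indices and invoke $\aut$. The only difference is cosmetic: where the paper writes permutation-equality clauses $\gamma^i_{\mu''}=(\gamma^i_{\mu'})^{-1}\gamma^i_\mu$ and remarks that they are $\FPC$-definable from the graphs, you spell out that definition explicitly as your $\mathrm{quot}$ formula via $\PhiAB$.
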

\begin{proof}
For $\alpha,\beta,\alpha',\beta'$,
$\gamma^i_\alpha\gamma^j_\beta\kDel_{i,j} = \gamma^i_{\alpha'}\gamma^j_{\beta'}\kDel_{i,j}$
iff $\chi := (\gamma^i_{\alpha'}\gamma^j_{\beta'})^{-1}\gamma^i_\alpha\gamma^j_\beta\in\kDel_{i,j}$. Because $\lGa_i$ and $\lGa_j$
act on disjoint sets, they commute, thus $\chi\in \lGa_i\lGa_j$
and there is a pair $(\alpha'',\beta'')$ such that $\chi = \gamma^i_{\alpha''}\gamma^j_{\beta''}$. Because $\chi\in\lGa_i\lGa_j$, $\chi\in\kDel_{i,j}$ is equivalent to $\chi\in\Aut(\slE_{i,j})$, which yields the following formula:
\[\coset(i,j,\mu,\nu,\mu',\nu'):= \exists\mu'',\nu'',\begin{landcases}
    \gamma^i_{\mu''} = (\gamma^i_{\mu'})^{-1}\gamma^i_\mu\\
    \gamma^j_{\nu''} = (\gamma^j_{\nu'})^{-1}\gamma^j_\nu\\
    \aut(i,j,\mu'',\nu'')
\end{landcases}
\]
To ease reading, we have included two clauses of the form $\sigma = \tau^{-1}\rho$ which are not $\FPC$ formulae per se. However, when the graphs of $\sigma$, $\tau$ and $\rho$ are defined by $R_\sigma,R_\tau$ and $R_\rho$, respectively, this equality can easily be defined in $\FPC$.
\end{proof}
We are now ready to prove \cref{thm:semi_local_morphisms}:
\begin{proof}[Proof of \cref{thm:semi_local_morphisms}]
We can  choose as a representative of the coset $\sigma\kDel_{i,j}$ the lexicographically smallest pair $(\alpha,\beta)$ such that $\gamma^i_{\alpha}\gamma^j_{\beta}\in\sigma\kDel_{i,j}$.
The following formula holds for $(i,j,\alpha,\beta)$ iff $(\alpha,\beta)$ is such a pair for some coset:
\[\mathsf{witness}(i,j,\mu,\nu) := \forall \mu',\nu',\begin{lorcases} 
    \mu\nu\le_{lex}\mu'\nu'\\
    \lnot\coset(i,j,\mu,\nu,\mu',\nu')
\end{lorcases}\]
This shows that $\reintro*\slOm_{i,j} := \mathsf{witness}(\mathfrak A,i,j)$ is definable in $\FPC$.

We aim to define $\slm_{i,j}(\LaTeXgamma)(\alpha,\beta)$ as the unique $(\alpha',\beta')\in\slOm_{i,j}$ such that $\LaTeXgamma\gamma^i_{\alpha}\gamma^j_{\beta}\in(\gamma^i_{\alpha'}\gamma^j_{\beta'})\kDel_{i,j}$, that is, such that $(\gamma^i_{\alpha'}\gamma^j_{\beta'})^{-1}\LaTeXgamma\gamma^i_{\alpha}\gamma^j_{\beta}\in \kDel_{i,j}$.
Note that, rather than keeping $\slm_{i,j}$ undefined on $(\oA)^2\setminus \slOm_{i,j}$, we simply set it to act trivially.
This indeed defines a morphism, as the action of $\slm_{i,j}(\LaTeXgamma)$ on $\slOm_{i,j}$ is, by construction, isomorphic to the action of $\LaTeXgamma$ on the set of cosets of $\kDel_{i,j}$ in $\lGa_i\lGa_j$ by left multiplication. For the same reason, $\ker({\slm_{i,j}}) = \kDel_{i,j}$.
\AP
\[\intro *\slMorph(i,j,R_\LaTeXgamma,\mu_s\nu_s,\mu_t\nu_t):= \begin{lorcases}
    \begin{landcases}\lnot\mathsf{witness}(i,j,\mu_s,\nu_s)\\ \mu_s = \mu_t \\ \nu_s = \nu_t\end{landcases}\\
    \begin{landcases}
        \mathsf{witness}(i,j,\mu_s,\nu_s)\\
        \mathsf{witness}(i,j,\mu_t,\nu_t)\\
        \Xi
    \end{landcases}
\end{lorcases}
\]
where
\[\Xi := \exists \mu',\nu',
\begin{landcases}
    \gamma^i_{\mu'}\gamma^j_{\nu'} = (\gamma^i_{\mu_t}\gamma^j_{\nu_t})^{-1}\LaTeXgamma\gamma^i_\mu\gamma^j_\nu \\
    \aut(i,j,\mu',\nu')
\end{landcases}
\]
This corresponds exactly to our definition of a definable morphism, in the sense that for any $\LaTeXgamma\in \lGa_i\lGa_j$,
\[\slMorph(\mathfrak A,i,j,\graph(\gamma)) = \graph(\slm_{i,j}(\gamma))\]
Note that, while we usually prefer to keep second order variables on the left-most side, we did not do so this time to underline the fact that $\slMorph$ actually defines a family of morphisms, one for each pair of colors.

Finally, let us justify our use of permutations equalities in the definition of $\Xi$. As the graph $R_\LaTeXgamma$ of $\LaTeXgamma$ is a parameter of $\slMorph$, and for all $i,\mu$, the graph of $\gamma^i_\mu$ is given by $\PhiAB(i,\mu)$, the subformula $\gamma^i_{\mu'}\gamma^j_{\nu'} = (\gamma^i_{\mu_t}\gamma^j_{\nu_t})^{-1}$ is in $\FPC$, following the same idea as in \cref{lem:perm_equalities}.
\end{proof}
\section{Proof of \cref{thm:ext_semi_local_morphisms}}
\begin{lemma}
    \label{lem:vartheta_morphism}
    For all $i< j\le m$, $\vartheta_{i,j}$ is a morphism.
\end{lemma}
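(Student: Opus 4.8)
The plan is to verify the morphism identity $\vartheta_{i,j}(\lambda\lambda') = \vartheta_{i,j}(\lambda)\,\vartheta_{i,j}(\lambda')$ directly and componentwise over $A\times A$. Since $\Sym(\slOm_{i,j})^{A\times A}$ is a direct product with pointwise multiplication, and since multiplication in $\mathcalG\le\gGa^A$ is likewise componentwise --- so that $(\lambda\lambda')_a = \lambda_a\lambda'_a$ for every $a\in A$ --- it suffices to fix a pair $(a,b)\in A\times A$ and check that $\vartheta_{i,j}(\lambda\lambda')_{a,b} = \vartheta_{i,j}(\lambda)_{a,b}\,\vartheta_{i,j}(\lambda')_{a,b}$.

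First I would record the elementary fact that, for any $\lambda\in\mathcalG$ and any $a\in\ldomA_k$, the component $\lambda_a$ lies in $\lGa_k$. This follows by induction on a product of generators $\phiRepr_k(\mapMa^k_c)$: each such generator has $a$-component in $\lGa_k$ when $a\in\ldomA_k$ and $\Id$ otherwise, and a product of such components stays in $\lGa_k$. In particular, for $a\in\ldomA_i$ and $b\in\ldomA_j$ we have $\lambda_a,\lambda'_a\in\lGa_i$ and $\lambda_b,\lambda'_b\in\lGa_j$, so that the products $\lambda_a\lambda_b$ occurring in the definition of $\vartheta_{i,j}$ genuinely lie in $\lGa_i\lGa_j$, the domain of $\slm_{i,j}$.

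Then I would split into two cases. If $(a,b)\notin\ldomA_i\times\ldomA_j$, all three components equal $\Id$ and the identity is immediate. If $(a,b)\in\ldomA_i\times\ldomA_j$, then
\[
\vartheta_{i,j}(\lambda\lambda')_{a,b} = \slm_{i,j}\bigl(\lambda_a\lambda'_a\lambda_b\lambda'_b\bigr),
\]
whereas, using that $\slm_{i,j}$ is a morphism (\cref{thm:semi_local_morphisms}),
\[
\vartheta_{i,j}(\lambda)_{a,b}\,\vartheta_{i,j}(\lambda')_{a,b} = \slm_{i,j}(\lambda_a\lambda_b)\,\slm_{i,j}(\lambda'_a\lambda'_b) = \slm_{i,j}\bigl(\lambda_a\lambda_b\lambda'_a\lambda'_b\bigr).
\]
The two right-hand sides coincide as soon as $\lambda'_a\lambda_b = \lambda_b\lambda'_a$, which makes the reordering $\lambda_a\lambda'_a\lambda_b\lambda'_b = \lambda_a\lambda_b\lambda'_a\lambda'_b$ legitimate and closes the case.

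I do not expect a genuine obstacle here: the lemma is in essence a bookkeeping verification. The only point requiring care is precisely the commutation $\lambda'_a\lambda_b = \lambda_b\lambda'_a$, and this is where the hypothesis $i<j$ (so $i\ne j$) is used: $\lambda'_a\in\lGa_i$ and $\lambda_b\in\lGa_j$ belong to distinct direct factors of $\gGa=\prod_k\lGa_k$ (equivalently, they have disjoint supports $\ldomA_i$ and $\ldomA_j$), and therefore commute. Everything else reduces to the componentwise structure of the product groups and the already-established fact that $\slm_{i,j}$ is a morphism.
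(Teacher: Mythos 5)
Your proof is correct and follows essentially the same route as the paper's: componentwise verification over $A\times A$, the trivial case outside $A_i\times A_j$, then the morphism property of $m_{i,j}$ combined with a reordering of the middle factors. The only cosmetic differences are that the paper justifies the reordering by the abelianness of $\Gamma_i\Gamma_j$ whereas you invoke the commutation of $\Gamma_i$ and $\Gamma_j$ coming from their disjoint supports (the same underlying fact), and that you make explicit the observation, left implicit in the paper, that $\lambda_a\in\Gamma_i$ for $a\in A_i$, so that $m_{i,j}$ is indeed applicable to $\lambda_a\lambda_b$.
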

\begin{proof}
Consider $\lambda,\lambda'\in \mathcalG$ and $(a,b)\in A\times A$. We aim to show that
\[(\vartheta_{i,j}(\lambda)\vartheta_{i,j}(\lambda'))_{(a,b)} =
\vartheta_{i,j}(\lambda\lambda')_{(a,b)}\]
If $(a,b)\not\in \ldomA_i\times \ldomA_j$, both sides of this equation evaluate to $\Id$. Otherwise,
\begin{align*}
    (\vartheta_{i,j}(\lambda)\vartheta_{i,j}(\lambda'))_{(a,b)} &=
    \slm_{i,j}(\lambda_a\lambda_b)\slm_{i,j}(\lambda'_a\lambda'_b)&\\
    &= \slm_{i,j}(\lambda_a\lambda_b\lambda'_a\lambda'_b)&\text{$m_{i,j}$ morph.}\\
    &= \slm_{i,j}(\lambda_a\lambda'_a\lambda_b\lambda'_b) &\text{ $\lGa_i\lGa_j$ ab.}\\
    &= \vartheta_{i,j}(\lambda\lambda')_{(a,b)}&\qedhere
\end{align*}
\end{proof}

\begin{lemma}
    \label{lem:vartheta_correct}
For any $i< j$, $(a_i,a_j)\in \ldomA_i\times \ldomA_j$, and $\sigma\in\pi\gGa$,
\[\vartheta_{i,j}(\phiRepr(\sigma)) = \gslv_{i,j}(a_i,a_j)\iff \slE_{i,j}^{\sigma} = \slE_{i,j}^{\mapMa^i_{a_i}\oplus\mapMa^j_{a_j}}\]
\end{lemma}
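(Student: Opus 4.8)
The plan is to prove the equivalence componentwise in $\Sym(\slOm_{i,j})^{A\times A}$, reducing both sides to a single membership test in $\kDel_{i,j} = \lGa_i\lGa_j\cap\Aut(\slE_{i,j})$. First I would record a general fact about relative encodings: for any two orderings $\sigma,\tau$ of $\ldomA_i\cup\ldomA_j$, one has $\slE_{i,j}^{\sigma} = \slE_{i,j}^{\tau}$ if and only if $\sigma^{-1}\tau\in\Aut(\slE_{i,j})$, since relabelling the edge set by $\sigma$ and by $\tau$ yields the same numerical relation exactly when the permutation $\sigma^{-1}\tau$ of $\ldomA_i\cup\ldomA_j$ stabilises $\slE_{i,j}$. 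Writing $\sigma_i := \sigma_{\restriction \ldomA_i}$ and $\sigma_j := \sigma_{\restriction \ldomA_j}$, and putting $\gamma_i := \sigma_i^{-1}\mapMa^i_{a_i}$ and $\gamma_j := \sigma_j^{-1}\mapMa^j_{a_j}$, this turns the right-hand side of the statement into the condition $\gamma_i\gamma_j\in\kDel_{i,j}$; here $\gamma_i\in\lGa_i$ and $\gamma_j\in\lGa_j$ are well defined because $\sigma_i$ and $\mapMa^i_{a_i}$ both lie in the coset $\lLaCos(\ldomA_i)$ (\cref{lem:OAi_coset}), and likewise for $j$.

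For the left-hand side, both $\vartheta_{i,j}(\phiRepr(\sigma))$ and $\gslv_{i,j}(a_i,a_j)$ are the identity on every component $(a,b)\notin\ldomA_i\times\ldomA_j$, so their equality reduces to agreement on each $(a,b)\in\ldomA_i\times\ldomA_j$. Fixing such a pair, I would expand the two components using $\phiRepr(\sigma)_a = (\mapMa^i_a)^{-1}\sigma_i$ and $\phiRepr(\sigma)_b = (\mapMa^j_b)^{-1}\sigma_j$, which gives $\vartheta_{i,j}(\phiRepr(\sigma))_{(a,b)} = \slm_{i,j}((\mapMa^i_a)^{-1}\sigma_i(\mapMa^j_b)^{-1}\sigma_j)$ against $\gslv_{i,j}(a_i,a_j)_{(a,b)} = \slm_{i,j}((\mapMa^i_a)^{-1}\mapMa^i_{a_i}(\mapMa^j_b)^{-1}\mapMa^j_{a_j})$. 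Since $\slm_{i,j}$ is a morphism with $\ker(\slm_{i,j}) = \kDel_{i,j}$ (\cref{thm:semi_local_morphisms}), these two values agree if and only if the product of the inverse of the second argument with the first lies in $\kDel_{i,j}$.

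The key step is then to simplify this product. As $\lGa_i$ and $\lGa_j$ act on disjoint sets they commute, and each being abelian the whole of $\lGa_i\lGa_j$ is abelian; reordering the four factors, the terms $(\mapMa^i_a)^{-1}$ and $(\mapMa^j_b)^{-1}$ cancel against their inverses and what remains is $(\mapMa^i_{a_i})^{-1}\sigma_i(\mapMa^j_{a_j})^{-1}\sigma_j = \gamma_i^{-1}\gamma_j^{-1} = (\gamma_i\gamma_j)^{-1}$. Hence the component equality at $(a,b)$ is equivalent to $\gamma_i\gamma_j\in\kDel_{i,j}$, a condition that does not depend on $(a,b)$; so it holds at one pair precisely when it holds at all of them, and this matches exactly the reformulation of the right-hand side obtained in the first paragraph. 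I expect this cancellation to be the crux of the argument: it is only because $\gGa = \prod_i\lGa_i$ is abelian that the two per-component arguments of $\slm_{i,j}$ differ by a quantity independent of the index $(a,b)$, which is precisely what makes $\vartheta_{i,j}$ behave as a single well-defined test rather than an $(a,b)$-indexed family of tests. Combining the two equivalences then yields the lemma.
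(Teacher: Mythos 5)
Your proof is correct, and it rests on the same three facts as the paper's: the coset structure of the local labelings (\cref{lem:OAi_coset}), the fact that $m_{i,j}$ is a morphism with kernel $\Delta_{i,j}$ (\cref{thm:semi_local_morphisms}), and commutativity of $\Gamma_i\Gamma_j$. But you organize the argument differently. The paper stays global: it first notes that $v_{i,j}(a_i,a_j) = \vartheta_{i,j}(\varphi(\sigma'))$ for any $\sigma'\in\pi\Gamma$ extending $\mathsf{map}^i_{a_i}\oplus\mathsf{map}^j_{a_j}$, which reduces the lemma to showing that $\vartheta_{i,j}(\varphi(\sigma)) = \vartheta_{i,j}(\varphi(\tau))$ holds iff $\sigma$ and $\tau$ encode $E_{i,j}$ identically; it then feeds the coset identity $\varphi(\sigma)^{-1}\varphi(\tau) = \psi(\sigma^{-1}\tau)$ (\cref{eqn:global_repr_coset}) into the already-established morphism property of $\vartheta_{i,j}$ (\cref{lem:vartheta_morphism}), so that everything collapses to computing $\ker(\vartheta_{i,j}\circ\psi)$. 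You instead expand both sides componentwise and carry out the cancellation by hand: at each component $(a,b)$ you re-derive precisely the identity that \cref{eqn:global_repr_coset} and \cref{lem:vartheta_morphism} together encapsulate, and you observe directly that the resulting test $(\gamma_i\gamma_j)^{-1}\in\Delta_{i,j}$ does not depend on $(a,b)$. The paper's factoring buys brevity and modularity, and the observation that every $v_{i,j}$-value lies in the image of $\vartheta_{i,j}\circ\varphi$ is what makes \cref{corol:slMorph_value_invariant} immediate. Your version buys self-containedness --- it neither invokes \cref{lem:vartheta_morphism} nor needs the (easy, but unstated in your write-up and in the paper) existence of an extension $\sigma'\in\pi\Gamma$ of $\mathsf{map}^i_{a_i}\oplus\mathsf{map}^j_{a_j}$ --- and it makes visible the structural reason the definition of $\vartheta_{i,j}$ works at all: abelianness is exactly what makes the per-component membership test independent of the component. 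Both routes are sound.
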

\begin{proof}
    First, notice that, for any $i<j$ and $a_i\in \ldomA_i, a_j\in \ldomA_j$, $\gslv_{i,j}(a_i,a_j) = \vartheta_{i,j}(\phiRepr(\sigma))$, where $\sigma$ is any element of $\pi\gGa$ such that $\sigma_{\restriction \ldomA_i\cup \ldomA_j} = \mapMa^i_{a_i}\oplus\mapMa^j_{a_j}$.

    It is thus only left to prove that, for two labellings $\sigma,\tau\in\pi\gGa$, $\vartheta_{i,j}(\phiRepr(\sigma)) = \vartheta_{i,j}(\phiRepr(\tau))$ iff $\sigma$ and $\tau$ yield the same encoding of $\slE_{i,j}$.
    \Cref{eqn:global_repr_coset} implies that
    \[\vartheta_{i,j}(\phiRepr(\sigma)) = \vartheta_{i,j}(\phiRepr(\tau)) \iff \vartheta_{i,j}(\psiRepr(\sigma^{-1}\tau)) = 1\]
    so that it is only left to show that, $\psiRepr(\LaTeXgamma) \in \ker(\vartheta_{i,j})\iff \LaTeXgamma\in\Aut(\slE_{i,j})$.
    And indeed:
    \begin{align*}
        \vartheta_{i,j}(\psiRepr(\LaTeXgamma)) \!=\! 1\!&\iff\! \forall a\in\ldomA_i,b\in\ldomA_j, \slm_{i,j}(\psiRepr(\LaTeXgamma)_a\psiRepr(\LaTeXgamma)_b)\!=\! 1\\
        \text{by def. of }\psiRepr,\!&\iff\! \forall a\in\ldomA_i,b\in\ldomA_j,\slm_{i,j}(\LaTeXgamma_{\restriction \ldomA_i}\LaTeXgamma_{\restriction \ldomA_j}) = 1\\
        &\iff\! \LaTeXgamma_{\restriction \ldomA_i\cup\ldomA_j}\in\ker(\slm_{i,j})\\
        \text{by def. of }\slm_{i,j},\!&\iff\! \LaTeXgamma_{\restriction \ldomA_i\cup\ldomA_j} \in\Aut(\slm_{i,j})
    \end{align*}
\end{proof}

To prove \cref{thm:ext_semi_local_morphisms}, it remains to show that $\vartheta_{i,j}$ and $\gslv_{i,j}$ are $\FPC$-definable. Once again, we use a representation of $\Sym(\slOm_{i,j})^{A\times A}$ as a permutation group. Recall that for each $i,j$, $\slOm_{i,j}\subseteq (\domAO)^2$.
\begin{align*}
    \iota_3 : \Sym((\domAO)^2)^{A\times A} &\to \Sym(A\times A\times ((\domAO)^2))\\
    (\sigma_{(a,b)})_{(a,b)\in A\times A}&\mapsto \left((a,b,(\mu,\nu))\mapsto (a,b,\sigma_{(a,b)}(\mu,\nu))\right)
    \end{align*}

\begin{lemma}\leavevmode
    \label{lem:vartheta_FP}
    There is a $\FPC$ formula $\varthetaFP(\mu,\nu,R,\vec s,\vec t)$ with $\type(\vec s) = \type(\vec t) = \mathrm{element}^2\mathrm{number}^2$ and $\type(R) = \mathrm{element}^2$ such that, for any $\sigma\in\phiRepr(\pi\gGa)$ and $i\ne j\le m$, 
    \[\varthetaFP(\mathfrak A,i,j,\graph(\iota(\sigma))) = \graph(\iota_3(\vartheta_{i,j}(\sigma))).\]
    There is a $\FPC$ formula $\vFP(\mu,\nu,x,y,\vec s,\vec t)$ with $\type(\vec s) = \type(\vec t) = \mathrm{element}^2\mathrm{number}^2$ such that, for any $i\ne j \le m$, 
    \[\vFP(\mathfrak A,i,j,a,b) = \graph(\iota_3(\gslv_{i,j}(a,b))).\]
\end{lemma}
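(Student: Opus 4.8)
The plan is to build both formulae by the same recipe: (i) from the input, isolate the element of $\lGa_i\lGa_j$ whose $\slm_{i,j}$-image is wanted; (ii) write down a formula for the graph of that element as a permutation of $A$; (iii) substitute that formula for the relational parameter $R_{\LaTeXgamma}$ of the formula $\slMorph$ supplied by \cref{thm:semi_local_morphisms}, exactly as $\varphi$ is substituted into $\varphi_m$ in \cref{lem:fpc_def_image_morphism}; and (iv) wrap the result into the $\iota_3$-encoding of $\Sym(\slOm_{i,j})^{A\times A}$, forcing the identity on the $\slOm_{i,j}$-block whenever the index pair does not lie in $A_i\times A_j$. Throughout, the colour-membership test ``$z\in A_i$'' for a numerical parameter $i$ is $\FPC$-definable from $\preceq$ by counting the length of a maximal $\prec$-chain up to $z$, and the $\iota_3$-wrapping (a product indexed by $A\times A$) is handled uniformly by \cref{lem:enc_prod_groups}.

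For $\varthetaFP$, recall that $\iota(\sigma)$ acts on $A\times A$ by $(c,x)\mapsto(c,\sigma_c(x))$, so from $R=\graph(\iota(\sigma))$ the component $\sigma_c\in\gGa$ is recovered by $\sigma_c(x)=x'\iff R(c,x,c,x')$. For fixed $(a,b)\in A_i\times A_j$, the product $\sigma_a\sigma_b$ — which lies in $\lGa_i\lGa_j$ because $\sigma\in\phiRepr(\pi\gGa)$ forces $\sigma_a\in\lGa_i$ and $\sigma_b\in\lGa_j$ — acts as $\sigma_a$ on $A_i$, as $\sigma_b$ on $A_j$, and trivially elsewhere; its graph is
\[\rho(z,z') := \begin{cases} R(a,z,a,z') & z\in A_i\\ R(b,z,b,z') & z\in A_j\\ z=z' & \text{otherwise.}\end{cases}\]
I then set $\varthetaFP(i,j,R,\vec s,\vec t)$, with $\vec s=(a,b,\mu_s,\nu_s)$ and $\vec t=(a',b',\mu_t,\nu_t)$, to assert $a=a'\land b=b'$ together with: if $a\in A_i\land b\in A_j$ then $\slMorph(i,j,\,\cdot\,,\mu_s\nu_s,\mu_t\nu_t)$ with $R_{\LaTeXgamma}$ replaced by $\rho$ (bound variables renamed to avoid capture); otherwise $\mu_s=\mu_t\land\nu_s=\nu_t$. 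Since $\rho$ genuinely defines an element of $\lGa_i\lGa_j$, \cref{thm:semi_local_morphisms} certifies that the substituted $\slMorph$ computes $\graph(\slm_{i,j}(\sigma_a\sigma_b))$, which is precisely the $(a,b)$-block of $\iota_3(\vartheta_{i,j}(\sigma))$.

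The formula $\vFP$ has the same shape, only the element fed to $\slMorph$ changing. With reference pair $(a,b)$ (the parameters $x,y$) and index pair $(a',b')$ (carried by $\vec s$), that element is $\chi_{a',b'}=((\mapMa^i_{a'})^{-1}\mapMa^i_a)\oplus((\mapMa^j_{b'})^{-1}\mapMa^j_b)$, which lies in $\lGa_i\lGa_j$ by \cref{lem:OAi_coset}. Its graph is written directly from \cref{lem:local_labellings}: on $A_i$, $\chi(z)=z'$ iff $\exists w\,(\mapFP(i,a,z,w)\land\mapFP(i,a',z',w))$, symmetrically on $A_j$ with $j,b,b'$, and $z=z'$ off $A_i\cup A_j$ — this is the inverse/composition bookkeeping of \cref{lem:perm_equalities}. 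Substituting this graph for $R_{\LaTeXgamma}$ in $\slMorph$ and $\iota_3$-wrapping it exactly as above yields $\vFP$.

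The one genuinely delicate point — and the place the abelian hypotheses re-enter — is step (iii): one must verify that the relation substituted for $R_{\LaTeXgamma}$ really is the graph of an element of $\lGa_i\lGa_j$, since \cref{thm:semi_local_morphisms} only guarantees the behaviour of $\slMorph$ on such inputs. This is supplied by $\sigma\in\phiRepr(\pi\gGa)$ for $\varthetaFP$ and by \cref{lem:OAi_coset} for $\vFP$. Everything else — the colour predicate, the $\iota_3$ encoding through \cref{lem:enc_prod_groups}, and the capture-avoiding substitution — is routine $\FPC$ bookkeeping, so I expect no mathematical obstacle beyond keeping the encodings straight.
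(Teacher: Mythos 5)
Your proposal is correct and follows essentially the same route as the paper: both recover the relevant element of $\lGa_i\lGa_j$ (from $R$ via the components $\sigma_a,\sigma_b$ for $\varthetaFP$, and from $\mapFP$ as the coset-difference $(\mapMa^i_{a'})^{-1}\mapMa^i_a\oplus(\mapMa^j_{b'})^{-1}\mapMa^j_b$ for $\vFP$), substitute its graph for the relational parameter of $\slMorph$, and wrap everything in the product encoding with identity off $A_i\times A_j$. The only cosmetic deviations are that you define the graph of $\sigma_a\sigma_b$ piecewise by supports where the paper writes it as an explicit composition $\exists w$, and that your explicit identity clause for index pairs outside $A_i\times A_j$ in $\varthetaFP$ is if anything slightly more careful than the paper's own formula.
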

\begin{proof}
Let us first define $\gslv_{i,j}$ as a formula $\vFP(\mu,\nu,x,y,\vec s,\vec t)$.
The variables $\mu,\nu$ track the pair of color-classes we are currently handling and $x,y$ track the component of $\gslv_{i,j}$ we are defining, i.e. we aim to obtain $\vFP(\mathfrak A,i,j,a_i,a_j) = \graph(\gslv_{i,j}(a_i,a_j))$.
Note that $\gslv_{i,j}(a_i,a_j)$ is represented as acting on $A\times A\times \slOm_{i,j}$.
Thus, the tuples of variables $\vec s$ and $\vec t$ are meant to represent individual elements of this set, and for readability purposes, we name those variables in accordance with our definition of $\gslv_{i,j}$ above: $\vec s = (a_s,b_s,\mu_s,\nu_s)$ and $\vec t = (a_t,b_t,\mu_t,\nu_t)$ ; with $(\mu_s,\nu_s)$ and $(\mu_t,\nu_t)$ representing elements of $\slOm_{i,j}$.
\[\intro *\vFP(\mu,\nu,x,y,\vec s, \vec t) :=
\begin{lorcases}
    (x\not \in A_\mu\lor y\not\in A_\nu)\land \vec s = \vec t\\
    (a_s\not\in A_\mu\lor b_s\not\in A_\nu)\land \vec s = \vec t\\
    \begin{landcases}x\in A_\mu\land y\in A_\nu\\
         a_s\in A_\mu\land b_s\in A_\nu \\
         a_s = a_t\land b_s = b_t\\
         \slMorph(\mu,\nu,R_g,\mu_s\nu_s,\mu_t\nu_t)\\
         \qquad\qquad\qquad[R_g(\alpha,\beta) / \xi(\alpha,\beta)]
    \end{landcases}
\end{lorcases}
\]
where
\[\xi(\alpha,\beta) := \begin{lorcases}
\begin{landcases}
    \alpha\in A_\mu\\
\beta\in A_\mu\\ \exists\lambda, \mapFP(\mu,x,\alpha,\lambda)\\ \mapFP(\mu,a_s,\beta,\lambda)
\end{landcases}\\
\begin{landcases}\alpha\in A_\nu\\
\beta\in A_\nu \\ \exists \lambda, \mapFP(\nu,y,\alpha,\lambda)
\\ (\mapFP(\nu,b_s,\beta,\lambda))
\end{landcases}
\end{lorcases}\]
Recall that $\mapFP$ is the formula defining the local-labellings $\mapMa^i_a$ as shown in \cref{lem:local_labellings}, that $\slMorph$ is the formula defining the morphisms $\slm_{i,j}$, as shown in \cref{lem:slmorph}, and that $F[G / H]$ denotes substitution, as explained in the proof of \cref{lem:fpc_def_image_morphism}.
Thus, for any suitable $\vec a = (i,j,a_i,a_j,a,b)$, an assignment of the free variables $(\mu,\nu,x,y,a_s,b_s)$ of $\xi$, $\xi(\mathfrak A,\vec a)$ defines the graph of $(\mapMa^i_a\mapMa^j_b)^{-1}\mapMa^i_{a_i}\mapMa^j_{a_j}\in \lGa_i\lGa_j$.
As such, the last clause in the definition of $\vFP$ correctly defines the graph (on the variables $\mu_s\nu_s,\mu_t\nu_t$) of $\slm_{i,j}((\mapMa^i_a\oplus\mapMa^j_b)^{-1}(\mapMa^i_{a_i}\oplus\mapMa^j_{a_j}))$.

We now turn to the definition of the morphisms $\vartheta_{i,j}$. We provide a formula $\varthetaFP(\mu,\nu,R,\vec s, \vec t)$, where once again, $(\mu,\nu)$ tracks the pair of color-classes at hand, and now, $R$ should be the graph of an element $g\in \iota(\mathcalG)$.

As in the definition of $\vFP$, $\vec s$ and $\vec t$ represent the domain $A\times A\times \slOm_{i,j}$ of the permutational image of the morphism $\vartheta_{i,j}$, and we name each individual variable $a_s,b_s,\mu_s,\nu_s$ (resp. $a_t,b_t,\mu_t,\nu_t$) to improve readability.
\[\intro*\varthetaFP(\mu,\nu,R,\vec s,\vec t) := \begin{landcases}
    a_s = a_t\\
    b_s = b_t \\
    \slMorph(\mu,\nu,R_g,\mu_s\nu_s,\mu_t\nu_t)\\
    \qquad\qquad\qquad[R_g(x,y) / \theta(x,y)]
\end{landcases}
\]
where 
\[\theta(x,y) := \exists w, R(a_s,x,a_s,w)\land R(b_s,w,b_s,y)\]
One should be careful not to confuse $R_g$, which is the second-order variable in $\slMorph$ that should take as value the graph of a permutation in $\lGa_i\lGa_j$, and $R$, which is used in $\vartheta$ as a place-holder for the graph of a permutation in $\mathcalG$, so that $\vartheta(\mu,\nu,R,\vec s,\vec t)$ defines a morphism as explained in \cref{def:definable_morphism}.
\end{proof}
\end{document}